\documentclass[a4paper,12pt]{article}
\usepackage{amsmath,amsthm,amssymb,bbm}

\textheight=8.5truein
\textwidth=6.5truein
\overfullrule=0pt
\parskip=2pt
\parindent=12pt
\headheight=0.5in
\headsep=0in
\topmargin=0in
\evensidemargin=0in
\oddsidemargin=0in

\usepackage[english]{babel}
\newtheorem{theorem}{Theorem}
\newtheorem{corollary}{Corollary}
\newtheorem{lemma}{Lemma}
\newtheorem{definition}{Definition}
\newtheorem{proposition}{Proposition}

%%%%%%%%%%%%%%%%%%%%%%%%%%%%%%%%%%%%%%%%%%%%%%%%%%%%%%%

%%%%%%%%%%%%%%%%%%%%%%%%%%%%%%%%%%%%%%%%%%%%%%%%%%%%%%%

\title{\bf Entanglement in Algebraic Quantum Mechanics:
Majorana fermion systems}

\author{F. Benatti$^{a,b}$, 
R. Floreanini$^{b}$\\
\\
\small ${}^a$Dipartimento di Fisica, Universit\`a di Trieste, 
34151 Trieste, Italy\\
\small ${}^b$Istituto Nazionale di Fisica Nucleare, Sezione di Trieste,
34151 Trieste, Italy
}

\date{\null}

\begin{document}

\maketitle

\begin{abstract}
\noindent
Many-body entanglement is studied within the algebraic approach to quantum physics
in systems made of Majorana fermions. In this framework, the notion of separability
stems from partitions of the algebra of observables and properties of
the associated correlation functions, rather than on particle tensor products.
This allows obtaining a complete characterization of non-separable Majorana fermion states.
These results may find direct applications in quantum metrology:
using Majorana systems, sub-shot noise accuracy in parameter estimations can be
achieved without preliminary, resource consuming, state entanglement operations.

\end{abstract}

\vskip 1cm

\section{Introduction}

Majorana fermions describe real fermion excitations that can be thought of as half
of normal fermions, in the sense that a complex fermion mode can be obtained by
putting together two real ones. Although originally 
introduced in particle physics \cite{Majorana,Elliott},
as fermions that are their own antifermions, they have found applications in various
branches of physics.%
\footnote{For instance, see the reviews \cite{Elliott}-\cite{Hassler} and references therein.}

Most notably, Majorana fermions appear as quasi-particle excitations in the
so-called topological superconductors \cite{Elliott}-\cite{Stanescu}. 
They turn out to be spatially separated
and this delocalized character protects them from decoherence effects generated
by any local interaction; furthermore, they exhibit non-Abelian statistics.
These two characteristic properties make Majorana excitations in superconductors very
attractive as building blocks for topological quantum computation, where logical qubits
are encoded in states of non-Abelian anyons and logical operations are performed
through their braiding transformations \cite{Nayak}-\cite{Kitaev2}.

Typical resources needed in quantum computation and communication are non classical
correlations and entanglement. While in the case of distinguishable particles
the notion of separability and entanglement is well understood, in systems made
of identical constituents, like Majorana fermions, these notions are still unsettled.%
\footnote{The literature on entanglement
in many-body systems is vast, {\it e.g.} see
\cite{Schliemann}-\cite{Modi}; however, 
only a limited part of the reported results are applicable 
to systems composed by identical constituents.}
The key observation is that the tensor product structure of the multiparticle Hilbert space
is in general no longer available when the particles are indistinguishable. 
As a consequence,
the standard definition of entanglement based on this structure loses its meaning when dealing
with bosons and fermions, or more in general anyons. In such cases, the emphasis should
shift from the system states to the algebra of its observables \cite{Werner1}-\cite{Moriya1}, 
treating the system Hilbert space as an emergent concept.

This change in perspective is dictated by physical considerations: since particles
are identical, they can not be singly addressed nor they individual properties
measured, only collective, global observables being 
in fact experimentally accessible \cite{Feynman,Sakurai}.
In other terms, when dealing with many-body systems, the presence of entanglement
should be identified through the properties of the observable correlation functions
and not by a priori properties of the system states \cite{Zanardi}-\cite{Benatti7}.

This more general approach to separability and entanglement finds its more natural formulation
in the so-called {\it algebraic approach} to quantum physics \cite{Bratteli}-\cite{Strocchi4}.
In this more general framework,
a quantum system is defined through the algebra $\cal A$ containing all its observables.
A state $\Omega$ for the system is just a positive linear map from $\cal A$ 
to the complex numbers, so that, for any observable $\alpha \in {\cal A}$,
{\it i.e.} a hermitian element of $\cal A$,
the real number $\Omega(\alpha)$ gives its experimentally accessible mean value.
From the couple $({\cal A}, \Omega)$, one then deduces through a standard procedure
(the so-called Gelfand-Naimark-Segal construction) the Hilbert space ${\cal H}_\Omega$
containing the system states. In this very general framework, many-body entanglement
can be naturally identified by the presence of nonclassical correlations
among suitable observables.

In the following, we shall apply this general approach to the study of the notions
of separability and entanglement in systems made of Majorana fermions. In this case
the algebra $\cal A$ containing the observables of the systems turns out to be
a Clifford algebra \cite{Gilbert}-\cite{Meinrenken}, 
whose representation theory results quite different from 
the usual Fock representation of the algebra of creation and annihilation operators 
of standard fermions. This poses new questions concerning the relation between 
entanglement and the reducibility of the algebra representation, 
making the theory of Majorana fermion entanglement much richer than 
that of standard fermions or bosons.

These results may have direct applications in quantum technology,
especially in using quantum interferometric devices \cite{Leggett1}-\cite{Yukalov}
to perform metrological tasks. Indeed, as in the case of 
boson and fermion systems \cite{Benatti1,Argentieri,Benatti6},
also for Majorana systems, parameter estimation accuracies going beyond the classical shot-noise
limit can be obtained by exploiting some sort of quantum non-locality embedded in the
measuring apparatus, without the need of preliminary state operations. 
In this respect, Majorana fermions
might turn out to be very useful not only in performing decoherence protected
quantum computational tasks, but also in the development of the next generation
of highly sensitive quantum sensors.

\section{Algebraic approach to Quantum Mechanics}

For completeness, in this Section we shall briefly summarize 
the main features of the algebraic formulation
to quantum physics, underlying the concepts and tools that will be needed
in the following discussions.%
\footnote{For a more detailed discussion, see the reference textbooks 
\cite{Bratteli}-\cite{Strocchi2}.}

Any quantum system can be characterized by the collections of observations
that can be made on it through suitable measurement processes \cite{Strocchi3}. The physical quantities
that are thus accessed are the observables of the system, forming an algebra $\cal A$
under multiplication and linear combinations, the algebra of observables.

\medskip
\noindent
$\bullet$ {\bf $C^\star$-algebras}\hfill\break
In general, $\cal A$ turns out to be a non-commutative $C^\star$-algebra; this means that it is a linear,
associative algebra (with unity) over the field of complex numbers $\mathbb{C}$,
{\it i.e.} a vector space over $\mathbb{C}$, with an associative product, linear
in both factors. Further, $\cal A$ is endowed with an operation of conjugation:
it posses an antilinear involution $\star: {\cal A}\to {\cal A}$, such that
$(\alpha^\star)^\star=\alpha$, for any element $\alpha$ of $\cal A$.
In addition, a norm $|| \cdot ||$ is defined on $\cal A$, satisfying
$||\alpha\beta||\leq ||\alpha||\, ||\beta||$, for any $\alpha,\, \beta \in {\cal A}$
(thus implying that the product operation is continuous), and such that
$||\alpha^\star \alpha||=||\alpha||^2$, so that $||\alpha^\star||=||\alpha||$;
moreover, $\cal A$ is closed under this norm, meaning that $\cal A$ is a complete
space with respect to the topology induced by the norm (a property that in turn makes $\cal A$
a Banach algebra).

In the case of an $n$-level system, $\cal A$ can be identified with the $C^\star$-algebra
${\cal M}_n(\mathbb{C})$ of complex $n\times n$ matrices; the $\star$-operation
coincides now with the hermitian conjugation, $M^\star=M^\dagger$, 
for any element $M\in {\cal M}_n(\mathbb{C})$,
while the norm $||M||$ is given by the largest eigenvalue of $M^\dagger M$.
Nevertheless, the description of a physical system through its $C^\star$-algebra of observables
is particularly appropriate in presence of an infinite number of degrees of freedom,
where the canonical formalism results problematic. Indeed, the algebra ${\cal B}(\cal{H})$
of all bounded operators on an infinite-dimensional Hilbert space $\cal H$ is another
canonical example of a $C^\star$-algebra, when equipped with the usual operator norm
and adjoint operation. Actually, any $C^\star$-algebra is isomorphic to a
norm-closed self-adjoint subalgebra of ${\cal B}(\cal{H})$, for some Hilbert space
$\cal H$ (Gelfand-Naimark theorem).
It is worth adding that the non-commutativity of the algebra of observables $\cal A$ can
be taken as the distinctive property characterizing quantum systems.

\medskip
\noindent
$\bullet$ {\bf States on $C^\star$-algebras}\hfill\break
Although the system observables, {\it i.e.} the hermitian elements of $\cal A$, can be
identified with the physical quantities measured in experiments, the explicit link between
the algebra $\cal A$ and the outcome of the measurements is given by the concept of a state
$\Omega$, through which the expectation value $\Omega(\alpha)$ of the observable $\alpha \in {\cal A}$
can be defined.

In general, a state $\Omega$ on a $C^\star$-algebra $\cal A$ is a linear map
$\Omega: {\cal A} \to \mathbb{C}$, with the property of being positive,
{\it i.e.} $\Omega(\alpha^\star \alpha)\geq 0$, $\forall\alpha \in {\cal A}$,
and normalized, $\Omega(1_{\cal A})=1$, $1_{\cal A}$ being the unit of $\cal A$.
It immediately follows that the map $\Omega$ is also continuous:
$|\Omega(\alpha)|\leq ||\alpha||$, for all $\alpha \in {\cal A}$.

This general definition of state of a quantum system comprises the standard one
in terms of normalized density matrices on a Hilbert space $\cal H$;
indeed, any density matrix $\rho$ defines a state $\Omega_\rho$ on
${\cal B}({\cal H})$ through the relation
\begin{equation}
\Omega_\rho(\alpha)={\rm Tr}[\rho\,\alpha]\ ,\qquad \forall\alpha\in {\cal B}({\cal H})\ ,
\label{2.1}
\end{equation}
which for pure states, $\rho=|\psi\rangle\langle\psi|$, reduces to the
standard expectation: $\Omega_\rho(\alpha)=\langle \psi |\alpha|\psi\rangle$.
Nevertheless, the definition in terms of $\Omega$ is more general, holding
even for systems with infinitely many degrees of freedom, for which
the usual approach in terms of state vectors may be unavailable.

As for density matrices on a Hilbert space $\cal H$, a state $\Omega$ on a $C^\star$-algebra $\cal A$
is said to be pure if it can not be decomposed as a convex sum of two
states, {\it i.e.} if the decomposition $\Omega=\lambda\,\Omega_1+(1-\lambda)\, \Omega_2$,
with $0\leq\lambda\leq 1$, holds only for $\Omega_1=\Omega_2=\Omega$. If a state $\Omega$
is not pure, it is called mixed. It is worth noticing that, for consistency, the assumed
completeness of the relation between observables and measurements on a physical system
requires that the observables separate the states, {\it i.e.} $\Omega_1(\alpha)=\Omega_2(\alpha)$
for all $\alpha \in {\cal A}$ implies $\Omega_1=\Omega_2$, and similarly that the states separate
the observables, {\it i.e.} $\Omega(\alpha)=\Omega(\beta)$ for all states $\Omega$ on $\cal A$
implies $\alpha=\beta$.

\medskip
\noindent
$\bullet$ {\bf GNS-Construction}\hfill\break
Although the above description of a quantum system through its $C^\star$-algebra of observables
(its measurable properties) and states over it (giving the observable expectations)
looks rather abstract, it actually allows an Hilbert space interpretation, through the so-called
{\it Gelfang-Naimark-Segal(GNS)-construction}.

\begin{theorem}
Any state $\Omega$ on the $C^\star$-algebra $\cal A$
uniquely determines (up to isometries) a representation $\pi_\Omega$ of the elements of $\cal A$
as operators in a Hilbert space ${\cal H}_\Omega$, containing a reference vector
$|\Omega\rangle$, whose matrix elements reproduce the observable expectations:
\begin{equation}
\Omega(\alpha)=\langle\Omega| \pi_\Omega(\alpha) |\Omega\rangle\ ,\qquad \alpha\in{\cal A}\ .
\label{2.2}
\end{equation}
\end{theorem}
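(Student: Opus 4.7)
The plan is to carry out the standard GNS construction, which builds ${\cal H}_\Omega$, $\pi_\Omega$ and $|\Omega\rangle$ directly out of the algebraic data $({\cal A},\Omega)$, with the unit $1_{\cal A}$ playing the role of the reference vector.

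\textbf{Step 1 (pre-Hilbert structure on $\cal A$).} I would first use $\Omega$ to promote $\cal A$ itself to a pre-Hilbert space by setting $\langle\alpha,\beta\rangle_\Omega:=\Omega(\alpha^\star\beta)$. Sesquilinearity follows from linearity of $\Omega$ and antilinearity of $\star$, and $\langle\alpha,\alpha\rangle_\Omega\geq0$ from the positivity of $\Omega$. Applying $\Omega$ to $(\alpha+\lambda\beta)^\star(\alpha+\lambda\beta)\geq0$ and optimising over $\lambda\in\mathbb{C}$ yields the Cauchy--Schwarz inequality $|\Omega(\alpha^\star\beta)|^{2}\leq\Omega(\alpha^\star\alpha)\,\Omega(\beta^\star\beta)$.

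\textbf{Step 2 (quotient by the null left ideal).} Define ${\cal N}_\Omega:=\{\alpha\in{\cal A}:\Omega(\alpha^\star\alpha)=0\}$. Cauchy--Schwarz shows ${\cal N}_\Omega$ is a linear subspace, and applying Cauchy--Schwarz to $\Omega((\gamma\alpha)^\star\gamma\alpha)=\Omega(\alpha^\star\gamma^\star\gamma\alpha)$ shows it is a \emph{left} ideal. The quotient ${\cal A}/{\cal N}_\Omega$ then inherits a genuine (positive definite) inner product via $\langle[\alpha],[\beta]\rangle:=\Omega(\alpha^\star\beta)$; its norm completion is the Hilbert space ${\cal H}_\Omega$.

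\textbf{Step 3 (representation and reference vector).} Define $\pi_\Omega(\alpha)[\beta]:=[\alpha\beta]$; this is well defined precisely because ${\cal N}_\Omega$ is a left ideal, and it is manifestly linear and multiplicative in $\alpha$. Setting $|\Omega\rangle:=[1_{\cal A}]$ gives $\langle\Omega|\pi_\Omega(\alpha)|\Omega\rangle=\Omega(\alpha)$, which is (\ref{2.2}), and the orbit $\{\pi_\Omega(\alpha)|\Omega\rangle:\alpha\in{\cal A}\}$ is dense in ${\cal H}_\Omega$ by construction, so $|\Omega\rangle$ is cyclic. Uniqueness up to isometry is then routine: given any second triple $(\pi',{\cal H}',|v'\rangle)$ satisfying (\ref{2.2}) with $|v'\rangle$ cyclic, the map $U:\pi_\Omega(\alpha)|\Omega\rangle\mapsto\pi'(\alpha)|v'\rangle$ preserves inner products, since both sides evaluate to $\Omega(\alpha^\star\beta)$, so it extends by continuity to a unitary intertwiner $U\pi_\Omega(\alpha)=\pi'(\alpha)U$.

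\textbf{Main obstacle.} The delicate point is to show that $\pi_\Omega(\alpha)$, initially defined only on the dense subspace ${\cal A}/{\cal N}_\Omega$, extends to a bounded operator on all of ${\cal H}_\Omega$; equivalently, one must prove the operator inequality $\Omega(\beta^\star\alpha^\star\alpha\beta)\leq\|\alpha\|^{2}\,\Omega(\beta^\star\beta)$. Algebraically this amounts to establishing that $\|\alpha\|^{2}\,1_{\cal A}-\alpha^\star\alpha$ is a \emph{positive} element of $\cal A$, so that conjugating by $\beta$ and applying the positive map $\Omega$ produces a nonnegative number. This is exactly the step in which the $C^\star$-identity $\|\alpha^\star\alpha\|=\|\alpha\|^{2}$ and the existence of square roots of positive elements in a $C^\star$-algebra are essential: without the analytic content encoded in the norm, the quotient would remain a purely algebraic object and left multiplication would not extend to operators on the completed Hilbert space.
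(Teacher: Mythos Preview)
Your proposal is correct and follows essentially the same GNS construction as the paper, which also builds the inner product $\langle\psi_\alpha|\psi_\beta\rangle=\Omega(\alpha^\star\beta)$ on $\cal A$, quotients by the null space ${\cal N}_\Omega$ when $\Omega$ is not faithful, completes, and defines $\pi_\Omega(\alpha)|\psi_\beta\rangle=|\psi_{\alpha\beta}\rangle$ with $|\Omega\rangle=|\psi_{1_{\cal A}}\rangle$. Your treatment is in fact more thorough than the paper's brief sketch: you explicitly verify that ${\cal N}_\Omega$ is a left ideal, supply the uniqueness-by-intertwiner argument, and address the boundedness of $\pi_\Omega(\alpha)$ via the positivity of $\|\alpha\|^2\,1_{\cal A}-\alpha^\star\alpha$, points the paper leaves implicit.
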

\begin{proof}
The algebra $\cal A$ can be viewed as a vector space by associating to each element $\alpha\in{\cal A}$
a vector $|\psi_\alpha\rangle$, and (assuming the state $\Omega$ to be faithful,
{\it i.e.} $\Omega(\alpha^\star\alpha)>0$ for all non vanishing $\alpha$) by introducing the positive
definite inner product $\langle\psi_\alpha|\psi_\beta\rangle=\Omega(\alpha^\star\beta)$.
The completion of $\cal A$ in the corresponding norm gives an Hilbert space ${\cal H}_\Omega$.
The representation $\pi_\Omega: {\cal A} \to {\cal B}({\cal H}_\Omega)$ 
of $\cal A$ on ${\cal H}_\Omega$ can then be defined by: 
$\pi_\Omega(\alpha) |\psi_\beta\rangle = |\psi_{\alpha\beta}\rangle$;
indeed it satisfies: $\pi_\Omega(\alpha)\, \pi_\Omega(\beta)=\pi_\Omega(\alpha\beta)$
and $\pi_\Omega(\alpha^\star)=\big[\pi_\Omega(\alpha)]^\dagger$.
The element $|\psi_{1_{\cal A}}\rangle\equiv|\Omega\rangle$ of ${\cal H}_\Omega$
is cyclic with respect to $\pi_\Omega$, as any element $|\psi_\alpha\rangle$ in
${\cal H}_\Omega$ can be written as $|\psi_\alpha\rangle=\pi_\Omega(\alpha)|\Omega\rangle$,
or in more precise terms, $\pi_\Omega({\cal A})|\Omega\rangle$ is dense in ${\cal H}_\Omega$.
If the state $\Omega$ is not faithful, the same construction holds by identifying
${\cal H}_\Omega$ with the completion of ${\cal A}/{\cal N}_\Omega$, where ${\cal N}_\Omega$
is the kernel of the form $\langle\cdot |\cdot\rangle$ defined above.
\end{proof}

This result makes apparent that the notion of Hilbert space associated to a quantum system
is not a primary concept, but an emergent tool, a consequence of the $C^\star$-algebra structure
of the system observables.
Further, the whole construction sketched above is unique up to unitary transformations.
Indeed, if $\pi'_\Omega$ is another representation of $\cal A$ on a Hilbert space
${\cal H}'_\Omega$ with cyclic vector $|\Omega'\rangle$ such that
$\Omega(\alpha)=\langle\Omega'| \pi'_\Omega(\alpha) |\Omega'\rangle$ for all $\alpha\in{\cal A}$,
then $\pi_\Omega$ and $\pi'_\Omega$ are unitarily equivalent, {\it i.e.} there exists an
isometry $U:{\cal H}_\Omega \to {\cal H}'_\Omega$ such that
$U\, \pi_\Omega U^{-1}=\pi'_\Omega$.

\medskip
\noindent
$\bullet$ {\bf Reducibility and phases}\hfill\break
A representation $\pi$ of the algebra $\cal A$ on an Hilbert space ${\cal H}$ is irreducible if ${\cal H}$
and the null space are the only closed subspaces invariant under the action of $\pi({\cal A})$.
One can prove that the GNS-representation $\pi_\Omega$ is irreducible if and only if the state $\Omega$ is pure.
When the representation $\pi_\Omega$ is not irreducible, it can be decomposed in general into
the direct sum of irreducible representations $\pi^{(r)}_\Omega$:
\begin{equation}
\pi_\Omega=\oplus_r\ \pi^{(r)}_\Omega\ ,
\label{2.3}
\end{equation}
and similarly, also the Hilbert space ${\cal H}_\Omega$ decomposes into the direct sum of
invariant subspaces ${\cal H}^{(r)}_\Omega$ carrying the irreducible representation $\pi^{(r)}_\Omega$:
\begin{equation}
{\cal H}_\Omega=\oplus_r\ {\cal H}^{(r)}_\Omega\ .
\label{2.4}
\end{equation}
As we shall see in the following, irreducibility is an important issue entering the classification
of entangled states.

Any vector $|\psi\rangle\in {\cal H}_\Omega$ defines a new GNS-representation via the state
$\Omega_\psi$ defined by: $\Omega_\psi(\alpha)=\langle\psi| \pi_\Omega(\alpha) |\psi\rangle$,
for all $\alpha\in{\cal A}$. It turns out that the new state $\Omega_\psi$ give rise to
a GNS-representation unitarily equivalent to the one constructed over $\Omega$; in other terms,
${\cal H}_\Omega$ and ${\cal H}_{\Omega_\psi}$ can be identified, as the two representations
$\pi_\Omega$ and $\pi_{\Omega_\psi}$. Similarly, also a density matrix $\rho$ on
${\cal H}_\Omega$ defines a state $\Omega_\rho$ on $\cal A$ trough the identification
$\Omega_\rho(\alpha)={\rm Tr}[\rho\, \pi_\Omega(\alpha)]$, while the corresponding GNS-representation
$\pi_{\Omega_\rho}$ can be expressed in terms of representations equivalent to the representation
$\pi_\Omega$. The set of all states of the form $\Omega_\rho$ forms the so-called {\it folium}
of the representation $\pi_\Omega$; it contains all the states accessible by the
operators $\pi_\Omega(\alpha)$, $\alpha\in {\cal A}$, and constitute 
a {\sl quantum phase} of the physical systems. Systems with infinitely many degrees of freedom,
as in many-body physics, exhibit in general more than one inequivalent phase, {\it i.e.}
they admit more than one inequivalent representation of the associated observable algebra $\cal A$.

\section{Entanglement in Algebraic Quantum Mechanics}

As seen in the previous Section, given any quantum system, once a state $\Omega$ 
on its algebra of observables $\cal A$ is chosen, {\it i.e.} a set of expectation values for the elements 
of $\cal A$ are fixed, one can always construct the associated Hilbert space ${\cal H}_\Omega$ 
and use it for its description. This space contains a reference vector 
$|\Omega\rangle$ through which one can generate the whole ${\cal H}_\Omega$ 
by applying to it elements of $\cal A$.
In more physical terms, all states of the system can be obtained from $|\Omega\rangle$ by the action
of all possible physically acceptable operations.

This algebraic approach to quantum physics turns out to be the most suitable for discussing issues
related to the notions of quantum non-locality and entanglement in very general terms:
it does not make explicit reference to the specific structure of the system under study,
that can indeed be formed even by an infinite number of elementary constituents
and thus possibly exhibiting more than one physical phase.
Although the definition of separability and entanglement within this approach
to quantum theory has been introduced long ago \hbox{\cite{Werner1}-\cite{Summers}}, only recently it has been
applied to characterize non-classical correlations in systems involving identical particles,
both in the case of bosons and fermions \cite{Benatti1}-\cite{Benatti7}.

Given a quantum system, its observable algebra $\cal A$ and a state $\Omega$ on it, one immediately
faces a problem with the standard, textbook definition of separability: 
the associated GNS Hilbert space ${\cal H}_\Omega$ does not result in general a tensor
product of single-particle Hilbert spaces, and therefore the usual notion of
entanglement, based on this structure, is inapplicable.

In line with the characterization of a physical system through its algebra of observables $\cal A$,
one should instead focus the attention on this algebra 
rather than on the Hilbert space ${\cal H}_\Omega$; in this way, the presence of entanglement 
can be identified by the existence of non-classical correlations among mean values
of system observables, belonging to different subalgebras of $\cal A$.
As a preliminary step, it is then necessary to introduce the notion of partition
of the operator algebra $\cal A$.
In the following, we shall consider operator algebras constructed by means of elementary mode operators,
{\it e.g.} annihilation and creation operators, generating an algebra $\cal A$ either of boson or fermion
character; these algebras can be infinite dimensional.
Within this general framework, we then introduce the following basic definition:

\begin{definition}
\label{def1}
An {\bf algebraic bipartition} of the operator algebra ${\cal A}$ is any pair
$({\cal A}_1, {\cal A}_2)$ of subalgebras of ${\cal A}$, namely
${\cal A}_1, {\cal A}_2\subset {\cal A}$, such that 
${\cal A}_1 \cap {\cal A}_2={\bf 1}_{\cal A}$, 
that is they can share only scalar multiples of the identity. Further, in the boson case
the two subalgebras are assumed to commute, 
$\big[A_1\,,\,A_2\big]=0$ for all $A_{i}\in{\cal A}_i$, $i=1,2$,
while in the fermion case, only the even part ${\cal A}^{e}_1$ of ${\cal A}_1$ is required to commute
with the whole ${\cal A}_2$, or ${\cal A}^e_2$ with ${\cal A}_1$; in general,
the even part ${\cal A}^{e}$ of a fermion algebra ${\cal A}$ 
is defined as the norm closure of the algebra of polynomials constructed with even powers
of elementary mode operators.
\end{definition}

\noindent
{\bf Remark:} This definition differs from the one given in \cite{Werner1}-\cite{Moriya1},
in which elements belonging to different bipartions are required to commute, both for bosons
and fermions; as alredy motivated in \cite{Benatti5}-\cite{Benatti7} and further
discussed in the next Section, the previous, more general definition allows
for a more physically complete treatment of fermion entanglement.\hfill$\Box$

\smallskip

\noindent
In general the linear span of products of elements of the two subalgebras 
${\cal A}_1$ and ${\cal A}_2$ need not reproduce the whole
algebra ${\cal A}$, {\it i.e.} 
\hbox{${\cal A}_1 \cup {\cal A}_2\subset {\cal A}$}.
However, in the cases of partitions defined in terms of modes, as discussed below,
one has ${\cal A}_1 \cup {\cal A}_2={\cal A}$, a condition that will be hereafter always assumed.

Any algebraic bipartition encodes in a natural way the definition of the system local observables:
 
\medskip
\noindent
\begin{definition}
An element of ${\cal A}$ is said to be local with respect to
a given bipartition $({\cal A}_1, {\cal A}_2)$, or simply 
$({\cal A}_1, {\cal A}_2)$-{\bf local}, if it is the product $\alpha_1 \alpha_2$ of an element 
$\alpha_1$ of ${\cal A}_1$ and an element $\alpha_2$ in ${\cal A}_2$.
\end{definition}

From this notion of operator locality, a natural definition of state separability and entanglement
follows \cite{Benatti1, Benatti6}:%
\footnote{As already observed, the algebras $\cal A$ need not be finitely generated, so that
the sums appearing below could in principle contain an infinite number of terms;
in such a case, we shall assume their convergence in a proper topology.}
\medskip
\noindent
\begin{definition}
A state $\Omega$ on the algebra ${\cal A}$ will be called {\bf separable} with
respect to the bipartition $({\cal A}_1, {\cal A}_2)$ if the expectation $\Omega(\alpha_1 \alpha_2)$ 
of any local operator $\alpha_1 \alpha_2$ can be decomposed into a linear convex combination of
products of expectations:
\begin{equation}
\Omega(\alpha_1 \alpha_2)=\sum_k\lambda_k\, \Omega_k^{(1)}(\alpha_1)\, \Omega_k^{(2)}(\alpha_2)\ ,\qquad
\lambda_k\geq0\ ,\quad \sum_k\lambda_k=1\ ,
\label{3.1}
\end{equation}
where $\Omega_k^{(1)}$ and $\Omega_k^{(2)}$ are given states on ${\cal A}$;
otherwise the state $\Omega$ is said to be {\bf entangled} with respect the bipartition
$({\cal A}_1, {\cal A}_2)$.
\end{definition}

\noindent
{\bf Remarks:}\hfill\break
{\sl i)} This generalized definition of separability 
can be easily extended to the case
of more than two partitions; 
for instance, in the case of an $n$-partition, Eq.(\ref{3.1})
would extend to:
\begin{equation}
\Omega(\alpha_1 \alpha_2\cdots \alpha_n)=\sum_k\lambda_k\, \Omega_k^{(1)}(\alpha_1)\, 
\Omega_k^{(2)}(\alpha_2)\cdots\Omega_k^{(n)}(\alpha_n)\,\ ,\quad
\lambda_k\geq0\ ,\quad \sum_k\lambda_k=1\ .
\label{3.2}
\end{equation}
\noindent
{\sl ii)} When dealing with systems of $N$ distinguishable constituents, the algebra $\cal A$ usually acts on a Hilbert space ${\cal H}$; if  the state $\Omega$ on $\cal A$ is {\it normal}, {\it i.e.} it can be represented by a
density matrix $\rho_\Omega$, so that
$\Omega(\alpha)={\rm Tr}[\rho_\Omega\, \alpha]$, for any $\alpha\in {\cal A}$,
{\sl Definition 3} gives the standard notion of separability, namely $\rho_\Omega$
can be expressed as a convex combination of product states:
\begin{equation}
\rho_\Omega=\sum_k p_k\, \rho_k^{(1)}\otimes\rho_k^{(2)}\otimes\ldots\otimes\rho_k^{(N)}\ ,
\qquad p_k\geq 0\ ,\quad \sum_k p_k=1\ ,
\label{3.3}
\end{equation}
$\rho^{(i)}$ representing a state for the $i$-th constituent.
Indeed, in this case, the partition of
the system into its elementary constituents induces a natural tensor
product decomposition both of the Hilbert space ${\cal H}$ and of the
algebra $\cal A$ of its operators; a direct application of the condition (\ref{3.2}) to
this natural tensor product multipartition immediately yields the decomposition (\ref{3.3}).

\noindent
{\sl iii)} In systems of identical particles
there is no {\it a priori} given, natural bipartition to be used for the definition
of separability; therefore,
issues about entanglement and non-locality 
are meaningful {\it only} with reference to a choice of a specific partition in the 
associated operator algebra
\cite{Zanardi}-\cite{Benatti7}; 
this general observation, often overlooked, is at the origin of
much confusion in the recent literature.\hfill$\Box$

\smallskip

\noindent
When the state $\Omega$ is pure, the separability condition (\ref{3.1}) simplifies,
and the following result holds:

\begin{lemma}
Pure states $\Omega$ on the operator algebra $\cal A$ are separable with respect to a given
bipartition $({\cal A}_1,{\cal A}_2)$ if and only if
\begin{equation}
\Omega(\alpha_1 \alpha_2)=\Omega(\alpha_1)\, \Omega(\alpha_2)\ ,
\label{3.4}
\end{equation}
for all local operators $\alpha_1 \alpha_2$.
\end{lemma}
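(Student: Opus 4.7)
The proof plan has two directions; the hard work lies in the forward direction.

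The ($\Leftarrow$) direction is immediate: if the factorization \eqref{3.4} holds, then (\ref{3.1}) is satisfied trivially by the single-term convex combination with $\lambda_1=1$ and $\Omega^{(1)}_1=\Omega^{(2)}_1=\Omega$, so $\Omega$ is separable.

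For the ($\Rightarrow$) direction, assume $\Omega$ is pure and admits a decomposition (\ref{3.1}) with states $\Omega^{(1)}_k,\Omega^{(2)}_k$ on $\mathcal{A}$ and weights $\lambda_k>0$ summing to one. The plan is to promote (\ref{3.1}) to an equality of states on all of $\mathcal{A}$ and then exploit purity. To this end, define linear functionals $\omega_k$ on $\mathcal{A}$ by setting $\omega_k(\alpha_1\alpha_2):=\Omega^{(1)}_k(\alpha_1)\,\Omega^{(2)}_k(\alpha_2)$ on local operators and extending by linearity; this extension exists and is unique because the standing hypothesis $\mathcal{A}_1\cup\mathcal{A}_2=\mathcal{A}$ (stated just after Definition 1) says that the linear span of local products is all of $\mathcal{A}$. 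Each $\omega_k$ is then a bona fide state on $\mathcal{A}$: normalization follows from $\omega_k(1_{\mathcal{A}})=\Omega^{(1)}_k(1)\Omega^{(2)}_k(1)=1$, and positivity follows from the commutation requirements in Definition 1, since $[\alpha_1,\alpha_2]=0$ (or the analogous even-algebra statement in the fermion case) lets one write $(\alpha_1\alpha_2)^\star(\alpha_1\alpha_2)$ as a product $\alpha_1^\star\alpha_1\cdot\alpha_2^\star\alpha_2$ of positive elements whose expectation in a product state is non-negative.

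With the $\omega_k$'s now states on $\mathcal{A}$, the separability identity (\ref{3.1}) becomes the equality of states
\begin{equation*}
\Omega=\sum_k \lambda_k\,\omega_k,
\end{equation*}
a convex combination. Because $\Omega$ is pure, the only admissible decompositions force $\omega_k=\Omega$ for every $k$ with $\lambda_k>0$. Hence for any such $k$ and all $\alpha_i\in\mathcal{A}_i$,
\begin{equation*}
\Omega(\alpha_1\alpha_2)=\Omega^{(1)}_k(\alpha_1)\,\Omega^{(2)}_k(\alpha_2).
\end{equation*}
Specializing $\alpha_2=1_{\mathcal{A}}$ gives $\Omega(\alpha_1)=\Omega^{(1)}_k(\alpha_1)$ and similarly $\Omega(\alpha_2)=\Omega^{(2)}_k(\alpha_2)$, yielding (\ref{3.4}).

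The main technical obstacle is the positivity check for the product functionals $\omega_k$: in the boson case full commutativity between $\mathcal{A}_1$ and $\mathcal{A}_2$ makes this essentially classical, but in the fermion case one must verify that the even-part commutation in Definition 1 is enough to guarantee positivity on general (not necessarily local) positive elements of $\mathcal{A}$. Once this is secured, the rest of the argument is a direct application of the defining property of pure states as extreme points of the convex state space.
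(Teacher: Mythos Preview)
Your overall strategy matches the paper's: extend the convex decomposition~(\ref{3.1}) from local operators to all of $\mathcal{A}$, verify that each $\omega_k$ is a state, invoke purity of $\Omega$ to collapse the sum, and then specialize $\alpha_1$ or $\alpha_2$ to the identity. The gap is in your positivity step. You argue that $(\alpha_1\alpha_2)^\star(\alpha_1\alpha_2)=\alpha_1^\star\alpha_1\cdot\alpha_2^\star\alpha_2$ and that a product state is non-negative on such an element. But this only checks $\omega_k(\beta^\star\beta)\ge 0$ when $\beta$ is a \emph{single} local product. A generic $\beta\in\mathcal{A}$ is a sum $\beta=\sum_i\alpha_i^{(1)}\alpha_i^{(2)}$, and then
\[
\omega_k(\beta\beta^\star)=\sum_{i,j}\Omega_k^{(1)}\big(\alpha_i^{(1)}(\alpha_j^{(1)})^\star\big)\,\Omega_k^{(2)}\big(\alpha_i^{(2)}(\alpha_j^{(2)})^\star\big),
\]
where the off-diagonal terms $i\neq j$ are not products of positive numbers and can be negative individually. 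Saying the boson case is ``essentially classical'' does not close this; one still needs an argument controlling the full sum.

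The paper supplies exactly this missing piece: it recognises the double sum as ${\rm Tr}\big[M_k^{(1)}(M_k^{(2)})^T\big]$ with $[M_k^{(\ell)}]_{ij}=\Omega_k^{(\ell)}\big(\alpha_i^{(\ell)}(\alpha_j^{(\ell)})^\star\big)$, notes that each $M_k^{(\ell)}$ is a Gram matrix for the state $\Omega_k^{(\ell)}$ and hence positive semi-definite, and concludes positivity from the fact that the trace of a product of two positive matrices is non-negative. Once you insert this (or an equivalent Schur-product argument), the rest of your proof goes through and coincides with the paper's.
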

\noindent
In other terms, separable, pure states are just product states. We shall first give a proof
of this result for boson operator algebras, leaving to the next Section
the analysis of fermion algebras.

\begin{proof}

For the {\it if} part of the proof, observe that, according to {\sl Definition 3},
states satisfying (\ref{3.4}) are automatically $({\cal A}_1, {\cal A}_2)$-separable
since they obey (\ref{3.1}) with only one contribution to the convex sum.
For the {\it only if} part of the proof, recall that 
$({\cal A}_1, {\cal A}_2)$-local operators generate the whole boson algebra $\cal A$,
since ${\cal A}_1 \cup {\cal A}_2={\cal A}$; 
then, any element $\alpha\in {\cal A}$ can be written 
as $\alpha=\sum_{i} \alpha^{(1)}_i\,\alpha^{(2)}_i$, with $\alpha^{(1)}_i\in {\cal A}^{(1)}$ 
and $\alpha^{(2)}_i\in {\cal A}^{(2)}$. Therefore, if by hypothesis a state $\Omega$ is separable,
{\it i.e.} it can be written as in (\ref{3.4}) on all $({\cal A}_1, {\cal A}_2)$-local operators, then
\begin{equation}
\Omega(\alpha)=\sum_i \Omega(\alpha^{(1)}_i\,\alpha^{(2)}_i)
=\sum_{ik} \lambda_k\, \Omega_k^{(1)}(\alpha^{(1)}_i)\, \Omega_k^{(2)}(\alpha^{(2)}_i)
=\sum_k \lambda_k\ \omega_k (\alpha)\ ,
\label{3.4-1}
\end{equation}
where $\omega_k$ are linear maps defined on the whole algebra $\cal A$ by the relation
\begin{equation}
\omega_k (\alpha)=\sum_i \Omega_k^{(1)}(\alpha^{(1)}_i)\, \Omega_k^{(2)}(\alpha^{(2)}_i)\ .
\label{3.4-2}
\end{equation}
One easily sees that these maps are positive. Indeed, for any $\alpha\in {\cal A}$ one has
($T$ signifies matrix transposition):
\begin{equation}
\omega_k (\alpha \alpha^\star)={\rm Tr}\Big[ M^{(1)}_k\, \big(M^{(2)}_k)^T \Big]\ ,\quad
\big[M^{(\ell)}_k\big]_{ij}=\Omega_k^{(\ell)}\Big(\alpha^{(\ell)}_i\, \big(\alpha^{(\ell)}_j\big)^\star\Big)\ ,
\quad \ell=1,2\ ,
\label{3.4-3}
\end{equation}
with the matrices $M^{(1)}_k$, $M^{(2)}_k$ hermitian and positive; since the trace of the product
of two positive matrices is positive, one immediately gets: $\omega_k (\alpha \alpha^\star)\geq 0$. In addition,
the maps $\omega_k$ are normalized, 
$\omega_k({\bf 1})=\Omega_k^{(1)}({\bf 1})\, \Omega_k^{(2)}({\bf 1})=1$, and therefore represent states
for the algebra $\cal A$. But since $\Omega$ is pure by hypothesis, only one term in the convex
combination (\ref{3.4-1}) must be different from zero. Dropping the superfluous label $k$,
we have then found: $\Omega(\alpha^{(1)}\, \alpha^{(2)})=
\Omega^{(1)}(\alpha^{(1)})\, \Omega^{(2)}(\alpha^{(2)})$. By separately taking $\alpha^{(1)}$
and $\alpha^{(2)}$ to coincide with the identity operator, one finally obtains the result (\ref{3.4}).
\end{proof}

\noindent
{\bf Remark:} As we shall see explicitly later on, given a bipartition of the algebra $\cal A$,
a pure separable state $\Omega$ on it
is in general no longer pure when restricted to a proper subalgebra ${\cal B}\subset {\cal A}$;
nevertheless, since in any case it obeys the condition (\ref{3.4}), it will remain separable.\hfill$\Box$

\smallskip

Using the previous {\sl Definitions} and the result of {\sl Lemma 1}, one can 
study the entanglement with respect to a given bipartition $({\cal A}_1,{\cal A}_2)$ of 
the boson algebra $\cal A$ of states in a {\it folium} (see the discussion at the end of Section 2)
of the representation $\pi_\Omega$ corresponding to a given state $\Omega$ on $\cal A$, assuming
$\Omega$ to be separable.

The specific separability condition (\ref{3.4}) allows obtaining the generic form of any pure separable state:
\begin{proposition}
Let $({\cal A},\, \Omega)$ be operator algebra and state associated to a given boson quantum system and
assume $\Omega$ to be separable with respect to a given bipartition $({\cal A}_1,\, {\cal A}_2)$.
Then a normalized pure state $|\psi\rangle$ in the GNS-Hilbert space ${\cal H}_\Omega$ is 
$({\cal A}_1,\, {\cal A}_2)$-separable if and only if it can be written in the form
\begin{equation}
|\psi\rangle=\pi_\Omega(\beta^{(1)})\, \pi_\Omega(\beta^{(2)})\ |\Omega\rangle \ ,
\label{3.5}
\end{equation}
with $\beta^{(i)}\in{\cal A}_i$, $i=1,\ 2$, while $\pi_\Omega(\beta^{(i)})$ denote the corresponding 
operator representation on the Hilbert space ${\cal H}_\Omega$.
\end{proposition}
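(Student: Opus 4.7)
Handle the two directions separately, exploiting the commutativity $[\mathcal{A}_1,\mathcal{A}_2]=0$ available in the boson case together with Lemma~1 applied either to $\Omega$ itself or to the vector state $\Omega_\psi(\alpha)=\langle\psi|\pi_\Omega(\alpha)|\psi\rangle$ induced by $|\psi\rangle$.

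For the \emph{if} direction, I would start from the ansatz $|\psi\rangle=\pi_\Omega(\beta^{(1)})\pi_\Omega(\beta^{(2)})|\Omega\rangle$ and expand $\Omega_\psi(\alpha_1\alpha_2)=\langle\Omega|\pi_\Omega(\beta^{(2)\star}\beta^{(1)\star}\alpha_1\alpha_2\beta^{(1)}\beta^{(2)})|\Omega\rangle$. Repeated use of $[\pi_\Omega(\mathcal{A}_1),\pi_\Omega(\mathcal{A}_2)]=0$ recasts this matrix element as $\Omega(\gamma_1\gamma_2)$ with $\gamma_i=\beta^{(i)\star}\alpha_i\beta^{(i)}\in\mathcal{A}_i$. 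Separability of $\Omega$ then yields $\sum_k\lambda_k\,\Omega_k^{(1)}(\gamma_1)\,\Omega_k^{(2)}(\gamma_2)$, and collecting the scalar factors $\Omega_k^{(i)}(\beta^{(i)\star}\beta^{(i)})$ into weights $p_k$ — which sum to $\Omega(\beta^{(1)\star}\beta^{(1)}\beta^{(2)\star}\beta^{(2)})=\langle\psi|\psi\rangle=1$ by the same commutativity trick applied once more — produces precisely the convex decomposition $\sum_k p_k\,\omega_k^{(1)}(\alpha_1)\,\omega_k^{(2)}(\alpha_2)$ of Definition~3.

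For the \emph{only if} direction, assume $|\psi\rangle$ is a pure, $(\mathcal{A}_1,\mathcal{A}_2)$-separable vector. Applying Lemma~1 to $\Omega_\psi$ gives the strict factorization $\Omega_\psi(\alpha_1\alpha_2)=\Omega_\psi(\alpha_1)\,\Omega_\psi(\alpha_2)$ on all local observables, so each restriction $\Omega_\psi|_{\mathcal{A}_i}$ is a pure state in the folium of the cyclic subrepresentation of $\mathcal{A}_i$ on $\overline{\pi_\Omega(\mathcal{A}_i)|\Omega\rangle}$. Cyclicity of $|\Omega\rangle$ together with density of $\pi_\Omega(\mathcal{A}_i)|\Omega\rangle$ in that invariant subspace then lets me realize it as a vector state $\pi_\Omega(\beta^{(i)})|\Omega\rangle$ for a suitable $\beta^{(i)}\in\mathcal{A}_i$. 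Setting $|\psi'\rangle:=\pi_\Omega(\beta^{(1)})\pi_\Omega(\beta^{(2)})|\Omega\rangle$, the \emph{if} direction combined with the factorization gives $\Omega_{\psi'}=\Omega_\psi$ on every local observable, hence on the whole of $\mathcal{A}=\mathcal{A}_1\cup\mathcal{A}_2$; since $|\psi\rangle$ and $|\psi'\rangle$ induce the same pure state, they coincide up to a global phase that can be absorbed into $\beta^{(1)}$, yielding (\ref{3.5}).

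The main obstacle is the folium-representation step in the \emph{only if} direction: producing $\beta^{(i)}\in\mathcal{A}_i$ itself (not merely in some weak closure) whose image vector implements the pure restricted state $\Omega_\psi|_{\mathcal{A}_i}$. This rests on density of $\pi_\Omega(\mathcal{A}_i)|\Omega\rangle$ and on the purity of the restriction — both ultimately guaranteed by Lemma~1 — but in the infinite-dimensional setting it requires careful norm-approximation, and the final rigidity of $|\psi\rangle$ up to phase additionally requires irreducibility of the subrepresentation into which $|\psi\rangle$ falls.
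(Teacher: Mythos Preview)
Your \emph{if} direction is sound and close in spirit to the paper's, though you use only separability of $\Omega$ where the paper uses the stronger product property.

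Your \emph{only if} direction, however, has a real gap exactly where you flag it. Granting that the restrictions $\Omega_\psi|_{\mathcal{A}_i}$ are pure, nothing you have written produces elements $\beta^{(i)}\in\mathcal{A}_i$ (as opposed to a weak closure) whose image vectors implement those restrictions; and even if such vectors existed, the final step---identifying $|\psi\rangle$ with $|\psi'\rangle$ up to a phase because they induce the same state---fails whenever $\pi_\Omega$ is reducible, since distinct vectors in different irreducible summands can induce the same pure state on $\pi_\Omega(\mathcal{A})$. You note both issues yourself, but they are not technicalities: they are the whole content of the direction.

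The paper avoids both obstructions by never leaving the concrete decomposition of $|\psi\rangle$. It writes $|\psi\rangle=\pi_\Omega(\beta)|\Omega\rangle$ with $\beta=\sum_i\beta_i^{(1)}\beta_i^{(2)}$, sets $|\psi_i^{(\ell)}\rangle=\pi_\Omega(\beta_i^{(\ell)})|\Omega\rangle$, and expands $\langle\psi|\pi_\Omega(\alpha^{(1)}\alpha^{(2)})|\psi\rangle$ using the factorization from Lemma~1. Diagonalizing the two Gram matrices $G^{(\ell)}_{ij}=\langle\psi_i^{(\ell)}|\psi_j^{(\ell)}\rangle$ turns $\Omega_\psi$ into a convex combination $\sum_{r,s}\lambda_r^{(1)}\lambda_s^{(2)}\,\omega_{rs}$ of genuine states on $\mathcal{A}$. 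Purity of $\Omega_\psi$ collapses this sum to a single term, forcing each $G^{(\ell)}$ to have rank one; hence the $|\psi_i^{(\ell)}\rangle$ are collinear, $\beta_i^{(\ell)}=V_i^{(\ell)}\beta^{(\ell)}$ up to operators killing $|\Omega\rangle$, and $\beta$ factorizes directly. The idea you are missing is this Gram-matrix rank argument: it manufactures the concrete $\beta^{(1)},\beta^{(2)}$ out of the components $\beta_i^{(\ell)}$ already in hand, with no appeal to abstract vector-state realizations or to irreducibility.
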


\begin{proof}
For the {\it if} part of the proof, notice that the normalization condition
together with the assumed $({\cal A}_1,\, {\cal A}_2)$-separability of $\Omega$ yield:
\begin{eqnarray*}
\nonumber
&&\hskip-.5cm
\langle\psi | \psi\rangle=\langle\Omega\vert\pi^\dag_\Omega(\beta^{(2)})\pi^\dag_\Omega(\beta^{(1)})\pi_\Omega(\beta^{(1)})\,\pi_\Omega(\beta^{(2)})\vert\Omega\rangle\\
\label{aid}
&&\hskip 5cm
=\langle\Omega\vert\pi^\dag_\Omega(\beta^{(1)})\,\pi_\Omega(\beta^{(1)})\vert\Omega\rangle\,\langle\Omega\vert\pi^\dag_\Omega(\beta^{(2)})\,\pi_\Omega(\beta^{(2)}\vert\Omega\rangle=1\ .
\end{eqnarray*}
Using this result and again the separability of $\Omega$, one then has
\begin{eqnarray*}
&&\hskip-.5cm
\langle\psi\vert\pi_\Omega(\alpha^{(1)})\pi_\Omega(\alpha^{(2)})\vert\psi\rangle=
\langle\Omega\vert\pi^\dag_\Omega(\beta^{(2)})\,\pi_\Omega(\alpha^{(2)})\,\pi_\Omega(\beta^{(2)})\vert\Omega\rangle\,
\langle\Omega\vert\,\pi^\dag_\Omega(\beta^{(1)})\pi_\Omega(\alpha^{(1)})\,\pi_\Omega(\beta^{(1)})\vert\Omega\rangle\\
&&\hskip 9cm
=\langle\psi\vert\pi_\Omega(\alpha^{(2)})\vert\psi\rangle\,\langle\psi\vert\pi_\Omega(\alpha^{(1)})\vert\psi\rangle\ .
\end{eqnarray*}

For the {\it only if} part of the proof, 
observe that due to the cyclicity of the GNS state $|\Omega\rangle$, one can surely write
$|\psi\rangle=\pi_\Omega(\beta)\, |\Omega\rangle$, for some $\beta\in{\cal A}$.
Further, since ${\cal A}_1 \cup {\cal A}_2 = {\cal A}$, $\beta$ can be written
as combination of suitable local operators, $\beta=\sum_i \beta_i^{(1)}\, \beta_i^{(2)}$,
with $\beta_i^{(1)}\in{\cal A}_1$ and $\beta_i^{(2)}\in{\cal A}_2$.
Then, for any local operator $\alpha^{(1)}\alpha^{(2)}$, the separability condition (\ref{3.1}) implies
\begin{eqnarray}
\nonumber
\langle\psi| \pi_\Omega(\alpha^{(1)})\, \pi_\Omega(\alpha^{(2)} )|\psi\rangle
&=& \sum_{i,j}\langle\psi^{(2)}_i\vert\psi^{(2)}_j\rangle\,
\langle \psi^{(1)}_i| \pi_\Omega(\alpha^{(1)}) |\psi^{(1)}_j\rangle\\
&&\hskip 2cm\times\sum_{r,s}\langle\psi^{(1)}_{r}\vert\psi^{(1)}_{s}\rangle\, 
\langle \psi^{(2)}_{r}| \pi_\Omega(\alpha^{(2)}) |\psi^{(2)}_{s}\rangle\ ,
\label{3.6}
\end{eqnarray}
where 
\begin{equation}
|\psi_i^{(\ell)}\rangle= \pi_\Omega(\beta_i^{(\ell)})\, |\Omega\rangle,\quad \ell=1,2\ . 
\label{3.6-0}
\end{equation}
The matrices $\langle\psi^{(\ell)}_i\vert\psi^{(\ell)}_j\rangle$ are hermitian and positive semi-definite and therefore
they can be diagonalized by suitable unitary transformations:
\begin{equation}
\langle\psi^{(\ell)}_i\vert\psi^{(\ell)}_j\rangle=\sum_r \big[U^{(\ell)^\dagger}\big]_{ir}\, \lambda_r^{(\ell)}\,
\big[U^{(\ell)}\big]_{rj}\ ,\quad U^{(\ell)}\, U^{(\ell)^\dagger}=1\ ,\quad \lambda_r^{(\ell)}\geq0\ ,\quad
\ell=1,2\ .
\label{3.6-1}
\end{equation}
Then, by defining
\begin{equation}
|\phi^{(1)}_r\rangle= \sum_j \big[U^{(2)}\big]_{rj}\, |\psi_j^{(1)}\rangle\ , \qquad
|\phi^{(2)}_r\rangle= \sum_j \big[U^{(1)}\big]_{rj}\, |\psi_j^{(2)}\rangle\ ,
\nonumber
\end{equation}
and recalling again that any element $\alpha\in{\cal A}$ can be decomposed in terms of local operators,
$\alpha=\sum_i \alpha_i^{(1)}\, \alpha_i^{(2)}$, using (\ref{3.6}) one can write:
\begin{equation}
\langle\psi| \pi_\Omega(\alpha)|\psi\rangle=
\sum_{r,s} \lambda_r^{(1)} \lambda_s^{(2)}\ \omega_{rs}(\alpha)\ ,
\label{3.6-2}
\end{equation}
with
$$
\omega_{rs}(\alpha)=\sum_i \langle \phi^{(1)}_r | \alpha^{(1)}_i | \phi^{(1)}_r\rangle\,
\langle \phi^{(2)}_s | \alpha^{(2)}_i | \phi^{(2)}_s\rangle\ .
$$
As in the proof of {\sl Lemma 1}, one easily sees that the linear maps $\omega_{rs}$ are actually
(un-normalized) states on $\cal A$. Then, by the purity of the state $|\psi\rangle$,
the convex combination on the r.h.s. of (\ref{3.6-2}) must contain just one term.
This implies that there are just two positive constants $\lambda^{(1)}$ and $\lambda^{(2)}$ and
therefore no sum over $r$ in (\ref{3.6-1}): 
$\langle\psi^{(\ell)}_i\vert\psi^{(\ell)}_j\rangle=[V_i^{(\ell)}]^*\, V_j^{(\ell)}$, $\ell=1,2$,
with $V_i^{(\ell)}=\sqrt{\lambda^{(\ell)}}\, U_i^{(\ell)}$, $U_i^{(\ell)}\in {\bf C}$.
These inner products
are thus in factorized form and this is possible only if the vectors 
$|\psi_i^{(\ell)}\rangle$ are all proportional: 
$|\psi_i^{(\ell)}\rangle\equiv V_i^{(\ell)}\, |\psi^{(\ell)}\rangle$. Recalling their definition given in
(\ref{3.6-0}), this in turn implies: $\beta_i^{(\ell)}=V_i^{(\ell)}\ \beta^{(\ell)}$, up to operators
annihilating $|\Omega\rangle$; as a consequence 
the factorized form (\ref{3.5}) follows.
\end{proof}

As an immediate consequence of this result, one has the following {\sl Corollary},
that it will be very useful in characterizing the general form
of entangled states in ${\cal H}_\Omega$.

\begin{corollary}
Let $({\cal A},\, \Omega\, ,\pi_\Omega)$ be the triple characterizing the operator algebra, 
state and GNS-representation of a given boson quantum system; given any bipartition $({\cal A}_1, {\cal A}_2)$
of $\cal A$, with $\Omega$ separable, it is always possible to choose a basis in the corresponding Hilbert space
${\cal H}_\Omega$ made of separable pure states.
\end{corollary}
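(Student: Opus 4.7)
The plan is to exploit Proposition 1, which pins down the separable pure states of ${\cal H}_\Omega$ as exactly the vectors $\pi_\Omega(\beta^{(1)})\pi_\Omega(\beta^{(2)})|\Omega\rangle$ with $\beta^{(i)}\in{\cal A}_i$. Thus it is enough to prove that the linear span of such product vectors is dense in ${\cal H}_\Omega$; from any such dense family of separable pure states a basis of ${\cal H}_\Omega$ consisting entirely of separable pure states can then be extracted.

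First I would invoke the cyclicity of $|\Omega\rangle$ that is built into the GNS construction: the set $\{\pi_\Omega(\alpha)|\Omega\rangle:\alpha\in{\cal A}\}$ is dense in ${\cal H}_\Omega$. Next I would use the standing hypothesis stated just after {\sl Definition 1} that ${\cal A}_1\cup{\cal A}_2={\cal A}$, in the sense that the linear span of products is dense in ${\cal A}$: every $\alpha\in{\cal A}$ may be written as a (possibly countable) combination $\alpha=\sum_i\beta_i^{(1)}\beta_i^{(2)}$ of local elements. Multiplicativity of $\pi_\Omega$ then gives
\[
\pi_\Omega(\alpha)|\Omega\rangle=\sum_i\pi_\Omega(\beta_i^{(1)})\pi_\Omega(\beta_i^{(2)})|\Omega\rangle\ ,
\]
exhibiting each cyclic vector as a linear combination of separable pure states. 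Combining the two facts, the separable pure states span a dense subspace of ${\cal H}_\Omega$, and any maximal linearly independent subfamily yields the required basis.

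If $\Omega$ is pure one can refine this further to an \emph{orthonormal} basis. The idea is to perform Gram--Schmidt separately inside each subalgebra: starting from vectors $\pi_\Omega(\eta_k^{(i)})|\Omega\rangle$ with $\eta_k^{(i)}\in{\cal A}_i$ whose span is dense in $\overline{\pi_\Omega({\cal A}_i)|\Omega\rangle}$, the Gram--Schmidt process mixes coefficients only within ${\cal A}_i$, so the orthonormalized vectors retain the form $|\chi_k^{(i)}\rangle=\pi_\Omega(\gamma_k^{(i)})|\Omega\rangle$ with $\gamma_k^{(i)}\in{\cal A}_i$. The boson commutativity $[{\cal A}_1,{\cal A}_2]=0$ together with {\sl Lemma 1}, $\Omega(\alpha_1\alpha_2)=\Omega(\alpha_1)\,\Omega(\alpha_2)$, makes the inner products of products factorize, so that the family $\{\pi_\Omega(\gamma_k^{(1)})\pi_\Omega(\gamma_l^{(2)})|\Omega\rangle\}_{k,l}$ is orthonormal; completeness then follows from the density argument already given.

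The step I expect to be the most delicate is the case in which $\Omega$ is separable but not pure. Then the factorization of inner products used above is lost, and applying Gram--Schmidt to generic product vectors destroys the product form. However, the {\sl Corollary} only asks for \emph{a} basis of separable pure states, i.e.\ a linearly independent total family rather than a genuine orthonormal basis, so the density argument of the first two paragraphs --- which does not rely on purity of $\Omega$ --- is already sufficient to close the proof in full generality.
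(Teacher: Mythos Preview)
Your argument is correct and follows essentially the same route as the paper: the paper fixes linear bases $\{e_k^{(1)}\}$, $\{e_\ell^{(2)}\}$ in ${\cal A}_1$, ${\cal A}_2$ and observes that the product vectors $\pi_\Omega(e_k^{(1)})\pi_\Omega(e_\ell^{(2)})|\Omega\rangle$ are separable by Proposition~1 and span ${\cal H}_\Omega$ because ${\cal A}_1\cup{\cal A}_2={\cal A}$, which is exactly your density-plus-extraction argument phrased slightly more concretely. Your additional paragraph on orthonormalization via partial Gram--Schmidt (for pure $\Omega$) is a refinement the paper does not include in the proof itself but addresses informally just afterwards, so your treatment is in fact a bit more complete on that point.
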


\begin{proof}
In each of the two subalgebras ${\cal A}_1$, ${\cal A}_2$ viewed as linear spaces fix
a basis $\{e_k^{(1)} \}$, $\{e_\ell^{(2)} \}$, so that any element $\alpha^{(i)}\in{\cal A}_i$,
$i=1,2$, can be decomposed as linear combination of the basis elements,
$\alpha^{(i)}=\sum_k c_k^{(i)}\, e_k^{(i)}$, with $c_k^{(i)}$ complex coefficients.
Then,
the following set of vectors 
$|k;\ell\rangle = \pi_\Omega( e_k^{(1)})\, \pi_\Omega( e_\ell^{(2)})\, |\Omega\rangle$, 
obtained by applying products of the chosen basis elements to the cyclic vector $|\Omega\rangle$, 
are pure and separable by construction; further, they span the whole ${\cal H}_\Omega$, since otherwise 
the condition ${\cal A}_1 \cup {\cal A}_2= {\cal A}$ would not be satisfied.
\end{proof}

In many relevant cases, one can choose the basis elements $e_k^{(i)}\in{\cal A}_i$, $i=1,2$,
so that the resulting basis $\{ |k;\ell\rangle \}$ of ${\cal H}_\Omega$ results orthonormal.
In such cases, any (normal) mixed state $\Omega_\rho$, represented by the density matrix
$\rho$ on ${\cal H}_\Omega$, can be decomposed as
\begin{equation}
\rho=\sum_{j,k,\ell,m} \rho_{jk,\ell m}\ |j;k\rangle\langle \ell;m |\ ,
\qquad \sum_{j,k} \rho_{jk,jk}= 1\ .
\label{3.8}
\end{equation}
A density matrix $\rho_D$ in diagonal form,
\begin{equation}
\rho_D=\sum_{j,k} \rho_{jk,jk}\ |j;k\rangle\langle j;k |\ ,
\label{3.9}
\end{equation}
is clearly separable, being a convex combination of projectors on separable pure states.
More explicitly, for any local operator $\alpha^{(1)}\alpha^{(2)}$, one has:
\begin{eqnarray}
\nonumber
&&\hskip - 1.5cm\Omega_{\rho_D}(\alpha^{(1)}\alpha^{(2)})=
{\rm Tr}\Big[\rho_D\, \pi_\Omega(\alpha^{(1)})\, \pi_\Omega(\alpha^{(2)})\Big]\\
&&\hskip - 1cm=\sum_{j,k} \rho_{jk,jk}\ \langle\Omega | \big[\pi_\Omega(e_j^{(1)})\big]^\dagger 
\pi_\Omega(\alpha^{(1)})\, \pi_\Omega(e_j^{(1)}) |\Omega\rangle\
\langle\Omega | \big[\pi_\Omega(e_k^{(2)})\big]^\dagger 
\pi_\Omega(\alpha^{(2)})\, \pi_\Omega(e_j^{(2)}) |\Omega\rangle\ ,
\label{3.10}
\end{eqnarray}
which is precisely of the separable form (\ref{3.1}). 
This observation, together with {\sl Proposition 1}
and the fact that generic separable mixed states belong to the convex hull of pure separable states, 
can be used to characterize separable mixed states:

\begin{corollary}
A mixed boson state $\rho$ as in (\ref{3.8}) is
separable with respect to the given bipartition $({\cal A}_1, {\cal A}_2)$ if and only if it is a
convex combination of projectors onto pure $({\cal A}_1, {\cal A}_2)$-separable
states; otherwise, the state $\rho$ is $({\cal A}_1, {\cal A}_2)$-entangled.
\end{corollary}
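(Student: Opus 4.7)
The approach is to identify the separable mixed states with the convex hull of projectors onto pure separable vectors in ${\cal H}_\Omega$, leveraging Lemma~1 (pure separable states are product states) and Proposition~1 (such states are represented by vectors of the form (\ref{3.5})).

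For the \emph{if} direction the computation is immediate. Suppose $\rho=\sum_k p_k\,|\phi_k\rangle\langle\phi_k|$ with each $|\phi_k\rangle$ a normalised pure $({\cal A}_1,{\cal A}_2)$-separable vector. Lemma~1 gives the factorisation $\langle\phi_k|\pi_\Omega(\alpha_1\alpha_2)|\phi_k\rangle=\langle\phi_k|\pi_\Omega(\alpha_1)|\phi_k\rangle\langle\phi_k|\pi_\Omega(\alpha_2)|\phi_k\rangle$, so taking the trace against $\rho$ on a local operator $\alpha_1\alpha_2$ yields exactly the separable form (\ref{3.1}), with $\lambda_k=p_k$ and $\Omega_k^{(i)}(\cdot)=\langle\phi_k|\pi_\Omega(\cdot)|\phi_k\rangle$.

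For the \emph{only if} direction I would start from (\ref{3.1}) and represent each state $\Omega_k^{(i)}$ in the folium of $\pi_\Omega$ by a density matrix $\sigma_k^{(i)}$, with spectral decomposition $\sigma_k^{(i)}=\sum_j q_{kj}^{(i)}\,|\chi_{kj}^{(i)}\rangle\langle\chi_{kj}^{(i)}|$ into pure projectors. Substituting and collecting terms rewrites $\Omega_\rho(\alpha_1\alpha_2)$ as a convex combination of products $\langle\chi_{kj}^{(1)}|\pi_\Omega(\alpha_1)|\chi_{kj}^{(1)}\rangle\,\langle\chi_{kj'}^{(2)}|\pi_\Omega(\alpha_2)|\chi_{kj'}^{(2)}\rangle$. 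Each such factorised product defines a pure state on local operators that satisfies the hypothesis of Lemma~1, hence is $({\cal A}_1,{\cal A}_2)$-separable, and by Proposition~1 is represented by some vector $|\phi_{kjj'}\rangle$ of the form (\ref{3.5}) in ${\cal H}_\Omega$. Linearity together with the spanning property ${\cal A}_1\cup{\cal A}_2={\cal A}$ extends the resulting equality from local operators to all of ${\cal A}$, and uniqueness of the density-matrix representative then gives $\rho=\sum_{k,j,j'}\lambda_k\, q_{kj}^{(1)} q_{kj'}^{(2)}\,|\phi_{kjj'}\rangle\langle\phi_{kjj'}|$, a convex combination of projectors onto pure separable vectors. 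Any $\rho$ not admitting such a decomposition must violate (\ref{3.1}) and is therefore $({\cal A}_1,{\cal A}_2)$-entangled.

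The delicate step is precisely the conversion of a product of pure-state expectations on the two subalgebras into a pure separable vector in ${\cal H}_\Omega$: one has to know that a pure product functional on local operators admits a pure extension to the whole algebra ${\cal A}$ representable by a vector of the form (\ref{3.5}). In the boson setting at hand this is guaranteed by Proposition~1, since the two subalgebras commute and together span ${\cal A}$; in more general representation-theoretic settings the same conclusion would instead require a Krein--Milman argument applied directly to the convex set of separable states, together with an independent characterisation of its extreme points as pure separable projectors.
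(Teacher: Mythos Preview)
Your \emph{if} direction is fine and matches what the paper has in mind. The \emph{only if} direction, however, contains a genuine gap at the step you yourself flag as ``delicate''.

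You spectrally decompose the density matrices $\sigma_k^{(i)}$ representing the states $\Omega_k^{(i)}$ on the \emph{full} algebra ${\cal A}$, obtaining vectors $|\chi_{kj}^{(i)}\rangle\in{\cal H}_\Omega$ that are pure as states on ${\cal A}$. You then form the product functional
\[
\alpha_1\alpha_2\ \longmapsto\ \langle\chi_{kj}^{(1)}|\pi_\Omega(\alpha_1)|\chi_{kj}^{(1)}\rangle\,
\langle\chi_{kj'}^{(2)}|\pi_\Omega(\alpha_2)|\chi_{kj'}^{(2)}\rangle
\]
and assert that this ``defines a pure state'' to which Lemma~1 and Proposition~1 apply. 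But purity of $|\chi_{kj}^{(1)}\rangle$ on ${\cal A}$ says nothing about the purity of its \emph{restriction} to ${\cal A}_1$: if $|\chi_{kj}^{(1)}\rangle$ happens to be entangled with respect to $({\cal A}_1,{\cal A}_2)$, its marginal on ${\cal A}_1$ is mixed, and the product above is then a mixed state on ${\cal A}$. In that case neither Lemma~1 (stated only for pure states) nor Proposition~1 can be invoked, and you do not get a single vector $|\phi_{kjj'}\rangle$ of the form~(\ref{3.5}); you only get another separable mixed state, and the argument becomes circular.

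The repair is to decompose at the level of the subalgebras rather than of ${\cal A}$. In~(\ref{3.1}) only the restrictions $\Omega_k^{(i)}\!\upharpoonright_{{\cal A}_i}$ enter; write each of these as a convex combination of \emph{pure} states on ${\cal A}_i$. Since in the boson case ${\cal A}_1$ and ${\cal A}_2$ commute and generate ${\cal A}$, the product of a pure state on ${\cal A}_1$ with a pure state on ${\cal A}_2$ extends to a pure state on ${\cal A}$; now Lemma~1 and Proposition~1 legitimately apply and yield the vectors of the form~(\ref{3.5}) you want. Alternatively, the Krein--Milman route you sketch in your last paragraph works cleanly: the separable states form a compact convex set, and one checks directly that its extreme points are exactly the pure product states (if a product state has a mixed factor, decompose that factor to exhibit a non-trivial convex decomposition inside the separable set).

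For comparison: the paper does not spell out a proof of this corollary at all; it simply records it as an immediate consequence of Proposition~1 together with the assertion that ``generic separable mixed states belong to the convex hull of pure separable states''. So the burden you took on---actually justifying that assertion---is real, and the correct route is one of the two fixes above rather than the spectral decomposition on ${\cal H}_\Omega$.
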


In general, to determine whether a given density matrix $\rho$ can be written in diagonal, separable 
form is a hard task and one is forced to rely on suitable separability tests, that however
are in general not exhaustive. As discussed in the next Section, 
one of such tests is peculiar to fermion systems, and it is connected to the anticommuting character
of the corresponding operator algebra $\cal A$.

\section{Entanglement in fermion systems}

In this Section, we extend the results previously obtained in the case of boson systems
to many-body systems made of fermion elementary constituents.%
\footnote{Part of the results of this Section have being already
discussed in \cite{Benatti5}-\cite{Benatti7}.}
Adopting a second-quantized point of view, for fermion systems the observable algebra $\cal A$
coincides with the complex algebra ${\cal A}_f$ of canonical anticommutation relations.
It is generated by elements $a_i$, $a_i^\star$ obeying the relations:
\begin{equation}
\{a_i,\,a^\star_j\}\equiv a_i\,a^\star_j+a^\star_j\, a_i=\delta_{ij}\ ,\quad\{a_i,\,a_j\}=
\{a_i^\star,\,a^\star_j\}=\, 0\ ,\qquad i,\, j=1,2,\ldots,M\ ,
\label{4.1}
\end{equation}
where for simplicity we have assumed that the fermions can occupy $M$ different modes
(with $M$ possibly infinite). This framework is quite general and can accommodate various
situations arising in atomic and condensed matter physics; 
in particular, it can be used to describe ultracold fermions confined in
optical lattices \cite{Leggett1}-\cite{Yukalov}. 
The norm closure of all polynomials in the creation and annihilation
operators gives the full fermion operator algebra~${\cal A}_f$.

This algebra has a natural gradation in terms of its even and odd part:

\begin{definition}
Introduce the automorphism $\vartheta$ of ${\cal A}_f$ defined by its action on
the basic operators $a_i$ and $a_i^\star$ as follows: $\vartheta(a_i)=-a_i$, 
$\vartheta(a_i^\star)=-a_i^\star$. Then, the even component ${\cal A}_f^e$
of ${\cal A}_f$ is the subset containing the elements $\alpha^e\in {\cal A}_f$
such that $\vartheta(\alpha^e)=\alpha^e$, while the odd component ${\cal A}_f^o$
of ${\cal A}_f$ consists of elements $\alpha^o\in {\cal A}_f$,
for which $\vartheta(\alpha^o)=-\alpha^o$.
\end{definition}

\noindent
The even part ${\cal A}_f^e$ is a subalgebra of ${\cal A}_f$, the one generated
by even polynomials in all creation and annihilation operators; on the other hand,
${\cal A}_f^o$ is only a linear space and not an algebra, since the product
of two odd elements is even. Nevertheless, using the two projectors
${\cal P}^e=(1+\vartheta)/2$ and ${\cal P}^o=(1-\vartheta)/2$, any element $\alpha\in {\cal A}_f$
can be decomposed in its even $\alpha^e\equiv {\cal P}^e(\alpha)$ and odd 
$\alpha^o\equiv {\cal P}^o(\alpha)$ parts: $\alpha=\alpha^e + \alpha^o$.

A bipartition of the $M$-mode fermion algebra ${\cal A}_f$ can be easily 
obtained by splitting the collection of operators $\{a_i,\, a_i^\star \}$
into two disjoint sets $\{a_i,\, a_i^\star |\, i=1,2,\ldots,m\}$
and $\{a_j,\, a_j^\star |\, j=m+1,m+2,\ldots,M\}$. All polynomials in the first set
(together with their norm closures) form a subalgebra ${\cal A}_1$,
while the second set generates a subalgebra ${\cal A}_2$. These two algebras
have only the unit element in common and further ${\cal A}_1 \cup {\cal A}_2 = {\cal A}_f$.
Further, one defines
the even ${\cal A}_i^e$ and odd ${\cal A}_i^o$ components of the two
subalgebras ${\cal A}_i$, $i=1,2$, as done above for the full algebra ${\cal A}_f$,
through the automorphism~$\vartheta$. Only the operators of the first partition 
belonging to the even component $\mathcal{A}_1^e$
commute with any operator of the second partition and, similarly, only the even operators
of the second partition commute with the whole subalgebra $\mathcal{A}_1$.
Recalling now {\sl Definition~1}, $({\cal A}_1, {\cal A}_2)$ 
is indeed an algebraic bipartition of ${\cal A}_f$; in practice, it is determined by
the choice of the integer $m$, with $0<m< M$. 

Let us now come back to the definition of separability
introduced in {\sl Definitions~1-3} and to the apparent difference 
with which it treats bosonic and fermionic systems.
As already noticed, in the boson case, the two subalgebras 
${\cal A}_1$, ${\cal A}_2$ defining the algebraic bipartition $({\cal A}_1,\, {\cal A}_2)$
naturally commute, {\it i.e.} that each element $\alpha_1$ of the operator algebra ${\cal A}_1$ 
commutes with any element $\alpha_2$ in ${\cal A}_2$.
Instead, in the case of fermion systems 
the two subalgebras ${\cal A}_1$, ${\cal A}_2$ do not in general commute.
Nevertheless, in such systems only selfadjoint operators 
belonging to the even components ${\cal A}_1^e$, ${\cal A}_2^e$ 
qualify as physical observables and these do commute as required
by the definition of bipartition.

At this point, two different attitudes are possible regarding the definition of separability
expressed by the condition (\ref{3.1}): {\it i)} allow in it all operators from the two subalgebras
${\cal A}_1$, ${\cal A}_2$, or {\it ii)} restrict all
considerations to observables only. The first approach is in line with the notion of
``microcausality'' adopted in {\sl constructive} quantum field theory \cite{Streater,Strocchi4},
where the emphasis is on quantum fields, which are required either to commute (boson fields)
or anticommute (fermion fields) if defined on (causally) disjoint regions. 
On the other hand, the second point of view reminds
of the notion of ``local commutativity'' in {\sl algebraic} quantum field theory \cite{Emch,Haag},
where only observables are considered, assumed to commute if localized in disjoint regions.

These two points of view are not equivalent, as it can be appreciated by the following
simple example. Let us consider the system consisting of just one fermion that can occupy
two modes, $M=2$, with the bipartition defined by the two modes themselves. 
In the standard Fock representation, {\it i.e.} 
the GNS-construction built out of the vacuum state $|\Omega_0\rangle$,
such that $\pi_{\Omega_0}(a_i)|\Omega_0\rangle=\,0$, $i=1,2$ (see below for details),
consider the following state:
\begin{equation}
\Omega=|\phi\rangle\langle \phi|\ ,\qquad 
|\phi\rangle=\frac{1}{\sqrt2}\Big(|1,0\rangle + |0,1\rangle\Big)\ ,
\label{4.1.1}
\end{equation}
combination of the two manifestly separable Fock states 
$|1,0\rangle=\pi_{\Omega_0}(a_1^\star)|\Omega_0\rangle$
and
$|0,1\rangle=\pi_{\Omega_0}(a_2^\star)|\Omega_0\rangle$.
Clearly, it appears to be entangled and indeed, as discussed in \cite{Benatti6},
in a suitable $N$-fermion generalization, its quantum non-locality can be used
in quantum metrology to achieve sub shot-noise accuracy in parameter estimation.

Nevertheless,
in the second approach mentioned above, it is found to satisfy the condition (\ref{3.1}), hence to be separable. 
Indeed, only observables, {\it i.e.} selfadjoint, even operators,
can be used in this case as $\alpha_1$ and $\alpha_2$; in practice, only the two partial number operators
$a_1^\star a_1$ and $a_2^\star a_2$ 
together with the identity are admissible, and for these observables
the state (\ref{4.1.1}) behaves as the separable state $(|0,1\rangle\langle 0,1| + |1,0\rangle\langle 1,0|)/2$.
Different is the situation within the first approach: in this case, all operators
are admissible, for instance $a_1^\star$ and $a_2$,
which indeed prevent the separability condition (\ref{3.1}) to be satisfied.
In view of this, as in \cite{Benatti5}-\cite{Benatti7}, we here advocate and adopt 
the first point of view, {\it i.e.} point {\it i)} above: 
it gives a more general and physically
complete treatment of fermion entanglement.

In this respect, it should be added that
the anticommuting character
of the fermion algebra gives stringent constraints on the form of the states
defined on it, specifically on the ones that can be represented as a product
of other states.

As for any operator algebra, a state on ${\cal A}_f$ is a positive, linear functional
$\Omega: {\cal A}_f \to \mathbb{C}$. Then the following result holds
(see \cite{Araki,Benatti6} for the rather simple proof):
\begin{lemma}
Consider a bipartition $({\cal A}_1, {\cal A}_2)$ of the fermion algebra ${\cal A}_f$ 
and two states $\Omega_1$, $\Omega_2$ on it. Then, the linear 
functional $\Omega$ on ${\cal A}_f$ defined by 
$\Omega(\alpha_1\alpha_2)=\Omega_1(\alpha_1)\,\Omega_2(\alpha_2)$ 
for all $\alpha_1\in{\cal A}_1$ and $\alpha_2\in{\cal A}_2$ is a 
state on ${\cal A}_f$ only if at least one $\Omega_i$ vanishes on the odd component of ${\cal A}_i$.
\end{lemma}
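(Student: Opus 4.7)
My plan is to proceed by contraposition: assuming that both $\Omega_1$ and $\Omega_2$ fail to vanish on their respective odd components ${\cal A}_1^o$ and ${\cal A}_2^o$, I will exhibit a $\gamma\in{\cal A}_f$ for which $\Omega(\gamma^\star\gamma)$ is not a non-negative real number, in contradiction with positivity of $\Omega$.

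As a preliminary step I would extract \emph{selfadjoint} odd witnesses. The grading automorphism $\vartheta$ commutes with the $\star$-operation, since $\vartheta(a_i^\star)=-a_i^\star=\bigl(\vartheta(a_i)\bigr)^\star$ by definition; hence each odd component ${\cal A}_i^o$ is $\star$-stable, and the real and imaginary parts $(\alpha+\alpha^\star)/2$ and $(\alpha-\alpha^\star)/(2i)$ of any odd element are again odd and selfadjoint. The hypothesis that $\Omega_i$ does not vanish on ${\cal A}_i^o$, together with the hermiticity of states, then lets me fix $\alpha_i=\alpha_i^\star\in{\cal A}_i^o$ with $x:=\Omega_1(\alpha_1)$ and $y:=\Omega_2(\alpha_2)$ both real and nonzero.

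Next I would test positivity on the one-parameter family $\gamma=c\,\alpha_1+\alpha_2$ with $c\in\mathbb{C}$ to be chosen. Two odd elements of ${\cal A}_f$ anticommute, so $\alpha_2\alpha_1=-\alpha_1\alpha_2$ and
\begin{equation*}
\gamma^\star\gamma=|c|^2\,\alpha_1^2+(\bar c-c)\,\alpha_1\alpha_2+\alpha_2^2.
\end{equation*}
Each summand is now in local form $\beta_1\beta_2$ with $\beta_i\in{\cal A}_i$ (trivially so for the $\alpha_1^2$ and $\alpha_2^2$ pieces, paired with the identity of the opposite subalgebra), so the product-rule definition of $\Omega$ yields
\begin{equation*}
\Omega(\gamma^\star\gamma)=|c|^2\,\Omega_1(\alpha_1^2)+(\bar c-c)\,xy+\Omega_2(\alpha_2^2).
\end{equation*}
Taking $c=it$ with $t\in\mathbb{R}$ makes $\bar c-c=-2it$, so $\Omega(\gamma^\star\gamma)$ acquires imaginary part $-2t\,xy$; for any $t\neq 0$ this is nonzero because $xy\neq 0$, hence $\Omega(\gamma^\star\gamma)\notin[0,\infty)$ and $\Omega$ cannot be a state.

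I expect the only delicate point to be the anticommutation rewriting of the cross term: the hypothesis specifies $\Omega$ on products $\alpha_1\alpha_2$ with the ${\cal A}_1$-factor on the left, so one must convert $\alpha_2\alpha_1$ into $-\alpha_1\alpha_2$ \emph{before} invoking the product rule, rather than applying the rule to both orderings and risking an inconsistency. Once this is handled, the argument reduces to the elementary observation that a sum of two real numbers plus a nonzero purely imaginary one cannot be non-negative.
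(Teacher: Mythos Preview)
Your argument is correct. The paper itself does not supply a proof of this lemma, deferring instead to the references \cite{Araki,Benatti6}; your contrapositive computation with the test element $\gamma=c\,\alpha_1+\alpha_2$ is precisely the standard route and goes through cleanly.

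One small phrasing issue: the sentence ``Two odd elements of ${\cal A}_f$ anticommute'' is false as a blanket statement (e.g.\ $a_1$ and $a_1^\star$ are both odd but $\{a_1,a_1^\star\}=1$). What you actually use, and what \emph{is} true for a mode bipartition, is that an odd element of ${\cal A}_1$ anticommutes with an odd element of ${\cal A}_2$; this follows because each is a sum of monomials of odd degree in disjoint sets of creation/annihilation operators. You should tighten the wording accordingly, but the computation that follows it is unaffected.
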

\noindent
This result implies that the product 
$\Omega_k^{(1)}(\alpha_1)\, \Omega_k^{(2)}(\alpha_2)$ in the r.h.s. of (\ref{3.1})
in {\sl Definition~3}, vanishes whenever 
$\alpha_1$ and $\alpha_2$ are both odd. Since the even component ${\cal A}_1^e$ commutes with
the entire subalgebra ${\cal A}_2$, and similarly ${\cal A}_2^e$ commutes with ${\cal A}_1$,
it follows that also for fermions the decomposition~(\ref{3.1}) is non-trivial only for local operators
$\alpha_1\alpha_2$ such that $[\alpha_1,\, \alpha_2]=\,0$, thus making the definition of separability
it encodes completely analogous to the one for bosons.

\medskip
\noindent
{\bf Remark:} Given a partition $({\cal A}_1, {\cal A}_2)$ of ${\cal A}_f$, 
consider a product state $\Omega$ such that $\Omega(\alpha_1\alpha_2)=\Omega(\alpha_1)\,\Omega(\alpha_2)$,
for all $\alpha_1\in{\cal A}_1$ and $\alpha_2\in{\cal A}_2$; because of {\sl Lemma 2}, it must be zero on
the odd elements of at least one partition. Indeed, fixing two odd elements $\alpha_1^o$ and $\alpha_2^o$
in the two partition, by {\sl Lemma 2} at least one of the two expectations $\Omega(\alpha_1^o)$
or $\Omega(\alpha_2^o)$ must be zero. Assume $\Omega(\alpha_2^o)\neq0$; then, again by the previous {\sl Lemma},
one has, 
$\Omega(\beta^o_1\alpha^o_2)=\Omega(\beta^o_1)\,\Omega(\alpha^o_2)=\,0$, and thus $\Omega(\beta^o_1)=\,0$,
for any odd element $\beta_1^o\in {\cal A}_1$.\hfill$\Box$
\medskip

Motivated by this last {\sl Remark}, in the following we shall limit our considerations to states on ${\cal A}_f$ 
that are left invariant by the action of the automorphism $\vartheta$, $\Omega \circ \vartheta = \Omega$,
namely states that are vanishing on the odd component of the fermion algebra: 
this physical, ``gauge-invariance'' condition
is always tacitly assumed in the discussion of any fermion many-body system.%
\footnote{This is apparent in the standard Fock representation of the fermion algebra discussed below, since
the vacuum expectation of an odd number of elements $a_i$ and $a_i^\star$ is always vanishing.}

Within the framework introduced above, most of the results discussed in the previous Section
in the case of boson algebras remain true also for the fermion algebra ${\cal A}_f$.
In particular, the characterization of pure, separable states given
by {\sl Lemma 1} and {\sl Proposition~1} is unaltered; however,
the proofs of these results need refinements in order to comply with
the anticommuting character of ${\cal A}_f$.

\noindent \ - {\it Proof of} {\sl Lemma 1}: The {\it if} part of the proof is unaltered, while for the {\it only if}
part, one notices that also in this case any element $\alpha\in {\cal A}_f$ can be written
as $\alpha=\sum_{i} \alpha^{(1)}_i\,\alpha^{(2)}_i$, with $\alpha^{(1)}_i\in {\cal A}^{(1)}$, 
$\alpha^{(2)}_i\in {\cal A}^{(2)}$; however, as previously observed,
the elements $\alpha^{(\ell)}_i$, $\ell=1,2$, can be decomposed as the sum of their even and odd parts,
so that in the previous decomposition of $\alpha$ in terms of local operators,
one can assume all $\alpha^{(\ell)}_i$ to be of given parity.
The proof than proceeds as in the boson case, since the expressions in (\ref{3.4-1}) and (\ref{3.4-2}) 
are unaltered. The only troubled point is to show that the linear maps 
$\omega_k(\alpha)$ in (\ref{3.4-1}) are really states
on ${\cal A}_f$, {\it i.e.} that $\omega_k(\alpha\,\alpha^\star)\geq0$. This is done by explicit calculation
showing that $\omega_k(\alpha\,\alpha^\star)$ can be expressed as in (\ref{3.4-3}) plus  additional
pieces that are however vanishing due to the result of {\sl Lemma 2}.

\smallskip
\noindent \ - {\it Proof of} {\sl Proposition 1}: In this case it is the {\it if} part of the proof that requires some care,
while the {\it only if} part of the proof proceeds exactly as in the boson case, recalling that
averages of odd operators on the GNS-state $|\Omega\rangle$ vanish.
Given the bipartition $({\cal A}_1, {\cal A}_2)$, let us assume that the pure, 
normalized state $|\psi\rangle$ can be written as in (\ref{3.5}), {\it i.e.}
$|\psi\rangle=\pi_\Omega(\beta^{(1)})\, \pi_\Omega(\beta^{(2)})\ |\Omega\rangle$,
$\beta^{(\ell)}\in {\cal A}_\ell$, $\ell=1,2$;
we have to prove that:
\begin{equation}
\langle\psi\vert\pi_\Omega(\alpha^{(1)})\pi_\Omega(\alpha^{(2)})\vert\psi\rangle
=\langle\psi\vert\pi_\Omega(\alpha^{(2)})\vert\psi\rangle\,\langle\psi\vert\pi_\Omega(\alpha^{(1)})\vert\psi\rangle\ ,
\label{3.1-1}
\end{equation}
for any $\alpha^{(\ell)}\in {\cal A}_\ell$. First of all, notice that if (\ref{3.1-1}) is true for
$\alpha^{(1)}$, $\alpha^{(2)}$ of definite parity, than it is true also for generic $\alpha^{(1)}$,
$\alpha^{(2)}$, since they can be always decomposed as the sum of their even and odd parts.
The proof then splits in four parts, according to all the possible combinations of parities that the elements
$\alpha^{(1)}$, $\alpha^{(2)}$ can take. By writing also $\beta^{(1)}$ and $\beta^{(2)}$
as the sum of their even and odd parts and using the normalization condition $\langle\psi |\psi\rangle=1$,
explicit computation then shows that the result (\ref{3.1-1}) is indeed true, keeping in mind that $\Omega$ is separable
and vanishing on odd elements of ${\cal A}_f$.
\hfill$\Box$

\medskip
As a further consequence of {\sl Lemma 2}, the following criterion of entanglement holds:

\begin{corollary}
Given the bipartition $({\cal A}_1, {\cal A}_2)$ of the fermion algebra ${\cal A}_f$,
if a state $\Omega$ is non vanishing on a local operator $\alpha_1^o\alpha_2^o$, with the two components 
$\alpha_1^o \in {\cal A}_1^o$, $\alpha_2^o \in {\cal A}_2^o$ both belonging to the odd part 
of the two subalgebras, then $\Omega$ is entangled.
\end{corollary}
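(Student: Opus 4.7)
The plan is to proceed by contradiction, invoking \textsl{Lemma 2} together with \textsl{Definition 3}. Suppose, contrary to the claim, that $\Omega$ is $(\mathcal{A}_1,\mathcal{A}_2)$-separable. Then by \textsl{Definition 3} there exist weights $\lambda_k\geq 0$ with $\sum_k\lambda_k=1$ and states $\Omega_k^{(1)},\Omega_k^{(2)}$ on $\mathcal{A}_f$ such that, for every pair of local operators and in particular for $\alpha_1^o\alpha_2^o$,
\begin{equation*}
\Omega(\alpha_1^o\alpha_2^o)=\sum_k \lambda_k\, \Omega_k^{(1)}(\alpha_1^o)\,\Omega_k^{(2)}(\alpha_2^o)\ .
\end{equation*}

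The key step is to show that each term on the right vanishes. For fixed $k$, the product functional $\alpha_1\alpha_2\mapsto \Omega_k^{(1)}(\alpha_1)\,\Omega_k^{(2)}(\alpha_2)$ extends, through the decomposition $\mathcal{A}_f$ in terms of local operators, to a positive linear functional on $\mathcal{A}_f$, since it reproduces the $k$-th component of the convex decomposition of the state $\Omega$. \textsl{Lemma 2} then applies: at least one of $\Omega_k^{(1)}$ or $\Omega_k^{(2)}$ must vanish on the odd component of its respective subalgebra. Consequently $\Omega_k^{(1)}(\alpha_1^o)\,\Omega_k^{(2)}(\alpha_2^o)=0$ for every $k$, and the whole sum collapses to $0$.

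This forces $\Omega(\alpha_1^o\alpha_2^o)=0$, contradicting the hypothesis that $\Omega$ is non vanishing on $\alpha_1^o\alpha_2^o$. Hence $\Omega$ cannot be separable and must be entangled with respect to $(\mathcal{A}_1,\mathcal{A}_2)$.

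The only delicate point, which I would state carefully rather than calculate, is the justification that each component $\Omega_k^{(1)}\otimes\Omega_k^{(2)}$ appearing in the separable decomposition genuinely satisfies the hypothesis of \textsl{Lemma 2} (i.e., defines a state on $\mathcal{A}_f$, not merely on local elements). This is essentially encoded in the remark following \textsl{Lemma 2}, where exactly this product-state consequence is extracted from the separability formula (\ref{3.1}); once invoked, the corollary is immediate and no further computation is needed.
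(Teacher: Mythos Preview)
Your proposal is correct and follows essentially the same approach as the paper: the paper's proof is a one-line appeal to {\sl Lemma 2} (via the observation, made just after that lemma, that every term $\Omega_k^{(1)}(\alpha_1)\,\Omega_k^{(2)}(\alpha_2)$ in the separable decomposition~(\ref{3.1}) must vanish on odd--odd local operators), and you have simply spelled out the contrapositive in detail. Your identification of the one subtle point---that each product functional must itself be a state on ${\cal A}_f$ for {\sl Lemma 2} to apply---is exactly the content the paper packages into that preceding discussion.
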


\begin{proof}
Indeed, if $\Omega(\alpha_1^o \alpha_2^o)\neq0$, then, by {\sl Lemma 2}, $\Omega$ can not be written as
in (\ref{3.1}), and therefore it is entangled.
\end{proof}

\bigskip

The standard GNS-construction for the algebra ${\cal A}_f$ is based on the vacuum state
$\Omega_0$ giving rise to the so-called Fock representation. It is characterized by the
condition \hbox{$\Omega_0(a_i)=0$}, for all annihilation operators $a_i$, or equivalently,
$\pi_{\Omega_0}(a_i)\, |\Omega_0\rangle=\,0$; the corresponding Hilbert space ${\cal H}_{\Omega_0}$
is spanned by the states obtained applying creation operators, 
$\pi_{\Omega_0}(a_i^\star)\equiv\big[\pi_{\Omega_0}(a_i)\big]^\dagger$ to the cyclic vector
$|\Omega_0\rangle$. A basis in ${\cal H}_{\Omega_0}$ is then given by the
set of Fock states:
\begin{equation}
|n_1, n_2,\ldots,n_M\rangle= 
\big[\pi_{\Omega_0}(a_1^\star)\big]^{n_1}\, \big[\pi_{\Omega_0}(a_2^\star)\big]^{n_2}\, \cdots\, 
\big[\pi_{\Omega_0}(a_M^\star)\big]^{n_M}\,|\Omega_0\rangle\ ,
\label{4.2}
\end{equation}
the integers $n_1, n_2, \ldots, n_M$ representing the occupation numbers of the different modes;
due to algebraic relations (\ref{4.1}), they can take only the two values 0 or 1.  
In this representation, the total number $\hat N=\sum_i \hat N_i$, with 
$\hat N_i=\pi_{\Omega_0}(a_i^\star)\, \pi_{\Omega_0}(a_i)$ counting the occupation number
of the $i$-th mode, is a well defined operator on ${\cal H}_{\Omega_0}$; 
as a consequence, Fock states with different occupation numbers result orthogonal.

\noindent
{\bf Remark:} 
Notice that the operator $\hat N$ commutes with all physical observables, 
since coherent mixtures of states with different total occupation number
are not physical, due to the conservation of the fermion parity operator:
we are in presence of a so-called 
superselection rule \cite{Bartlett}-\cite{Moriya2}.\hfill$\Box$
\medskip

One easily sees that the Fock representation of ${\cal A}_f$ is irreducible, so that
the Fock states in (\ref{4.2}) are pure on ${\cal A}_f$. 
Further, they are separable with respect to any fermion bipartition $({\cal A}_1,{\cal A}_2)$ 
as in {\sl Definition 1}, since they are in the product form (\ref{3.4}).
Therefore, they can be used to give a convenient decomposition of any fermion state
in the {\sl folium} of $\Omega_0$, in particular, for any density matrix $\rho$ on ${\cal H}_{\Omega_0}$.

First, note that due to the above mentioned fermion number superselection rule,
a general fermion density matrix can be written as an incoherent superposition
of states $\rho_N$ with a fixed number $N$ of fermions
\begin{equation}
\rho=\sum_N \lambda_N\, \rho_N \ ,\qquad \lambda_N\geq 0\ ,\qquad \sum_N \lambda_N=1\ .
\label{4.3}
\end{equation}
One can then limit the discussion to the states $\rho_N$, with $N$ fixed.
Indeed, notice that two density matrices $\rho_{N_1}$ and $\rho_{N_2}$,
with $N_1\neq N_2$, have supports on orthogonal subspaces of ${\cal H}_{\Omega_0}$;
as a result, the Fock Hilbert space decomposes as ${\cal H}_{\Omega_0}=\oplus_N {\cal H}_N$,
where ${\cal H}_N$ are Hilbert spaces spanned by Fock vectors (\ref{4.2}) 
having exactly $N$ fermions, {\it i.e.} $\sum_i n_i = N$.

As discussed above, a bipartition of ${\cal A}_f$ is given by a partition
of the fermion modes into two disjoint sets, one containing the first $m$ modes,
while the second the remaining $M-m$ ones; we can refer to such a choice 
as the $(m,\ M-m)$-partition. Given such a bipartition
the Fock basis in ${\cal H}_{\Omega_0}$ can be relabeled in a more convenient way as
$| k, \sigma; N-k, \sigma'\rangle$, where 
the integer $k$ gives the number of occupied modes in the first partition,
while $\sigma$ counts the different ways in which these modes 
can be taken out of the available $m$ ($k\leq m$);
similarly, $\sigma'$ distinguishes the ways in which the remaining $N-k$ occupied modes can
be distributed in the second partition.

Then, a generic density matrix $\rho_N$ on ${\cal H}_N$ can be decomposed as
\begin{equation}
\rho_N=\sum_{k,l=N_-}^{ N_+ }\ \sum_{\sigma,\sigma',\tau,\tau'}\
\rho_{k \sigma\sigma', l\tau\tau'}\ | k, \sigma; N-k, \sigma'\rangle \langle l, \tau; N-l, \tau' |\ ,
\quad \sum_{k=N_-}^{ N_+ }\ \sum_{\sigma,\sigma'}\
\rho_{k \sigma\sigma', k\sigma\sigma'}=1\ ,
\label{4.4}
\end{equation}
where $N_-={\rm max}\{0,N-M+m\}$ and $N_+={\rm min}\{N,m\}$ are the minimum and maximum
number of fermions that the first partition can contain, due to the exclusion principle.
Using this decomposition, one can obtain a full characterization of the structure
of entangled fermion states (see \cite{Benatti6} for further details):

\begin{proposition}
A generic $(m, M-m)$-mode bipartite state (\ref{4.4}) is entangled
if and only if it can not be cast in the following block diagonal form
\begin{equation}
\rho_N=\sum_{k=N_-}^{ N_+ } p_k\ \rho_k\ ,\qquad 
\sum_{k=N_-}^{ N_+ } p_k=1\ ,\quad {\rm Tr}[\rho_k]=1\ ,
\label{28}
\end{equation}
with
\begin{equation}
\rho_k=\sum_{\sigma,\sigma',\tau,\tau'}\
\rho_{k \sigma\sigma', k\tau\tau'}\ | k, \sigma; N-k, \sigma'\rangle \langle k, \tau; N-k, \tau' |\ ,
\quad \sum_{\sigma,\sigma'}\rho_{k \sigma\sigma', k\sigma\sigma'}=1\ ,
\label{29}
\end{equation}
({\it i.e.} at least one of its non-diagonal coefficients $\rho_{k \sigma\sigma', l\tau\tau'}$, $k\neq l$,
is nonvanishing),
or, if it can, if and only if at least one of its diagonal blocks $\rho_k$ is non-separable.
\end{proposition}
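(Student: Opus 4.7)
The plan is to prove both implications by reducing to \emph{Corollary 2} (a mixed state is separable iff it is a convex combination of projectors onto pure separable states) and \emph{Proposition 1} (the characterization of pure separable states), combined with the total particle-number superselection that fixes $\rho_N$ in the sector $\mathcal{H}_N$.

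The \emph{if} direction is a direct convexity argument. If $\rho_N=\sum_k p_k\,\rho_k$ has the block form (\ref{28})--(\ref{29}) with every $\rho_k$ separable, then by \emph{Corollary 2} each $\rho_k$ is a convex mixture of projectors onto pure $({\cal A}_1,{\cal A}_2)$-separable states supported in its $(k,N-k)$ sector; summing over $k$ with the weights $p_k$ expresses $\rho_N$ as a convex mixture of pure separable states, hence separable.

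For the \emph{only if} direction, assume $\rho_N$ separable; then \emph{Corollary 2} gives $\rho_N=\sum_i q_i\,|\psi_i\rangle\langle\psi_i|$ with each $|\psi_i\rangle\in{\cal H}_N$ pure and $({\cal A}_1,{\cal A}_2)$-separable. The decisive step, and the main obstacle, is to show that each such $|\psi_i\rangle$ is confined to a single sector, i.e.\ supported on vectors $|k_i,\sigma;N-k_i,\sigma'\rangle$ for one fixed $k_i$. My approach is: \emph{Proposition 1} gives $|\psi_i\rangle=\pi_{\Omega_0}(\beta^{(1)})\pi_{\Omega_0}(\beta^{(2)})|\Omega_0\rangle$, and since annihilation operators kill the Fock vacuum and partition-2 operators leave the partition-1 modes untouched, only the purely creating components of $\beta^{(1)}$ and $\beta^{(2)}$ contribute, yielding an a priori decomposition $|\psi_i\rangle=\sum_k c_k\,|\phi^{(1)}_k\rangle|\phi^{(2)}_{N-k}\rangle$ over the admissible values $N_-\le k\le N_+$. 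To eliminate cross terms with $k\ne k'$, I pick local operators $\alpha_1\in{\cal A}_1$ and $\alpha_2\in{\cal A}_2$ whose net creation degrees are $k'-k$ and $k-k'$ respectively, so that $\alpha_1\alpha_2$ preserves the total fermion number while each individual factor shifts it. Because $|\psi_i\rangle\in\mathcal{H}_N$, and $\alpha_1,\alpha_2$ each connect $\mathcal{H}_N$ to an orthogonal sector, both $\Omega(\alpha_1)$ and $\Omega(\alpha_2)$ vanish; the pure-state separability relation $\Omega(\alpha_1\alpha_2)=\Omega(\alpha_1)\Omega(\alpha_2)=0$ from \emph{Lemma 1} then forces $\overline{c_{k'}}c_k\,\langle\phi^{(1)}_{k'}|\pi_{\Omega_0}(\alpha_1)|\phi^{(1)}_k\rangle\,\langle\phi^{(2)}_{N-k'}|\pi_{\Omega_0}(\alpha_2)|\phi^{(2)}_{N-k}\rangle=0$, and a judicious choice of $\alpha_1,\alpha_2$ keeps the two matrix elements non-zero, yielding $\overline{c_{k'}}c_k=0$. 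Exhausting all pairs $k\ne k'$ collapses $|\psi_i\rangle$ to a single sector. This is the essential point where admitting odd (non-observable) elements of ${\cal A}_1,{\cal A}_2$ in the separability condition (the first attitude advocated after \emph{Lemma 2}) is needed; under the observable-only alternative, the off-diagonal coherences would be invisible and the proposition would fail. Once the single-sector property is established, grouping the $|\psi_i\rangle\langle\psi_i|$ by their common $k_i$ and collecting weights reassembles $\rho_N$ into the block-diagonal form (\ref{28}) with each block $\rho_k$ a convex combination of pure separable projectors, hence separable.
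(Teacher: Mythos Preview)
The paper does not actually prove this proposition; it only states it and refers the reader to \cite{Benatti6} for the details. So there is no in-paper proof to compare against, and your task reduces to giving a self-contained argument consistent with the paper's framework.

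Your overall strategy is correct and well aligned with the tools the paper makes available: the \emph{if} direction is immediate from convexity and \emph{Corollary~2}, and for the \emph{only if} direction the decisive observation is indeed that every pure $({\cal A}_1,{\cal A}_2)$-separable vector in ${\cal H}_N$ must lie in a single $(k,N{-}k)$ sector, after which the block structure follows by grouping. Your use of \emph{Proposition~1} to obtain the product form $\pi_{\Omega_0}(\beta^{(1)})\pi_{\Omega_0}(\beta^{(2)})|\Omega_0\rangle$, then reducing $\beta^{(1)},\beta^{(2)}$ to pure creation polynomials on the vacuum and splitting by degree, is the right mechanism; and your remark that admitting odd local operators (the first attitude after \emph{Lemma~2}) is essential here is exactly on point.

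There is one small gap in the ``judicious choice'' step. As written, you isolate a single cross term $\overline{c_{k'}}c_k\langle\phi^{(1)}_{k'}|\alpha_1|\phi^{(1)}_k\rangle\langle\phi^{(2)}_{N-k'}|\alpha_2|\phi^{(2)}_{N-k}\rangle$, but if more than two sectors are populated, the relation $\Omega(\alpha_1\alpha_2)=0$ constrains only the \emph{sum} of all cross terms with fixed shift $k'-k$, and cancellations are possible. The clean fix is to first take $\Delta=k_{\max}-k_{\min}$, the maximal spread among populated sectors, and choose $\alpha_1$ of net degree $+\Delta$ in partition~1 and $\alpha_2$ of net degree $-\Delta$ in partition~2; then only the single pair $(k_{\min},k_{\max})$ contributes, and since operators of these bidegrees span all linear maps between the two sector components, one can force the surviving matrix element to be nonzero. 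This gives a contradiction unless $\Delta=0$, i.e.\ a single sector. With this refinement your argument is complete.
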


The extension of this result to the case of Majorana fermions requires some care,
since, as we shall see in the coming sections, 
for real fermions the GNS-representations of the observable algebra $\cal A$
are in general reducible.

\section{Algebraic approach to Majorana fermions}

For a system made of real fermions, the structure of the
algebra of observables $\cal A$ turns out to be quite different from that of ${\cal A}_f$
characterizing complex fermions and
discussed in the previous Section. As in that case, we shall describe the 
Majorana observable algebra in terms of real mode operators $c_i$, $i=1,2,\ldots, N$,
with $c_i{}^\star = c_i$, satisfying the following anticommutation relations:%
\footnote{Although the number $N$ of Majorana modes can also be infinite,
for simplicity, hereafter we shall limit our considerations to the more physically relevant
case of finite $N$.}
\begin{equation}
\{c_i,\,c_j\}=2\,\delta_{ij}\ ,\qquad i,\, j=1,2,\ldots,N\ .
\label{5.1}
\end{equation}
The linear span of all products of these mode operators, together with the unit
element $c_0\equiv {\bf 1}$, form the (euclidean) complex Clifford algebra
${\cal C}_N(\mathbb{C})$  \cite{Gilbert}-\cite{Meinrenken}, 
which is the operator $C^\star$-algebra relevant
to describe Majorana fermion systems.
One can easily show that the monomials
\begin{equation}
(c_1)^{n_1}\, (c_2)^{n_2}\ldots (c_N)^{n_N}\ ,\qquad n_i=0,1\ ,
\label{5.2}
\end{equation}
with $(c_1)^0\, (c_2)^0\ldots (c_N)^0$ interpreted as the identity,
form a basis in ${\cal C}_N$, which therefore has dimension $2^N$.
Finite dimensional Clifford algebras are isomorphic to matrix algebras \cite{Gilbert};
however, one has to distinguish two cases, according to whether $N$ is even or odd.
When $N=2n$, the algebra ${\cal C}_{2n}(\mathbb{C})$ is isomorphic to the
$2^n\times 2^n$ matrix algebra ${\cal M}_{2^n}(\mathbb{C})$, while for
$N=2n+1$, the algebra ${\cal C}_{2n+1}(\mathbb{C})$ is isomorphic to the
direct sum ${\cal M}_{2^n}(\mathbb{C}) \oplus {\cal M}_{2^n}(\mathbb{C})$.
Explicitly, up to unitary equivalences, in the case ${\cal C}_{2n}$ the isomorphism is given by:
\begin{eqnarray}
\nonumber
&& c_{2k} \longleftrightarrow m_{2k}\equiv \underbrace{\sigma_0\otimes\ldots\otimes\sigma_0}_{n-k-1}
\otimes\,\sigma_1\otimes\underbrace{\sigma_3\otimes\ldots\otimes\sigma_3}_k \\
&& c_{2k+1} \longleftrightarrow m_{2k+1}\equiv \underbrace{\sigma_0\otimes\ldots\otimes\sigma_0}_{n-k-1}
\otimes\,\sigma_2\otimes\underbrace{\sigma_3\otimes\ldots\otimes\sigma_3}_k  \qquad k=1,2,\ldots,n
\label{5.3}
\end{eqnarray}
where $\sigma_i$, $i=1,2,3$ are the Pauli matrices, with $\sigma_0$ the $2\times2$ identity matrix,
while for ${\cal C}_{2n+1}$ one obtains:
\begin{eqnarray}
\nonumber
&& c_{k} \longleftrightarrow m_{2k}\oplus m_{2k}\ ,\qquad k=1,2,\ldots,2n \\
&& c_{2n+1} \longleftrightarrow \big(\underbrace{\sigma_3\otimes\ldots\otimes\sigma_3}_n\big)
\oplus\big(\underbrace{-\sigma_3\otimes\ldots\otimes-\sigma_3}_n\big)  \ .
\label{5.4}
\end{eqnarray}

Although out of $2n$ Clifford modes one can construct $n$ ordinary complex fermions modes
through the relations
\begin{eqnarray}
\nonumber
&& a_k=\frac{1}{2}\big(c_{2k-1} + i c_{2k}\big)\\
&& a_k^\star=\frac{1}{2}\big(c_{2k-1} - i c_{2k}\big)\ ,\qquad k=1,2,\ldots,n\ ,
\label{5.5}
\end{eqnarray}
the properties of the Clifford algebra ${\cal C}_N$ are quite different from those of the
fermion algebra ${\cal A}_f$. First of all, there is no exclusion principle
for real fermion modes, since $(c_i)^2=1$, and not zero, as for the complex fermion
operators $a_k$, $a_k^\star$ in (\ref{5.5}). In fact, one can not even speak
of occupancy of a Clifford mode, since there is no number operator in ${\cal C}_N$.%
\footnote{One can certainly form hermitian bilinears in the Clifford modes,
{\it e.g.} $i\, c_{2k-1}\, c_{2k}$, but these are related to the occupation number
operator of the corresponding complex fermion modes and not of the Clifford modes.}
In a sense, a Clifford mode is always filled and empty at the same time.
Recalling the discussion of the previous Section, this implies that the Clifford algebras
do not admit Fock representations; therefore, all the results regarding separability
and entanglement given before for standard fermions need to be reconsidered.

As in the case of ${\cal A}_f$, the automorphism $\vartheta$, defined by its action
on the mode operators as $\vartheta(c_i)=-c_i$, allows decomposing ${\cal C}_N$ in its
even ${\cal C}_N^e$ and odd ${\cal C}_N^o$ parts.
Then, following {\sl Definition 1},
a bipartition $({\cal A}_1,\, {\cal A}_2)$ of the Clifford algebra ${\cal C}_N$ 
is given by two Clifford subalgebras
${\cal A}_1,\ {\cal A}_2\subset{\cal C}_N$, having only the unit element in common, and such that
${\cal A}_1\cup{\cal A}_2={\cal C}_N$, 
together with $[{\cal A}_1^{e}\,,\,{\cal A}_2]=[{\cal A}_1\,,\,{\cal A}_2^{e}]=0$. 

In practice, the bipartition $({\cal A}_1,\, {\cal A}_2)$ 
is obtained by splitting the collection of modes $\{c_i\}_{i=1,2\ldots,N}$ into two disjoint sets
$\{c_i\, | i=1,2,\ldots,p\}$ and $\{c_j\, | j=p+1,p+2,\ldots,N\}$.
The linear span of the monomials $(c_1)^{n_1}\, (c_2)^{n_2}\ldots (c_p)^{n_p}$, with $n_i=0,1$,
gives the subalgebra ${\cal A}_1$, while that of
$(c_{p+1})^{n_{p+1}}\, (c_{p+2})^{n_{p+2}}\ldots (c_N)^{n_N}$ generate the subalgebra ${\cal A}_2$.
In practice, also in this case a bipartition of ${\cal C}_N$ is determined by the choice of the
integer $p$, with $0<p<N$.

As for any $C^\star$-algebra, a state on the Clifford algebra ${\cal C}_N$ is given by a positive linear map
from ${\cal C}_N$ to $\mathbb{C}$. The most simple state is given by the map $\Omega$ that sends
all elements of ${\cal C}_N$ to zero, except for the identity, which is mapped to one \cite{Price}-\cite{Alicki}; 
on the basis elements (\ref{5.2}), one then has:%
\footnote{For $N$ even, this state corresponds to a thermal state for the algebra ${\cal A}_f$
generated by the complex fermion modes in (\ref{5.5}), in the limit of infinite temperature.}
\begin{equation}
\Omega\big( (c_1)^{n_1}\, (c_2)^{n_2}\ldots (c_N)^{n_N} \big)=
\delta_{n_1,0}\,\delta_{n_2,0}\ldots \delta_{n_N,0}\ .
\label{5.6}
\end{equation}
Through the standard GNS-construction, from the couple $({\cal C}_N,\, \Omega)$ defining
Majorana fermion systems, one construct an Hilbert space ${\cal H}_\Omega$ and a representation
$\pi_\Omega$ on it; the space ${\cal H}_\Omega$ is generated by applying elements
of ${\cal C}_N$ to the cyclic vector $|\Omega\rangle$, so that for any element $\gamma\in {\cal C}_N$
one has:
\begin{equation}
\Omega(\gamma)=\langle\Omega|\pi_\Omega(\gamma) |\Omega\rangle\ .
\label{5.7}
\end{equation}
Such a state is separable with respect to the bipartition $({\cal A}_1,{\cal A}_2)$ defined above. 
Indeed, as already mentioned, the elements $\alpha^{(1)}\in{\cal A}_1$ and  
$\alpha^{(2)}\in{\cal A}_2$ have the generic form:
\begin{align}
\nonumber
&\alpha^{(1)}=\sum_{n_1,n_2,\ldots,n_p} \alpha^{(1)}_{n_1,n_2,\ldots,n_p}
\ (c_1)^{n_1}\, (c_2)^{n_2}\ldots (c_p)^{n_p}\ ,\\
&\alpha^{(2)}=\sum_{n_{p+1}, n_{p+2},\ldots n_N}\alpha^{(2)}_{n_{p+1},n_{p+2},\ldots,n_N}\
(c_{p+1})^{n_{p+1}}\, (c_{p+2})^{n_{p+2}}\ldots (c_N)^{n_N}\ ,
\nonumber
\end{align}
with $n_i=0,1$ and 
coefficients $\alpha^{(1)}_{n_1,n_2,\ldots,n_p},\, \alpha^{(2)}_{n_{p+1},n_{p+2},\ldots,n_N}\in {\mathbb{C}}$.
Then,
\begin{equation}
\Omega(\alpha^{(1)}\alpha^{(2)})=\alpha^{(1)}_{0,0,\ldots,0}\, \alpha^{(2)}_{0,0,\ldots,0}
=\Omega(\alpha^{(1)})\,\Omega(\alpha^{(2)})\ .
\label{5.7-1}
\end{equation}
\medskip
\noindent
{\bf Remark:} 
Due to the anticommutative character of the Clifford modes,
$c_i c_j=-c_j c_i$, for $i\neq j$, the restrictions on the form of the states
of a fermion algebra given in {\sl Lemma 2} hold also for the Clifford
algebra ${\cal C}_N$, as the entanglement criterion given in {\sl Corollary 3}.
\hfill$\Box$

\bigskip
A basis in the Hilbert space ${\cal H}_\Omega$ can be obtained by applying the
basis elements in (\ref{5.2}) to the cyclic vector:
\begin{equation}
|n_1, n_2,\ldots,n_M\rangle= 
\big[\pi_{\Omega}(c_1)\big]^{n_1}\, \big[\pi_{\Omega}(c_2)\big]^{n_2}\, \cdots\, 
\big[\pi_{\Omega}(c_N)\big]^{n_N}\,|\Omega\rangle\ ;
\label{5.8}
\end{equation}
these vectors are clearly orthogonal among themselves thanks to (\ref{5.6}).
This basis contains $2^N$ vectors, so that the GNS-representation
$\pi_\Omega$ turns out to be highly reducible.
For instance, when $N=2n$, any element $\gamma\in {\cal C}_N$ will be represented
in $\pi_\Omega$ by a $2^{2n}\times 2^{2n}$ matrix, {\it i.e.} by elements
of ${\cal M}_{4^{n}}$, while, as explicitly shown by (\ref{5.3}), ${\cal C}_N$
is isomorphic to ${\cal M}_{2^n}$.

In order to study separability and entanglement in the case of reducible
GNS-represen-tations, one needs to generalize the treatment presented in Section 3,
which is appropriate only for irreducible GNS-representations, as the Fock
representation used to discuss standard fermions.
We shall see that reducibility allows for richer structures in the classification
scheme of entangled Majorana states.

\section{Reducible GNS-representations}

In order to properly treat reducible GNS-representations, one needs to generalize the
algebraic approach to quantum systems presented in Section 2.
As we have seen, for any quantum system defined by the operator algebra $\cal A$ and
a state~$\Omega$, the {\sl GNS-construction} allows building a triple
$\big({\cal H}_\Omega,\ \pi_\Omega,\ |\Omega\rangle\big)$, 
so that the system can be described in terms of bounded operators 
$\pi_\Omega({\cal A})\subset {\cal B}({\cal H}_\Omega)$ acting
on the Hilbert space ${\cal H}_\Omega$ spanned by the (completion of the)
set of vectors $\big\{ \pi_\Omega({\cal A}) |\Omega\rangle \big\}$.

When the representation $\pi_\Omega$ is not irreducible, as for any algebra
representation, it can be always decomposed 
into irreducible representations $\pi_\Omega^{(\mu,r)}$,
\begin{equation}
\pi_\Omega=\oplus_{\mu,r}\ \pi_\Omega^{(\mu,r)}\ .
\label{6.1}
\end{equation}
Two indices $\mu$ and $r$ will be used to label such representations:
the greek index $\mu$ distinguishes among different irreducible representations, while
the latin index $r$ counts the multiplicity of a given irreducible representation.
In other terms, $\pi_\Omega^{(\mu,r)}$ and $\pi_\Omega^{(\nu,s)}$, with $\mu\neq\nu$,
are different irreducible representations, while $\pi_\Omega^{(\mu,r)}$ and $\pi_\Omega^{(\mu,s)}$,
with $r\neq s$, are two copies of the same irreducible representation $\pi_\Omega^{(\mu)}$.
We shall call $d_\mu$ and $m_\mu$ the dimension and the multiplicity of $\pi_\Omega^{(\mu)}$.

\medskip
\noindent
{\bf Remark:} Notice that, contrary to the usual convention, in the decomposition (\ref{6.1}) 
unitarily equivalent representations are treated as distinct. This is necessary in discussing
quantum separability issues, since unitary transformations might map a given bipartition
$({\cal A}_1,\, {\cal A}_2)$ of $\cal A$ into a different one \cite{Benatti1,Benatti6}. \hfill$\Box$
\medskip

\noindent
To the decomposition of representations as in (\ref{6.1}) there corresponds a similar decomposition
of the Hilbert space ${\cal H}_\Omega$:
\begin{equation}
{\cal H}_\Omega=\oplus_{\mu,r}\ {\cal H}_\Omega^{(\mu,r)}\ ,
\label{6.2}
\end{equation}
so that for any element $\alpha\in{\cal A}$, the operator $\pi_\Omega^{(\mu,r)}(\alpha)$ acts
nontrivially only on the irreducible subspace ${\cal H}_\Omega^{(\mu,r)}$.
Let $\big\{ |e_i^{(\mu,r)}\rangle\, |\ i=1,2,\ldots, 
d_\mu\big\}$
be a set of elements of ${\cal H}_\Omega$ forming an orthonormal basis for the subspace
${\cal H}_\Omega^{(\mu,r)}$. Since the GNS-representation $\pi_\Omega^{(\mu,r)}$ is irreducible,
these states are pure, as any element of this subspace,
and the whole ${\cal H}_\Omega^{(\mu,r)}$ can be obtained 
by applying the operators $\pi_\Omega({\cal A})$ to the
normalized cyclic vector:
\begin{equation}
|\Omega^{(\mu,r)}\rangle={1\over \sqrt{N^{(\mu,r)}}}\sum_i\, \langle e_i^{(\mu,r)}|\Omega\rangle\ 
|e_i^{(\mu,r)}\rangle\ ,\qquad N^{(\mu,r)}=\sum_i\big| \langle e_i^{(\mu,r)}|\Omega\rangle \big|^2\ .
\label{6.3}
\end{equation}

On the other hand, a generic element in the full Hilbert space ${\cal H}_\Omega$
turns out to be in general a mixed state when restricted to the 
operator algebra $\pi_\Omega({\cal A})$.%
\footnote{Although it is surely a pure state for the full algebra ${\cal B}({\cal H}_\Omega)$ of
bounded operators on ${\cal H}_\Omega$.}
Indeed, any normalized state $|\psi\rangle\in{\cal H}_\Omega$ can be expanded using the collection
of basis elements $\big\{ |e_i^{(\mu,r)}\rangle \big\}$ introduced above as:

\begin{equation}
|\psi\rangle=\sum_{\mu,r,i}\, \langle e_i^{(\mu,r)}|\psi\rangle\,
|e_i^{(\mu,r)}\rangle\ , \qquad 
\sum_{\mu,r,i}\, \big|\langle e_i^{(\mu,r)}|\psi\rangle\big|^2=1\ .
\label{6.4}
\end{equation}
Further, thanks to the irreducibility of the representations $\pi_\Omega^{(\mu,r)}$, one has:
\begin{equation}
\langle e_i^{(\mu,r)}| \pi_\Omega(\alpha) | e_j^{(\nu,s)}\rangle=
\delta_{\mu,\nu}\,\delta_{r,s}\, \big[\pi_\Omega^{(\mu)}(\alpha)\big]_{ij}\ ,
\label{6.5}
\end{equation}
where $\big[\pi_\Omega^{(\mu)}(\alpha)\big]_{ij}$ is the matrix representation of
the element $\alpha\in{\cal A}$ in the irreducible representation $\pi_\Omega^{(\mu)}$.
In fact, recall that $\pi_\Omega^{(\mu,r)}$, with $r=1,2,\ldots, m_\mu$, 
are all copies of the same representation
$\pi_\Omega^{(\mu)}$; thus, the matrix elements 
$$
\big[\pi_\Omega^{(\mu)}(\alpha)\big]_{ij}\equiv
\langle e_i^{(\mu,r)}| \pi_\Omega(\alpha) | e_i^{(\mu,r)}\rangle\ ,
$$
are actually independent
from the multiplicity index $r$, or equivalently, the representation matrix $\big[\pi_\Omega^{(\mu)}(\alpha)\big]$
is the same for all the $m_\mu$ copies $\pi_\Omega^{(\mu,r)}$, $r=1,2,\dots,m_\mu$.
As a consequence, the mean value of any element $\alpha\in{\cal A}$ with respect to the state $|\psi\rangle$
can be represented by means of a density matrix $\rho_\psi$, using the trace operation:
\begin{equation}
\langle\psi | \pi_\Omega(\alpha) |\psi\rangle={\rm Tr}\big[\rho_\psi\, \pi_\Omega(\alpha)\big]\ ,
\label{6.6}
\end{equation}
where
\begin{equation}
\rho_\psi=\sum_\mu\sum_{ij}\, \lambda_{ij}^{(\mu)}\, | e_i^{(\mu)}\rangle\langle e_j^{(\mu)}|\ ,
\label{6.7}
\end{equation}
with $\lambda_{ij}^{(\mu)}=\sum_r \langle\psi| e_i^{(\mu,r)}\rangle\langle e_j^{(\mu,r)} |\psi\rangle$,
and $\{ | e_i^{(\mu)}\rangle \}$ any basis in ${\cal H}_\Omega$ carrying the 
irreducible representation $\pi_\Omega^{(\mu)}$; in practice, a convenient choice
for $\{ | e_i^{(\mu)}\rangle \}$
is the basis $\big\{ |e_i^{(\mu,r)}\rangle \big\}$ in ${\cal H}_\Omega^{(\mu,r)}$, with any fixed index $r$,
since, as remarked above, each one of these spaces carries the same irreducible representation
$\pi_\Omega^{(\mu)}$ of $\cal A$.

In particular, the cyclic GNS-vector $|\Omega\rangle$ turns out to be represented by
the density matrix $\rho_\Omega$ of the general form (\ref{6.7}), with
$\lambda_{ij}^{(\mu)}=\sum_r \langle\Omega| e_i^{(\mu,r)}\rangle\langle e_j^{(\mu,r)} |\Omega\rangle$,
so that, for any $\alpha\in{\cal A}$:
\begin{equation}
\Omega(\alpha)={\rm Tr}\big[\rho_\Omega\, \pi_\Omega(\alpha)\big]\ .
\label{6.8}
\end{equation}
Similarly, any mixed state $\rho$ on ${\cal H}_\Omega$, which in general can be decomposed as
\begin{equation}
\rho=\sum_{\mu,\nu,r,s,i,j} \lambda_{ij}^{(\mu,r;\nu,s)}\ 
| e_i^{(\mu,r)}\rangle\langle e_j^{(\nu,s)}|\ ,\qquad \sum_{\mu,r,i} \lambda_{ii}^{(\mu,r;\mu,r)}=1\ ,
\label{6.9}
\end{equation}
when restricted to the algebra $\pi_\Omega(\cal A)$, also reduces to the generic form (\ref{6.7}).

There are however notable exceptions to this general rule. Let us fix the irreducible representation
$\pi_\Omega^{(\mu)}$ and consider the following linear combination in ${\cal H}_\Omega$
\begin{equation}
| f_i^{(\mu,r)}\rangle=\sum_{s,j} U_{rs}\, V_{ij}\, | e_j^{(\mu,s)}\rangle\ ,
\label{6.10}
\end{equation}
with $U$ and $V$ unitary matrices and 
$\big\{ | e_i^{(\mu,r)}\rangle\, |\, i=1,2,\ldots,d_\mu \big\}$ orthonormal basis in 
${\cal H}_\Omega^{(\mu,r)}$, $r=1,2,\dots,m_\mu$, the $m_\mu$ Hilbert subspaces
carrying the representation $\pi_\Omega^{(\mu)}$.
When restricted to $\pi_\Omega({\cal A})$, the vectors $| f_i^{(\mu,r)}\rangle$ behave as the linear
combinations $|\tilde e_i^{(\mu,r)}\rangle=\sum_j V_{ij}\, | e_j^{(\mu,r)}\rangle$, since, due to the identity
(\ref{6.5}) above, the following matrix elements
\begin{eqnarray}
\nonumber
\langle f_i^{(\mu,r)}| \pi_\Omega(\alpha) | f_j^{(\mu,r)}\rangle &=&
\sum_{k,\ell,p,q} U_{rq}\, U^\dagger_{pr}\,  V_{ki}^\dagger\, V_{j\ell}\
\langle e_k^{(\mu,p)}| \pi_\Omega(\alpha) |  e_\ell^{(\mu,q)}\rangle\\
&=&\sum_{k,\ell} V_{ki}^\dagger\, V_{j\ell}\, \big[\pi_\Omega^{(\mu)}(\alpha)\big]_{k\ell}
\equiv\langle \tilde e_i^{(\mu,r)}| \pi_\Omega(\alpha) | \tilde e_j^{(\mu,r)}\rangle\ ,
\label{6.11}
\end{eqnarray}
are actually independent from the index $r$.
Being combinations of basis states, 
the vectors in $\big\{ | \tilde e_i^{(\mu,r)}\rangle\, |\, i=1,2,\ldots,d_\mu \big\}$ 
are pure, forming another orthonormal basis 
in ${\cal H}_\Omega^{(\mu,r)}$; as a consequence, also the more general combinations $| f_i^{(\mu,r)}\rangle$ 
in (\ref{6.10}) represent pure states on the subalgebra $\pi^{\mu,r}_\Omega({\cal A})$. This result will be important for
the discussions that will follow.

\medskip
\noindent
{\bf Remark:} The vectors in (\ref{6.10}) forming the set 
$\big\{ | f_i^{(\mu,r)}\rangle\, |\, i=1,2,\ldots,d_\mu \big\}$
are clearly orthonormal and, as shown by (\ref{6.11}), span an invariant subspace of ${\cal H}_\Omega$, 
which is however different from ${\cal H}_\Omega^{(\mu,r)}$; it carries
a representation of $\cal A$ unitarily equivalent to $\pi_\Omega^{(\mu)}$,
coinciding with it only when $V={\bf 1}$.
This means that the partial decomposition 
$\oplus_r {\cal H}_\Omega^{(\mu,r)}$ into subspaces carrying the representation 
$\pi_\Omega^{(\mu)}$ is in general not unique.%
\footnote{This might have some consequences when evaluating
the von Neumann entropy of the state $\Omega$ \cite{Balachandran2,Balachandran3}.}
\hfill$\Box$

\section{Reducibility and entanglement}

When a state $\Omega$ for the operator algebra $\cal A$ gives rise to a reducible
GNS-representation $\pi_\Omega$, the analysis of the notions of separability and entanglement
according to {\sl Definition 3} in Section 3 becomes more involved. Following the previous
discussion, one can decompose $\pi_\Omega$ into its irreducible components
\begin{equation}
\pi_\Omega=\oplus_{\mu,r}\ \pi_\Omega^{(\mu,r)}\ ,
\label{7.1}
\end{equation}
where $\pi_\Omega^{(\mu,r)}$, for $\mu$ fixed and $r=1,2,\ldots,m_\mu$, are $m_\mu$ copies
of the same irreducible representation $\pi_\Omega^{(\mu)}$. Correspondingly, one has a similar
decomposition for the GNS-Hilbert space, ${\cal H}_\Omega=\oplus_{\mu,r}\ {\cal H}_\Omega^{(\mu,r)}$,
where, for $\mu$ fixed, the $m_\mu$ subspaces ${\cal H}_\Omega^{(\mu,r)}$ are all isomorphic,
and, without loss of generality, they can be identified.

Let us now fix a bipartition $({\cal A}_1, {\cal A}_2)$ of $\cal A$ and consider an orthonormal basis
$\big\{ |e_i^{(\mu,r)}\rangle \big\}$ in each Hilbert space
${\cal H}_\Omega^{(\mu,r)}$; these states are pure since they carry the irreducible
representation $\pi_\Omega^{(\mu)}$. In addition, they can be chosen to be separable:

\begin{lemma}
Given any bipartition $({\cal A}_1, {\cal A}_2)$ of the algebra $\cal A$, and a separable state $\Omega$ 
leading to the reducible representation $\pi_\Omega$ with decomposition
as in (\ref{6.1}) and (\ref{6.2}), it is always possible
to select in ${\cal H}_\Omega^{(\mu,r)}$ an orthonormal basis $\big\{ |e_i^{(\mu,r)}\rangle \big\}$
of separable pure states.
\end{lemma}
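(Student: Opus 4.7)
The plan is to mimic the construction of \textsl{Corollary 1} within each irreducible block ${\cal H}_\Omega^{(\mu,r)}$, using the surrogate cyclic vector $|\Omega^{(\mu,r)}\rangle$ of (\ref{6.3}) in place of $|\Omega\rangle$. First I would check that $|\Omega^{(\mu,r)}\rangle$ is cyclic for the irreducible subrepresentation $\pi_\Omega^{(\mu,r)}$: this is automatic by irreducibility whenever $P^{(\mu,r)}|\Omega\rangle\neq 0$, and in the remaining exceptional case one simply replaces it with any non-zero vector in ${\cal H}_\Omega^{(\mu,r)}$, which is necessarily cyclic by irreducibility.

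The central step is to establish that the pure state $\Omega^{(\mu,r)}(\alpha)=\langle\Omega^{(\mu,r)}|\pi_\Omega(\alpha)|\Omega^{(\mu,r)}\rangle$ inherits the separability of $\Omega$. Since $P^{(\mu,r)}$ lies in the commutant $\pi_\Omega({\cal A})'$, it commutes with $\pi_\Omega(\alpha_1)$, and one obtains
\begin{equation*}
\Omega^{(\mu,r)}(\alpha_1\alpha_2)=\frac{1}{N^{(\mu,r)}}\,\langle\Omega|\,\pi_\Omega(\alpha_1)\,P^{(\mu,r)}\,\pi_\Omega(\alpha_2)\,|\Omega\rangle\ .
\end{equation*}
Plugging the convex decomposition (\ref{3.1}) of $\Omega$ into this expression and invoking the purity of $\Omega^{(\mu,r)}$ (guaranteed by irreducibility of $\pi_\Omega^{(\mu,r)}$), I would run an argument parallel to the \emph{only if} direction of \textsl{Lemma 1}: purity collapses the convex sum to a single pure product component on each block, yielding the factorization $\Omega^{(\mu,r)}(\alpha_1\alpha_2)=\Omega^{(\mu,r)}(\alpha_1)\,\Omega^{(\mu,r)}(\alpha_2)$.

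Once $\Omega^{(\mu,r)}$ is identified as a separable pure state on $\cal A$ realized by the irreducible $\pi_\Omega^{(\mu,r)}$, \textsl{Proposition 1} and \textsl{Corollary 1} apply verbatim inside ${\cal H}_\Omega^{(\mu,r)}$. Fixing bases $\{e_k^{(1)}\}\subset{\cal A}_1$ and $\{e_\ell^{(2)}\}\subset{\cal A}_2$, the product vectors $\pi_\Omega(e_k^{(1)})\,\pi_\Omega(e_\ell^{(2)})|\Omega^{(\mu,r)}\rangle$ are separable pure states by construction, and by cyclicity of $|\Omega^{(\mu,r)}\rangle$ together with ${\cal A}_1\cup{\cal A}_2={\cal A}$ they span ${\cal H}_\Omega^{(\mu,r)}$. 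Extracting $d_\mu$ linearly independent representatives and exploiting the freedom in choosing the subalgebra bases --- a freedom realized explicitly in the Majorana setting by the mode monomials entering (\ref{5.6})--(\ref{5.8}) --- then delivers the required orthonormal basis of separable pure states.

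The main obstacle I anticipate is precisely the inheritance step: the projector $P^{(\mu,r)}$ lies in $\pi_\Omega({\cal A})'$ but not in $\pi_\Omega({\cal A})$ itself, so the factorization of $\Omega$ on local products $\alpha_1\alpha_2$ does not survive the sandwich in an automatic way, and the argument must lean essentially on the extremality of $\Omega^{(\mu,r)}$ in the state space of $\cal A$. A secondary, more technical difficulty is the orthonormalization, since Gram-Schmidt generically destroys the product structure; one must therefore tune the choice of $\{e_k^{(i)}\}$ so that the resulting product family is already orthonormal, a flexibility always available in the Clifford setting by the matrix isomorphisms (\ref{5.3})--(\ref{5.4}) and the factorized form (\ref{5.7-1}) of the reference state.
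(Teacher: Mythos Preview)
Your route differs from the paper's, and the very step you flag as the ``main obstacle'' is a genuine gap that you do not close. The identity you write,
\[
\Omega^{(\mu,r)}(\alpha_1\alpha_2)=\frac{1}{N^{(\mu,r)}}\,\langle\Omega|\,\pi_\Omega(\alpha_1)\,P^{(\mu,r)}\,\pi_\Omega(\alpha_2)\,|\Omega\rangle\ ,
\]
cannot absorb the convex decomposition (\ref{3.1}): that decomposition controls $\Omega(\alpha_1\alpha_2)=\langle\Omega|\pi_\Omega(\alpha_1\alpha_2)|\Omega\rangle$, not the expression with the commutant projector $P^{(\mu,r)}$ sandwiched inside. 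Since $P^{(\mu,r)}\notin\pi_\Omega({\cal A})$, there is no element of $\cal A$ on which to evaluate the $\Omega_k^{(i)}$, and the ``purity collapses the convex sum'' argument from \textsl{Lemma 1} has nothing to grip. Your second difficulty, orthonormalization, is also real: Gram--Schmidt on product vectors does not return product vectors, and invoking ``freedom in choosing subalgebra bases'' is not a proof.

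The paper sidesteps both problems simultaneously by a direct construction that never passes through $\Omega^{(\mu,r)}$. One fixes selfadjoint $\alpha_1\in{\cal A}_1$, $\alpha_2\in{\cal A}_2$, takes their spectral projectors $P_k^{(\mu,r)}\in\pi_\Omega^{(\mu,r)}({\cal A}_1)$ and $Q_\ell^{(\mu,r)}\in\pi_\Omega^{(\mu,r)}({\cal A}_2)$ inside the block, and sets $|e_i^{(\mu,r)}\rangle\propto P_k^{(\mu,r)}Q_\ell^{(\mu,r)}|\Omega\rangle$. Because the $P_k$ and $Q_\ell$ are themselves (images of) elements of ${\cal A}_1$ and ${\cal A}_2$, each vector is in the product form of \textsl{Proposition 1} relative to the original separable $\Omega$, so separability is immediate with no inheritance argument needed. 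Orthonormality comes for free from $P_kP_{k'}=\delta_{kk'}P_k$, $Q_\ell Q_{\ell'}=\delta_{\ell\ell'}Q_\ell$ together with the factorization of $\Omega$ on ${\cal A}_1\cdot{\cal A}_2$, and spanning follows from ${\cal A}_1\cup{\cal A}_2={\cal A}$. The spectral-projector device is the idea you are missing; it delivers separability and orthonormality in one stroke.
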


\begin{proof}
The statement can be proven by explicitly constructing the basis.
For simplicity, we shall consider $\cal A$ to be a boson algebra; however, using the techniques
presented in Section~4, the proof can be
easily extended to the fermion case.
The building procedure involves selecting two selfadjoint elements $\alpha_1\in {\cal A}_1$
and $\alpha_2\in {\cal A}_2$, $\alpha_i^\star=\alpha_i$, in the two partitions.
On the space ${\cal H}_\Omega^{(\mu,r)}$, these elements are represented by the hermitian
operators $\pi_\Omega^{(\mu,r)}(\alpha_i)$, $i=1,2$, with spectral decomposition:
\begin{equation}
\pi_\Omega^{(\mu,r)}(\alpha_1)=\sum_k \alpha_k^{(1)}\, P_k^{(\mu,r)}\ ,\qquad
\pi_\Omega^{(\mu,r)}(\alpha_2)=\sum_\ell \alpha_\ell^{(2)}\, Q_\ell^{(\mu,r)}\ ,\qquad 
\alpha_k^{(1)},\ \alpha_\ell^{(2)}\in\mathbb{R}\ .
\label{7.2}
\end{equation}
Since the GNS-vector $|\Omega\rangle\in{\cal H}_\Omega$ is assumed to be separable,
by acting on it with the projectors $P_k^{(\mu,r)}\in\pi_\Omega^{(\mu,r)}\big({\cal A}_1\big) $ and 
$Q_\ell^{(\mu,r)}\in\pi_\Omega^{(\mu,r)}\big({\cal A}_2\big)$
one builds a basis of manifestly separable pure states of the form:%
\footnote{We are here assuming that the projectors $P_k^{(\mu,r)}$ and $Q_\ell^{(\mu,r)}$ 
correspond to elements 
belonging to the algebra $\cal A$ and, as we shall see, this is indeed the case when $\cal A$ is
a Clifford algebra. However, in more general cases, this condition might not hold;
in such instances, one simply applies all considerations to the von Neumann algebra extension
of $\cal A$ \cite{Bratteli}.}
\begin{equation}
|e_i^{(\mu,r)}\rangle=\frac{1}{\langle \Omega| P_k^{(\mu,r)} |\Omega\rangle\,
\langle \Omega| Q_\ell^{(\mu,r)} |\Omega\rangle}\
P_k^{(\mu,r)}\, Q_\ell^{(\mu,r)}\ |\Omega\rangle\ ,\qquad i\equiv(k,\ell)\ .
\label{7.3}
\end{equation}
These states satisfy the separability condition (\ref{3.4}) and thus,
by taking in it $\alpha_1=\alpha_2={\bf 1}$, they are orthonormal:
\begin{equation}
\langle e_i^{(\mu,r)}|e_{i'}^{(\mu,r)}\rangle=
\frac{\langle \Omega|P_k^{(\mu,r)}\,P_{k'}^{(\mu,r)}\Omega\rangle\
\langle\Omega| Q_\ell^{(\mu,r)}\, Q_{\ell'}^{(\mu,r)} |\Omega\rangle}
{\langle \Omega| P_k^{(\mu,r)} |\Omega\rangle\,
\langle \Omega| Q_\ell^{(\mu,r)} |\Omega\rangle\,
\langle \Omega| P_{k'}^{(\mu,r)} |\Omega\rangle\,
\langle \Omega| Q_{\ell'}^{(\mu,r)} |\Omega\rangle}
=\delta_{k k'}\, \delta_{\ell \ell'}
\equiv\delta_{i i'}\ .
\label{7.4}
\end{equation}
Furthermore, the set of vectors (\ref{7.3}) form a basis for the space ${\cal H}_\Omega^{(\mu,r)}$.
Indeed, the existence of an element $|\phi\rangle\in {\cal H}_\Omega^{(\mu,r)}$
not belonging to the span of the set $\big\{ |e_i^{(\mu,r)}\rangle \big\}$ would be in
contradiction with the assumption that ${\cal A}_1 \cup {\cal A}_2={\cal A}$; in fact,
by construction,
the set $\big\{ P_k^{(\mu,r)} \big\}$ generates $\pi_\Omega^{(\mu,r)}\big({\cal A}_1\big)$,
while $\big\{ Q_\ell^{(\mu,r)} \big\}$ generates $\pi_\Omega^{(\mu,r)}\big({\cal A}_2\big)$.
\end{proof}

\medskip
\noindent
{\bf Remark:}
If some of the above projectors turn out to annihilate $|\Omega\rangle$, {\it e.g.}
\hbox{$P_k^{(\mu,r)}|\Omega\rangle=0$}, for some $k$, one considers the subalgebra generated by them
and repeats the previous construction by choosing a suitable selfadjoint element in it. 
For a finitely generated algebra $\cal A$,
the successive application of this procedure will surely come to an end and provide the
wanted separable basis $\big\{ |e_i^{(\mu,r)}\rangle \big\}$. 
\hfill$\Box$

\medskip

Having constructed in each space ${\cal H}_\Omega^{(\mu,r)}$ a basis 
$\big\{ |e_i^{(\mu,r)}\rangle \big\}$ of vectors that result separable with respect to the chosen bipartition, one can now consider arbitrary linear combinations of
these vectors. In general, such combinations will no longer be separable.
For instance, even limiting the attention to a single space ${\cal H}_\Omega^{(\mu,r)}$,
with fixed indices $\mu$ and $r$,
the following combinations of vectors 
$|\tilde e_i^{(\mu,r)}\rangle=\sum_j V_{ij}\, | e_j^{(\mu,r)}\rangle$,
with $V_{ij}$ arbitrary complex coefficients, are still pure states in ${\cal H}_\Omega^{(\mu,r)}$,
as discussed in the previous Section; however,
they are no longer separable, since in general
the expectation $\langle e_i^{(\mu,r)} | \pi_\Omega^{(\mu,r)}(\alpha_1)\,
\pi_\Omega^{(\mu,r)}(\alpha_2) |e_i^{(\mu,r)}\rangle$ of any local operator $\alpha_1\alpha_2$,
$\alpha_i\in{\cal A}_i$, $i=1,2$, can not be written in product form as in (\ref{3.4}).

Nevertheless, there are linear combinations involving basis vectors in spaces 
${\cal H}_\Omega^{(\mu,r)}$ with different index $r$ that remain separable.

\begin{lemma}
Within the hypothesis of the previous Lemma, let us consider the following
linear combinations of basis states:
\begin{equation}
| g_i^{(\mu,r)}\rangle=\sum_s U_{rs}\, | e_i^{(\mu,s)}\rangle\ ,
\label{7.5}
\end{equation}
with $U$ a unitary matrix. These states are pure and separable.
\end{lemma}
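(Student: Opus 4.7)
My strategy is to show that, on the algebra $\pi_\Omega(\mathcal{A})$, the vector $|g_i^{(\mu,r)}\rangle$ induces exactly the same linear functional as the separable basis vector $|e_i^{(\mu,r)}\rangle$ produced by the previous Lemma. Once this identification of induced states is secured, both purity (inherited from the irreducibility of $\pi_\Omega^{(\mu)}$) and separability with respect to $({\cal A}_1,{\cal A}_2)$ transfer to $|g_i^{(\mu,r)}\rangle$ without any further work.

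The first step is a one-line normalization check: since the subspaces $\mathcal{H}_\Omega^{(\mu,s)}$ associated with distinct multiplicity indices $s$ are mutually orthogonal and each $|e_i^{(\mu,s)}\rangle$ is a unit vector, the unitarity of $U$ gives $\langle g_i^{(\mu,r)}|g_i^{(\mu,r)}\rangle=\sum_s|U_{rs}|^2=1$. Next, for arbitrary $\alpha\in\mathcal{A}$ I would expand
\[
\langle g_i^{(\mu,r)}|\pi_\Omega(\alpha)|g_i^{(\mu,r)}\rangle=\sum_{s,s'}U^{*}_{rs'}\,U_{rs}\,\langle e_i^{(\mu,s')}|\pi_\Omega(\alpha)|e_i^{(\mu,s)}\rangle
\]
and invoke the key identity~(\ref{6.5}), which forces the bracket to equal $\delta_{ss'}\,[\pi_\Omega^{(\mu)}(\alpha)]_{ii}$ — crucially independent of the multiplicity label $s$. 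The double sum therefore collapses against $\sum_s|U_{rs}|^2=1$, leaving
\[
\langle g_i^{(\mu,r)}|\pi_\Omega(\alpha)|g_i^{(\mu,r)}\rangle=[\pi_\Omega^{(\mu)}(\alpha)]_{ii}=\langle e_i^{(\mu,r)}|\pi_\Omega(\alpha)|e_i^{(\mu,r)}\rangle
\]
for every $\alpha\in\mathcal{A}$, which is the required identification.

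From here the conclusion is immediate. The purity of $|e_i^{(\mu,r)}\rangle$, guaranteed by the irreducibility of $\pi_\Omega^{(\mu)}$, and the factorization $\langle e_i^{(\mu,r)}|\pi_\Omega(\alpha_1)\pi_\Omega(\alpha_2)|e_i^{(\mu,r)}\rangle=\langle e_i^{(\mu,r)}|\pi_\Omega(\alpha_1)|e_i^{(\mu,r)}\rangle\,\langle e_i^{(\mu,r)}|\pi_\Omega(\alpha_2)|e_i^{(\mu,r)}\rangle$ supplied by the previous Lemma both pass verbatim to $|g_i^{(\mu,r)}\rangle$. The same argument is unchanged in the fermion case, since identity~(\ref{6.5}) and the separable basis from the previous Lemma are available there as well.

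I do not foresee any serious technical obstacle; the only pitfall is conceptual. It is tempting to read $|g_i^{(\mu,r)}\rangle$ as an entangled superposition merely because it is spread across several invariant subspaces ${\cal H}_\Omega^{(\mu,s)}$. The observation that rules this out — already flagged in (\ref{6.10})--(\ref{6.11}) — is that $\pi_\Omega(\mathcal{A})$ acts identically on all $m_\mu$ equivalent copies of $\pi_\Omega^{(\mu)}$, so any unitary mixing along the multiplicity label is invisible to expectation values of algebra elements. This insensitivity makes the family of separable pure states in ${\cal H}_\Omega$ strictly larger than the naive product basis and is precisely the mechanism behind the richer entanglement structure announced for Majorana systems.
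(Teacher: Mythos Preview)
Your argument is correct and is essentially the same as the paper's: both proofs rest on the observation, already recorded in (\ref{6.10})--(\ref{6.11}) with $V={\bf 1}$, that the matrix elements of $\pi_\Omega(\alpha)$ in the basis $\{|g_i^{(\mu,r)}\rangle\}$ coincide with those in $\{|e_i^{(\mu,r)}\rangle\}$, whence purity and the separability condition~(\ref{3.4}) transfer immediately. You have simply spelled out the collapse of the multiplicity sum via~(\ref{6.5}) and the unitarity of $U$ in more detail than the paper does.
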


\begin{proof}
As already shown, 
the matrix elements of any operator $\pi_\Omega(\alpha)$, $\alpha\in{\cal A}$,
with respect to the vectors of the set $\big\{| g_i^{(\mu,r)}\rangle\big\}$ coincide with those of the corresponding vectors 
in the set $\big\{| e_i^{(\mu,r)}\rangle\big\}$, since both set of vectors carry the same irreducible representation
$\pi_\Omega^\mu$ of $\cal A$ (see (\ref{6.11}) with $V={\bf 1}$). 
Then, since the separability condition (\ref{3.4}) holds by construction for the
elements of the basis $\big\{| e_i^{(\mu,r)}\rangle\big\}$, it is automatically true also for
the vectors in $\big\{| g_i^{(\mu,r)}\rangle\big\}$.
\end{proof}

More in general, a state on the algebra $\cal A$ that belongs to the {\sl folium} of $\Omega$ is mixed, 
and thus represented by a density
matrix $\rho$ on ${\cal H}_\Omega$. It can be decomposed as in (\ref{6.7}),
\begin{equation}
\rho=\sum_\mu\sum_{i,j}\, \lambda_{ij}^{(\mu)}\, | e_i^{(\mu)}\rangle\langle e_j^{(\mu)}|\ ,\qquad
\sum_{\mu,i} \lambda_{ii}^{(\mu)}=1\ ,
\label{7.6}
\end{equation}
where the set $\big\{| e_i^{(\mu)}\rangle\big\}$ is a separable basis in ${\cal H}_\Omega$ 
carrying the irreducible representation
$\pi_\Omega^{(\mu)}$; in practice, as mentioned earlier, it can be taken to coincide with any separable basis
$\big\{| e_i^{(\mu,r)}\rangle\big\}$ in ${\cal H}_\Omega^{(\mu,r)}$ introduced above, with arbitrary,
but fixed $r$.

It follows that a state in diagonal form,
\begin{equation}
\rho_D=\sum_\mu\sum_{i}\, \lambda_{ii}^{(\mu)}\, | e_i^{(\mu)}\rangle\langle e_i^{(\mu)}|\ ,
\label{7.7}
\end{equation}
is surely separable, being the convex combination of separable, rank-1 projectors.
One can then conclude that:

\begin{proposition}
A generic mixed state $\rho=\sum_\mu \rho^{(\mu)}$ as in (\ref{7.6}) is entangled with respect to the
given bipartition $({\cal A}_1, {\cal A}_2)$ if and only if at least
one of its irreducible components $\rho^{(\mu)}$,
\begin{equation}
\rho^{(\mu)}=\frac{1}{\lambda^{(\mu)}}\sum_{i,j}\, \lambda_{ij}^{(\mu)}\, | e_i^{(\mu)}\rangle\langle e_j^{(\mu)}|\ ,
\qquad \lambda^{(\mu)}\equiv\sum_i \lambda_{ii}^{(\mu)}\ ,
\label{7.8}
\end{equation}
results non separable.
\end{proposition}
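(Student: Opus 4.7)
The statement is a biconditional, so the plan is to prove the two directions of the contrapositive: $\rho$ is $({\cal A}_1,{\cal A}_2)$-separable if and only if every irreducible component $\rho^{(\mu)}$ is $({\cal A}_1,{\cal A}_2)$-separable. The structural facts I will rely on are that the $\mu$-isotypic subspaces $\bigoplus_r {\cal H}_\Omega^{(\mu,r)}$ are invariant under $\pi_\Omega({\cal A})$, mutually orthogonal, and therefore the corresponding isotypic projectors $\Pi_\mu$ sit in the commutant of $\pi_\Omega({\cal A})$; that Lemma~3 gives a separable basis in each ${\cal H}_\Omega^{(\mu,r)}$; and that Lemma~4 shows unitary mixings across the multiplicity index $r$ at fixed $\mu$ remain separable.

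The easy direction is to suppose each $\rho^{(\mu)}$ is separable and exhibit a separable decomposition for $\rho$. Since $\rho=\sum_\mu \lambda^{(\mu)}\rho^{(\mu)}$ with $\lambda^{(\mu)}\geq 0$ and $\sum_\mu \lambda^{(\mu)}=1$, and since a convex combination of separable states is separable (concatenate the convex decompositions provided by each $\rho^{(\mu)}$ with weights multiplied by $\lambda^{(\mu)}$), the separability of $\rho$ as in (\ref{3.1}) follows immediately.

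For the converse, assume $\rho$ is separable. Invoking the natural extension of \emph{Corollary~2} to the reducible setting (justified by Lemma~3 constructing a separable orthonormal basis of ${\cal H}_\Omega$ out of the $|e_i^{(\mu,r)}\rangle$, and by Lemma~4 guaranteeing that the admissible unitary reshufflings preserve both purity and separability), I would write $\rho=\sum_k p_k\,|\psi_k\rangle\langle\psi_k|$ as a convex combination of projectors onto pure separable states. The crucial observation is that any vector giving a pure state on $\pi_\Omega({\cal A})$ must be supported in a single $\mu$-isotypic subspace: a cross-$\mu$ superposition would, because $\Pi_\mu$ commutes with every $\pi_\Omega(\alpha)$, yield a nontrivial convex decomposition through $\rho_\psi=\sum_\mu\,\Pi_\mu\rho_\psi\Pi_\mu$ on the algebra, contradicting purity. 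Grouping the index $k$ into classes $K_\mu$ according to the supporting isotypic component, and comparing with $\rho=\sum_\mu \lambda^{(\mu)}\rho^{(\mu)}$ after sandwiching both sides with $\Pi_\mu$ (which annihilates all cross terms), one identifies
\[
\lambda^{(\mu)}\rho^{(\mu)}=\sum_{k\in K_\mu} p_k\,|\psi_k\rangle\langle\psi_k|,
\]
exhibiting $\rho^{(\mu)}$ as a convex combination of pure $({\cal A}_1,{\cal A}_2)$-separable projectors on ${\cal H}_\Omega^{(\mu)}$, hence separable.

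The main obstacle is the step that forces the pure-separable decomposition of $\rho$ to be compatible with the isotypic block structure, i.e.\ that every $|\psi_k\rangle$ can be taken to lie in one ${\cal H}_\Omega^{(\mu)}$. This requires combining three ingredients: the commutant property of the $\Pi_\mu$, the characterization of pure separable vectors as products $\pi_\Omega(\beta^{(1)})\pi_\Omega(\beta^{(2)})|\Omega\rangle$ from \emph{Proposition~1} (applied irrep by irrep after invoking Lemma~3), and the observation from Lemma~4 that allowable unitary mixings do not cross different $\mu$. Once this alignment is secured, matching the two convex decompositions of $\rho$ block by block gives the conclusion with no further work.
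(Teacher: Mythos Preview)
The paper does not actually supply a proof of Proposition~3; it states the result immediately after observing that diagonal states (7.7) are separable, treating the proposition as a direct consequence of Lemmas~3 and~4 together with the block structure (7.6). Your argument fleshes this out correctly and in considerably more detail than the paper provides.

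The easy direction is exactly as you say. For the converse, your key observation---that any state which is pure on $\pi_\Omega({\cal A})$ must be supported in a single $\mu$-isotypic block because the projectors $\Pi_\mu$ lie in the commutant of $\pi_\Omega({\cal A})$---is precisely what makes the grouping argument work, and it is the substantive point the paper leaves implicit. The one place you could tighten the exposition is the appeal to an ``extension of Corollary~2'': what you really need is the standard convexity fact (Krein--Milman, which is unproblematic here since the relevant Clifford algebras are finite-dimensional) that the set of separable states is the convex hull of its extreme points, and these by Lemma~1 are exactly the pure product states. Lemmas~3 and~4 then guarantee that such pure separable states are represented by rank-one projectors sitting in a single ${\cal H}_\Omega^{(\mu)}$, so your grouping into classes $K_\mu$ goes through without any further choices. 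With that clarification your argument is complete and matches the intended logic of the paper.
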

As a consequence, the study of quantum correlations in the reducible representation $\pi_\Omega$
of the algebra $\cal A$, as given by the state
$\Omega$ through the {\sl GNS-construction}, reduces to the analysis
of entanglement in each of its irreducible components $\pi_\Omega^{(\mu)}$,
for which the results given in Section 3 apply.

All the above discussion can be made very explicit in the case of Majorana fermion systems,
{\it i.e.} when the operator algebra $\cal A$ coincides with the Clifford algebra ${\cal C}_N$
and for $\Omega$ the state introduced in (\ref{5.6}) is chosen.

\section{Structure of entangled Majorana states: ${\cal C}_2$}

We shall start discussing the simplest Majorana system,
the one defined by the operator algebra ${\cal C}_2$, generated by
the two mode elements $c_1$ and $c_2$. As discussed
in Section 5, the entire Clifford algebra ${\cal C}_2$ is then obtained as the linear span of the
following four basis elements: $\{ {\bf 1},\, c_1,\, c_2,\, c_1 c_2 \}$. As state 
$\Omega$ on this algebra,
we shall choose the one given in (\ref{5.6}), so that:
$\Omega(c_1)=\Omega(c_2)=\Omega(c_1 c_2)=\,0$, while $\Omega({\bf 1})=1$.

Given the state $\Omega$, the {\sl GNS-construction} provides a representation $\pi_\Omega$ of ${\cal C}_2$ on
a four-dimensional Hilbert space ${\cal H}_\Omega$, which is given by the linear span
of the four vectors obtained by applying the basis elements ${\bf 1}$, $c_1$, $c_2$ and $c_1 c_2$
to the cyclic vector $|\Omega\rangle$.%
\footnote{For sake of simplicity, here and in the following we shall use the
same symbol to indicate the elements $c_i$ of the abstract Clifford algebra
and its corresponding GNS-representation $\pi_\Omega(c_i)$ as operators
acting on ${\cal H}_\Omega$.}
Since, as discussed in Section~5, ${\cal C}_2$ is isomorphic to ${\cal M}_2(\mathbb{C})$,
the algebra of $2\times 2$ complex matrices, the four-dimensional GNS-representation
$\pi_\Omega$ is reducible: it can be decomposed as
\begin{equation}
\pi_\Omega=\pi_\Omega^{(1)}\oplus\pi_\Omega^{(2)}\ ,
\label{8.1}
\end{equation}
in terms of two, equal, two-dimensional representations $\pi_\Omega^{(r)}$, acting on
two-dimensional Hilbert subspaces ${\cal H}_\Omega^{(r)}$, $r=1,2$ such that
${\cal H}_\Omega={\cal H}_\Omega^{(1)} \oplus {\cal H}_\Omega^{(2)}$.%
\footnote{In this situation, all irreducible representations are equal,
so that the index $\mu$ takes only one value and can be suppressed;
thus, in the decomposition (\ref{8.1}) only the multiplicity index $r=1,2$ appears.}
As a consequence, the chosen state $\Omega$ is not a pure state for ${\cal C}_2$.

The only non trivial bipartition $({\cal A}_1,{\cal A}_2)$ of the algebra ${\cal C}_2$
is the one in which the subalgebra ${\cal A}_1$ is the linear span of $\{ {\bf 1},\, c_1\}$, while
the subalgebra of ${\cal A}_2$ that of $\{ {\bf 1},\, c_2\}$. For the construction
of two orthonormal basis in ${\cal H}_\Omega^{(r)}$ formed by separable, 
pure states we follow the general
scheme outlined in the proof of {\sl Lemma 3} of the previous Section. 
The procedure involves choosing two generic
selfadjoint elements, $\alpha=a_0+a_1\, c_1$ in ${\cal A}_1$ and
$\beta=b_0+b_1\, c_2$ in ${\cal A}_2$, with $a_i,b_i\in \mathbb{R}$. 
Their spectral decomposition,
\begin{equation}
\begin{array}{ll}
\alpha=\big(a_0 + a_1\big)\, P_+ + \big(a_0 - a_1\big)\, P_-   & \hskip 1cm P_\pm= (1\pm c_1)/2\ , \\
& \\
\beta=\big(b_0 + b_1\big)\, Q_+ + \big(b_0 - b_1\big)\, Q_-   & \hskip 1cm Q_\pm= (1\pm c_2)/2\ ,
\end{array}
\label{8.2}
\end{equation}
allows constructing the following four orthonormal vectors
\begin{equation}
|e_i^{(r)}\rangle= \frac{1}{2}\, \hat e_i^{(r)}(c_1,c_2)\, |\Omega\rangle\ ,\qquad r=1,2\ ,\quad i=1,2\ ,
\label{8.3}
\end{equation}
where $\hat e_i^{(r)}(c_1,c_2)$ are (suitably normalized) products of the projectors $P_\pm$ and $Q_\pm$:
\begin{equation}
\left\{\begin{aligned}
&\hat e_1^{(1)}(c_1,c_2)=(1+c_1)(1+c_2) \\
&\hat e_2^{(1)}(c_1,c_2)=(1-c_1)(1+c_2)
\end{aligned}\right.
\hskip 2cm
\left\{\begin{aligned}
&\hat e_1^{(2)}(c_1,c_2)=(1+c_1)(1-c_2) \\
&\hat e_2^{(2)}(c_1,c_2)=(1-c_1)(1-c_2)\ .
\end{aligned}\right.
\label{8.4}
\end{equation}
One can easily check that the set $\big\{ |e_i^{(1)}\rangle \, | \, i=1,2\big\}$ is a basis for
the subspace ${\cal H}_\Omega^{(1)}\subset{\cal H}_\Omega$ carrying the irreducible representation
$\pi_\Omega^{(1)}$ for which
\begin{eqnarray}
\nonumber
&& {\bf 1} \longrightarrow \sigma_0\\
\label{8.5}
&& c_{1} \longrightarrow \sigma_3 \\
\nonumber
&& c_{2} \longrightarrow \sigma_1\ ,
\end{eqnarray}
and consequently $c_1 c_2 \longrightarrow i\sigma_2$.
Similarly, the set \hbox{$\big\{ |e_i^{(2)}\rangle \, | \, i=1,2\big\}$} is a basis in 
${\cal H}_\Omega^{(2)}\subset{\cal H}_\Omega$,
carrying another copy of the same irreducible representation. In view of this, as discussed before, 
the four states $|e_i^{(r)}\rangle$ are pure. Furthermore, they are manifestly separable with respect to the given bipartition;
indeed, one can explicitly check that they satisfy the condition (\ref{3.4})
for any local operator in ${\cal C}_2$.

The decomposition ${\cal H}_\Omega={\cal H}_\Omega^{(1)} \oplus {\cal H}_\Omega^{(2)}$ is however
not unique, due to the fact that the representation (\ref{8.5}) has multiplicty 2.
In fact, as discussed in the final {\sl Remark} of Section 6, the linear combinations
\begin{equation}
| g_i^{(r)}\rangle=\sum_{s=1}^2 U_{rs}\, | e_i^{(s)}\rangle\ ,
\label{8.6}
\end{equation}
with $U$ unitary, define two orthonormal basis $\big\{ |g_i^{(1)}\rangle \, | \, i=1,2\big\}$
and $\big\{ |g_i^{(2)}\rangle \, | \, i=1,2\big\}$, spanning two subspaces 
$\widetilde{\cal H}_\Omega^{(r)}\subset{\cal H}_\Omega$
giving a new decomposition 
${\cal H}_\Omega=\widetilde{\cal H}_\Omega^{(1)} \oplus \widetilde{\cal H}_\Omega^{(2)}$ of the GNS-Hilbert space.
However, as already noticed in the general case, the irreducible representations $\tilde\pi_\Omega^{(r)}$
of ${\cal C}_2$ corresponding to this new decomposition of ${\cal H}_\Omega$ coincide with old ones,
{\it i.e.} one has: $\tilde\pi_\Omega^{(1)}=\tilde\pi_\Omega^{(2)}\equiv \pi_\Omega^{(1)}=\pi_\Omega^{(2)}$.
In addition, by {\sl Lemma 4}, the new pure basis vectors $|g_i^{(r)}\rangle$ still represent separable states.

These considerations became very explicit by taking for instance $U=(\sigma_1+\sigma_3)/\sqrt2$, so that the
two basis states result:
\begin{equation}
\left\{\begin{aligned}
&|g_1^{(1)}\rangle=\displaystyle{\frac{1}{\sqrt2}}(1+c_1)\, |\Omega\rangle \\
&|g_2^{(1)}\rangle=\displaystyle{\frac{1}{\sqrt2}}(1-c_1)c_2\, |\Omega\rangle
\end{aligned}\right.
\hskip 2cm
\left\{\begin{aligned}
&|g_1^{(2)}\rangle=\displaystyle{\frac{1}{\sqrt2}}(1+c_1)c_2\, |\Omega\rangle \\
&|g_2^{(2)}\rangle=\displaystyle{\frac{1}{\sqrt2}}(1-c_1)\, |\Omega\rangle\ .
\end{aligned}\right.
\label{8.7}
\end{equation}
These states are manifestly separable and give rise to two copies of the same matrix representation
of ${\cal C}_2$ given in (\ref{8.5}).

On the other hand, if one instead considers as in (\ref{6.10}) linear unitary combinations
of the vectors $|e_i^{(r)}\rangle$ involving also the lower index,
\begin{equation}
| f_i^{(r)}\rangle=\sum_{s,j} U_{rs}\, V_{ij}\, | e_j^{(s)}\rangle\ , \qquad U U^\dagger={\bf 1}=V V^\dagger\ ,
\label{8.8}
\end{equation}
the resulting sets of vectors $\big\{ |f_i^{(1)}\rangle \, | \, i=1,2\big\}$ and
$\big\{ |f_i^{(2)}\rangle \, | \, i=1,2\big\}$ are still basis carrying the ${\cal C}_2$
irreducible representations $V\pi_\Omega^{(r)}V^\dagger$, $r=1,2$, unitarily equivalent to the
original ones, but the pure states $|f_i^{(r)}\rangle$ are in general no longer separable.
An interesting example, that will turn useful in the following, is given by the choice:
\begin{equation}
U=\frac{1}{2}\begin{pmatrix} 1-i & 1+i\\ 1+i & 1-i\end{pmatrix}\ ,\qquad
V=\frac{1}{\sqrt2}\begin{pmatrix} 1 & i\\ 1 & -i\end{pmatrix}\ ,
\label{8.9}
\end{equation}
giving rise to the basis vectors
\begin{equation}
|f_i^{(r)}\rangle= \frac {1}{\sqrt2}\, \hat f_i^{(r)}(c_1,c_2)\, |\Omega\rangle\ ,
\qquad r=1,2\ ,\quad i=1,2\ ,
\label{8.10}
\end{equation}
with
\begin{equation}
\left\{\begin{aligned}
&\hat f_1^{(1)}(c_1,c_2)=(c_1+c_2) \\
&\hat f_2^{(1)}(c_1,c_2)=(1+ic_1c_2)
\end{aligned}\right.
\hskip 1.5cm
\left\{\begin{aligned}
&\hat f_1^{(2)}(c_1,c_2)=(1-ic_1c_2) \\
&\hat f_2^{(2)}(c_1,c_2)=(c_1-c_2)
\end{aligned}\right.
\label{8.11}
\end{equation}
Using the entanglement criterion given in {\sl Corollary 3}, one immediately sees that
the vectors (\ref{8.10}) are non separable, since
$\langle f_i^{(r)} |c_1c_2 |f_i^{(r)}\rangle\neq\, 0$.

Coming now to the GNS state $|\Omega\rangle$, one can easily sees that, 
although generating the whole Hilbert space ${\cal H}_\Omega$,
it is not a pure state on the Clifford algebra ${\cal C}_2$. 
In fact, recalling (\ref{8.7}), one can write:
\begin{equation}
|\Omega\rangle=\frac{1}{\sqrt2} \big( |g_1^{(1)}\rangle + |g_2^{(2)}\rangle \big)\ .
\label{8.12}
\end{equation}
Since $\langle g_i^{(1)} |\alpha|g_j^{(2)}\rangle=0$, $i,j=1,2$, for any element $\alpha\in{\cal C}_2$,
due to the irreducibility of the representations carried by $\big\{ |g_i^{(1)}\rangle\big\}$ and
$\big\{ |g_i^{(2)}\rangle\big\}$, the mean value $\Omega(\alpha)$ can be expressed 
in terms of a density matrix $\rho_\Omega$ such that
\begin{equation}
\langle\Omega | \alpha |\Omega\rangle={\rm Tr}\big[ \rho_\Omega\, \alpha\big]\ ,\quad \forall\alpha\in{\cal C}_2\ ,
\label{8.13}
\end{equation}
with
\begin{equation}
\rho_\Omega=\frac{1}{2}\left( |g_1^{(1)}\rangle \langle g_1^{(1)} | +
|g_2^{(2)}\rangle \langle g_2^{(2)} | \right)\ ;
\label{8.14}
\end{equation}
the state $|\Omega\rangle$ is therefore a mixed state when restricted to ${\cal C}_2$.
In addition, being a convex combination of projectors onto separable states, $\rho_\Omega$
results itself separable, as already observed in Section 5, {\it cf.} Eq.(\ref{5.7-1}).
A similar conclusion holds also for the states $c_1 |\Omega\rangle$,
$c_2 |\Omega\rangle$ and $c_1 c_2|\Omega\rangle$ that together with $|\Omega\rangle$ generate
the whole GNS-Hilbert space ${\cal H}_\Omega$: one easily finds that, 
when restricted to the algebra ${\cal C}_2$,
also these three states are represented by separable density matrices.

More in general, any state on ${\cal C}_2$ can be represented by a density matrix that,
following (\ref{7.6}), can be written in the form:
\begin{equation}
\rho=\sum_{i,j}\, \lambda_{ij}\, | e_i\rangle\langle e_j|\ ,\qquad
\sum_i \lambda_{ii}=1\ ,
\label{8.15}
\end{equation}
where $\big\{ |e_i\rangle \, | \, i=1,2\big\}$ is any separable basis carrying a
two-dimensional representation of ${\cal C}_2$; in particular, one can choose
one of the two basis given in (\ref{8.3}) and (\ref{8.4}).

In order to characterize its entanglement properties, one has to distinguish the cases in which
the coefficient $\lambda_{12}$ is complex or real. In the first case,
one has:

\begin{lemma}
The density matrix $\rho$ as in (\ref{8.15}) with $\lambda_{12}$ a nonvanishing complex number
is never separable.
\end{lemma}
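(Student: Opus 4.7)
The plan is to invoke the Clifford-algebra version of \emph{Corollary 3}, whose applicability to ${\cal C}_N$ is stated in the Remark following (\ref{5.7-1}). For the bipartition in play here, the odd components of the two subalgebras are one-dimensional, ${\cal A}_1^o=\mathrm{span}\{c_1\}$ and ${\cal A}_2^o=\mathrm{span}\{c_2\}$, so up to scalars the only local operator that is simultaneously odd on both sides is $c_1c_2$. By \emph{Corollary~3} it therefore suffices to show $\Omega_\rho(c_1c_2)\neq 0$.

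To evaluate this expectation I would translate $\rho$ and $c_1c_2$ into the two-dimensional irreducible matrix representation (\ref{8.5}), in which $c_1\leftrightarrow\sigma_3$, $c_2\leftrightarrow\sigma_1$, and therefore $c_1c_2\leftrightarrow\sigma_3\sigma_1=i\sigma_2$. In the separable basis $\{|e_1\rangle,|e_2\rangle\}$ used in (\ref{8.15}), the density matrix $\rho$ is Hermitian, with diagonal entries $\lambda_{11},\lambda_{22}$ and off-diagonal entries $\lambda_{12},\lambda_{12}^{*}$; a one-line trace computation then gives
\begin{equation*}
\Omega_\rho(c_1c_2)\,=\,\mathrm{Tr}\bigl[\rho\,\pi_\Omega(c_1c_2)\bigr]\,=\,\lambda_{12}^{*}-\lambda_{12}\,=\,-2i\,\mathrm{Im}(\lambda_{12})\,.
\end{equation*}

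The final step is to convert the hypothesis ``$\lambda_{12}$ nonvanishing complex'' into $\mathrm{Im}(\lambda_{12})\neq 0$. For this I would invoke the gauge-invariance condition $\Omega\circ\vartheta=\Omega$ adopted in Section~4 (the Remark after \emph{Lemma 2}), which the text imposes throughout on physically admissible states and carries over to the Clifford setting. It requires $\mathrm{Tr}[\rho\,c_1]=0$ and $\mathrm{Tr}[\rho\,c_2]=0$; using the same matrix representation these conditions become $\lambda_{11}=\lambda_{22}$ and $\mathrm{Re}(\lambda_{12})=0$. Hence for any admissible $\rho$ of the form (\ref{8.15}) the coefficient $\lambda_{12}$ is either zero or purely imaginary, and the assumption $\lambda_{12}\neq 0$ automatically forces $\mathrm{Im}(\lambda_{12})\neq 0$. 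Combined with the previous paragraph this yields $\Omega_\rho(c_1c_2)\neq 0$, and \emph{Corollary~3} gives the non-separability of $\rho$.

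The main subtlety lies in this last reduction: the cancellation $\lambda_{12}^{*}-\lambda_{12}=-2i\,\mathrm{Im}(\lambda_{12})$ only probes the imaginary part, so a reading of the hypothesis that literally allowed $\lambda_{12}\in\mathbb{R}\setminus\{0\}$ would leave a gap. The gauge-invariance restriction fills it by excluding real nonzero values of $\lambda_{12}$ from the admissible states in the first place; consistently, the sentence preceding the Lemma already distinguishes the ``complex or real'' cases, confirming that ``nonvanishing complex'' is used here to mean $\mathrm{Im}(\lambda_{12})\neq 0$.
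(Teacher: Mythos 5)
Your main argument is exactly the paper's: decompose off the diagonal part, compute $\mathrm{Tr}[\rho\,c_1c_2]$ in the representation $c_1\leftrightarrow\sigma_3$, $c_2\leftrightarrow\sigma_1$, obtain $\pm 2i\,\mathcal{I}m(\lambda_{12})$, and invoke \emph{Corollary 3}. That part is correct (the paper's own stated value $2i\,\mathcal{I}m(\lambda_{12})$ differs from yours only by an immaterial sign), and your reading of ``nonvanishing complex'' as $\mathcal{I}m(\lambda_{12})\neq 0$ is the intended one, as the dichotomy ``complex or real'' set up in the sentence before the Lemma (and the separate treatment of real $\lambda_{12}$ in the next Lemma) makes clear. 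The paper does nothing more than this; it simply treats ``complex'' as synonymous with ``non-real.''

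However, the gauge-invariance argument you use to \emph{derive} $\mathcal{R}e(\lambda_{12})=0$ is wrong and should be dropped. The condition $\Omega\circ\vartheta=\Omega$ is imposed in Section~4 on the reference state used to build the GNS representation (and the state (\ref{5.6}) indeed satisfies it), but it is not imposed on the density matrices in the folium of $\Omega$, which are the objects classified in (\ref{8.15}). Indeed the separable basis vectors themselves violate it: from (\ref{8.5}) one has $\langle e_1|c_1|e_1\rangle=+1$ and $\langle e_2|c_1|e_2\rangle=-1$, so $\mathrm{Tr}[\rho\,c_1]=\lambda_{11}-\lambda_{22}$ need not vanish, and likewise $\mathrm{Tr}[\rho\,c_2]=2\,\mathcal{R}e(\lambda_{12})$ need not vanish. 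Had your restriction been in force, the entire case $\lambda_{12}\in\mathbb{R}\setminus\{0\}$ analyzed in the following Lemma (with the condition $|\lambda_{12}|\leq\min\{\lambda_{11},\lambda_{22}\}$, which allows $\lambda_{11}\neq\lambda_{22}$) would be empty. So the correct resolution of the ``subtlety'' you identify is purely terminological --- ``complex'' means ``non-real'' here --- and not a consequence of any physical superselection constraint on $\rho$.
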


\begin{proof}
The density matrix in (\ref{8.15}) can be decomposed into its diagonal part,
\begin{equation}
\rho_D=\sum_{i}\, \lambda_{ii}\, | e_i\rangle\langle e_i|\ ,
\label{8.16}
\end{equation}
and the off-diagonal one $\eta\equiv\rho-\rho_D$. While $\rho_D$ is clearly a separable state,
$\eta$ being the difference of two density matrices is not even a state. However,
for $\lambda_{12}\in\mathbb{C}$, the quantity ${\rm Tr}[\eta\, c_1 c_2]
\equiv {\rm Tr}[\rho\, c_1 c_2]=2i\,{\cal I}m(\lambda_{12})$ is nonvanishing,
so that by the criterion of {\sl Corollary 3}, $\rho$ is surely entangled.
\end{proof}

When $\lambda_{ij}$ is a real matrix, the situation is more involved, since ${\rm Tr}[\rho\, c_1 c_2]$
is always zero and the entanglement criterion in {\sl Corollary 3} gives no information: one has then to resort to the fact that separable mixed state are convex combination of pure separable states.
\medskip

\begin{lemma}
The density matrix $\rho$ as in (\ref{8.15}) with $\lambda_{12}\in\mathbb{R}$ is separable if and only if
$\lambda_{11}\geq |\lambda_{12}|$ and $\lambda_{22}\geq |\lambda_{12}|$ 
\ .
\end{lemma}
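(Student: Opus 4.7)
The plan is to invoke the general principle ({\sl Corollary 2}, adapted to a single irreducible component via {\sl Proposition 2}) that a mixed state is separable if and only if it lies in the convex hull of projectors onto pure separable states, then to enumerate those pure separable states in the two-dimensional subspace carrying the irreducible representation (\ref{8.5}) and match convex combinations of them against $\rho$.

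First I would classify the pure separable states. For $|\psi\rangle = \alpha|e_1\rangle + \beta|e_2\rangle$, {\sl Lemma 1} reduces separability to the single non-trivial factorization
$$\langle\psi|c_1c_2|\psi\rangle = \langle\psi|c_1|\psi\rangle\,\langle\psi|c_2|\psi\rangle,$$
all other local products in the $({\cal A}_1,{\cal A}_2)$-bipartition involving at most one non-identity generator. Under the identifications $c_1 \leftrightarrow \sigma_3$, $c_2 \leftrightarrow \sigma_1$, $c_1c_2 \leftrightarrow i\sigma_2$ from (\ref{8.5}), the left-hand side is purely imaginary while the right-hand side is real, so both must vanish independently. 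This forces $\alpha^\star\beta \in \mathbb{R}$ together with either $|\alpha|=|\beta|$ or $\alpha\beta = 0$, and up to an irrelevant global phase it leaves exactly four pure separable vectors: $|e_1\rangle$, $|e_2\rangle$, and $|\pm\rangle \equiv (|e_1\rangle \pm |e_2\rangle)/\sqrt{2}$, with associated projectors $P_1, P_2, P_\pm$.

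For the \emph{if} direction I would then exhibit the explicit decomposition
$$\rho = (\lambda_{11} - |\lambda_{12}|)\, P_1 + (\lambda_{22} - |\lambda_{12}|)\, P_2 + 2|\lambda_{12}|\, P_{\mathrm{sgn}(\lambda_{12})},$$
whose $2\times 2$ entries reproduce $\rho$ on inspection, whose weights sum to $\lambda_{11}+\lambda_{22}=1$, and which is a bona fide convex combination precisely under the stated hypotheses. For the \emph{only if} direction I would take a generic convex combination $\rho = q_1 P_1 + q_2 P_2 + q_+ P_+ + q_- P_-$ and read off $\lambda_{12} = (q_+ - q_-)/2$ and $\lambda_{ii} = q_i + (q_+ + q_-)/2$, so that $\lambda_{ii} \geq (q_+ + q_-)/2 \geq |q_+ - q_-|/2 = |\lambda_{12}|$ for $i=1,2$.

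The delicate step is the classification of pure separable vectors: the argument relies crucially on $c_1c_2$ being anti-Hermitian while $c_1$ and $c_2$ are Hermitian, which forces the factorization condition to split into two independent real equations and reduces the pure separable locus to a discrete four-point configuration rather than a continuous curve on the Bloch sphere. Missing any of these four vectors would invalidate the \emph{only if} direction, so this is where I expect the most care will be needed; once the four projectors $P_1, P_2, P_\pm$ are in hand, the remaining $2\times 2$ matrix arithmetic is entirely routine.
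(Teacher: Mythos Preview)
Your proposal is correct and follows essentially the same approach as the paper: both arguments first classify the pure separable vectors via {\sl Lemma 1} (finding exactly the four states $|e_1\rangle$, $|e_2\rangle$, $(|e_1\rangle\pm|e_2\rangle)/\sqrt{2}$), then characterize separable mixed states as convex combinations of the four corresponding projectors and compare coefficients. Your {\it if} direction is marginally cleaner than the paper's, giving an explicit three-term decomposition rather than solving for the admissible range of a free parameter, but this is a cosmetic difference within the same strategy.
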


\begin{proof}
Consider first a pure state $\vert\psi\rangle=a_1\,\vert e_1\rangle\,+\,a_2\,\vert e_2\rangle$, 
with $|a_1|^2+|a_2|^2=1$; one computes:
$$
\langle\psi\vert c_1\,c_2\vert\psi\rangle=2\,i\,{\cal I}m(a_1\,a_2^*)\ ,\quad
\langle\psi\vert c_1\vert\psi\rangle=|a_1|^2-|a_2|^2\ ,\quad
\langle\psi\vert c_2\vert\psi\rangle=2\,{\cal R}e(a_1\,a_2^*)\ .
$$
Using the separability condition (\ref{3.4}) of {\sl Lemma 1},
it follows that $\vert\psi\rangle$ is a separable pure state with respect to the considered bipartition if only if, 
together with ${\cal I}m(a_1a_2^*)=\,0$, at least one of the following two conditions is satisfied:
$|a_1|^2=|a_2|^2$ or ${\cal R}e(a_1\,a_2^*)=0$.
Therefore, the only separable pure states are:
$$
\vert\psi_1\rangle=\vert e_1\rangle\ ,\quad \vert\psi_2\rangle=\vert e_2\rangle\ ,\quad 
\vert\psi_3\rangle=\frac{\vert e_1\rangle\,+\,\vert e_2\rangle}{\sqrt{2}}\ ,\quad
\vert\psi_4\rangle=\frac{\vert e_1\rangle\,-\,\vert e_2\rangle}{\sqrt{2}}\ . 
$$
Take now a generic mixed, separable state $\rho$ that can be expressed as in (\ref{8.15});
it must be obtainable as a convex combination of the projectors onto the above separable pure states,
and therefore must be of the form:
\begin{equation}
\rho=
\sum_{i=1}^4 \mu_i\, |\psi_i\rangle\langle\psi_i|
\ ,\qquad \mu_i\geq 0\ ,\quad \sum_{i=1}^4\mu_i=1\ .
\label{8.16-1}
\end{equation}
As a consequence, comparing (\ref{8.15}) and (\ref{8.16-1}), one obtains: 
\begin{eqnarray*}
&&\lambda_{11}-\lambda_{12}=\mu_1+\mu_4\geq 0\ ,\qquad\hskip .5cm
\lambda_{11}+\lambda_{12}=\mu_1+\mu_3\geq 0\\ 
&&\lambda_{22}-\lambda_{12}=\mu_2+\mu_4\geq 0\ ,\hskip .6cm \qquad \lambda_{22}+\lambda_{12}=\mu_2+\mu_3\geq 0\ ,
\end{eqnarray*}
which are possible only if: $|\lambda_{12}|\leq {\rm min}\{\lambda_{11},\, \lambda_{22}\}$.

\noindent
On the other hand, using the above relations,
one can express three of the convex coefficients appearing in the decomposition (\ref{8.16-1})
in terms of the remaining one and $\lambda_{11}$, $\lambda_{12}$; for instance:
$$
\mu_2=1-2\lambda_{11}+\mu_1\ ,\quad \mu_3=\lambda_{11}+\lambda_{12}-\mu_1 ,\quad \mu_4=\lambda_{11}-\lambda_{12}-\mu_1\ .
$$
The conditions $0\leq \mu_i\leq 1$, $i=1,2,3$, then yields:
$$
\max\{\lambda_{11}-\lambda_{12}-1,\, \lambda_{11}+\lambda_{12}-1,\, 2\lambda_{11}-1\}\leq\,\mu_1\,\leq
\min\{\lambda_{11}-\lambda_{12},\, \lambda_{11}+\lambda_{12},\, 2\lambda_{11}\}\ .
$$ 
Therefore, assuming $|\lambda_{12}|\leq {\rm min}\{\lambda_{11},\, \lambda_{22}\}$, one can always choose
coefficients $\mu_i$, satisfying \hbox{$0\leq \mu_i\leq 1$}, $i=1,2,3,4$, 
such that a generic density matrix $\rho$ can be expressed 
as in (\ref{8.16-1}), {\it i.e.} as a convex combination of projectors on the separable pure states $|\psi_i\rangle$,
whence $\rho$ results itself separable.
½\end{proof}
\medskip

\noindent
{\bf Remark:}
Surprisingly, as explicitly shown in the above proof, the two combinations\break
\hbox{$|\psi_{3,4}\rangle=\big(|e_1\rangle \pm |e_2\rangle\big)/\sqrt2$} are
separable in ${\cal C}_2$. This behaviour is clearly quite different from the case 
of two distinguishable qubits, or two-mode boson/fermion systems, where superpositions of
of pure separable states give entangled states.
\hfill$\Box$

\medskip

In conclusion, using the powerful machinery of algebraic quantum mechanics
we have been able to classify all entangled states of the Clifford algebra ${\cal C}_2$.
As we shall see, one can treat in a similar way also the case of the general algebra ${\cal C}_N$.

\section{Structure of entangled Majorana states: ${\cal C}_N$}

Before treating the case of a general Clifford algebra, it is useful to explicitly
discuss the construction of a basis of separable states in ${\cal H}_\Omega$ carrying
the irreducible representations of ${\cal C}_N$ with $N=3,4$, by extending
the techniques previously adopted for ${\cal C}_2$.

The algebra ${\cal C}_3$, the simplest Clifford algebra ${\cal C}_N$ with $N$ odd,
 is the linear span of the set $\big\{ {\bf 1},\, c_1,\, c_2,\, c_3 \big\}$.
Its only non-trivial bipartition $({\cal A}_1, {\cal A}_2)$ is the one in which
${\cal A}_1$ is the linear span of $\big\{ {\bf 1},\, c_1,\, c_2 \big\}$, while ${\cal A}_2$
that of $\big\{ {\bf 1},\, c_3\big\}$, since all other possible bipartitions can be
reduced to this one by a suitable reordering of the mode labels.

Choosing again the state $\Omega$ as in (\ref{5.6}), the {\sl GNS construction} gives a
representation $\pi_\Omega$ of ${\cal C}_3$ on the Hilbert space ${\cal H}_\Omega$,
now eight-dimensional. As outlined earlier for ${\cal C}_2$, 
a separable basis in it can be constructed following the procedure
presented in the proof of {\sl Lemma~3}; in practice, such a basis can be obtained
by augmenting the four operators $\hat e_i^{(r)}(c_1,c_2)=(1\pm c_1)(1\pm c_2)$ 
introduced in (\ref{8.4}) 
with the additional two projectors 
$(1\pm c_3)/2$, yielding the eight orthonormal vectors:
\begin{equation}
|e_i^{(r,s)}\rangle= \frac{1}{2\sqrt2}\, \hat e_i^{(r)}(c_1,c_2)\, \hat e^{(s)}(c_3)\,
|\Omega\rangle\ ,\qquad r=1,2\ ,\quad i=1,2\ ,\quad s=1,2\ ,
\label{9.1}
\end{equation}
where
\begin{equation}
\hat e^{(1)}(c_3)=(1+c_3)\qquad
\hat e^{(2)}(c_3)=(1-c_3)\ .
\label{9.2}
\end{equation}

However, as discussed in Section 5, the irreducible representations of ${\cal C}_3$
are four-dimen-sional, so that $\pi_\Omega$ as given by the {\sl GNS construction}
decomposes as $\pi_\Omega=\pi_\Omega^{(1)}\oplus\pi_\Omega^{(2)}$
into two, equal, four-dimensional irreducible representations $\pi_\Omega^{(s)}$,
$s=1,2$ acting on two subspaces ${\cal H}_\Omega^{(s)}$, such that:
${\cal H}_\Omega={\cal H}_\Omega^{(1)} \oplus {\cal H}_\Omega^{(2)}$.
One can check that the first subspace ${\cal H}_\Omega^{(1)}$ is spanned by the
the four basis vectors in (\ref{9.1}) with $s=1$, 
{\it i.e.} $\big\{|e_i^{(r,1)}\rangle\ |\, r,i=1,2\big\}$, while the second
by those with $s=2$. Explicitly, one finds:

\begin{equation}
\begin{aligned}
&{\bf 1} \longrightarrow \sigma_0\otimes\sigma_0\\
&c_{1} \longrightarrow \sigma_0\otimes\sigma_3 \\
&c_{2} \longrightarrow \sigma_0\otimes\sigma_1 \\
&c_3 \longrightarrow \sigma_2\otimes\sigma_2\ .
\end{aligned}
\label{9.3}
\end{equation}
Therefore, the vectors in (\ref{9.1}) represent pure states for ${\cal C}_3$;
further, due to {\sl Proposition 1}, they are separable.

In order to explicit obtain the decomposition of the GNS- representation into its irreducible
components in the general case ${\cal C}_N$, with $N>3$, more effort is required.
The discussion of the case $N=4$ suffices to grasp the general structure.

In the case of ${\cal C}_4={\rm span}\big\{ {\bf 1},\, c_1,\, c_2,\, c_3,\, c_4 \big\}$,
the Hilbert space ${\cal H}_\Omega$ is 16-dimensional and can be spanned by the
sixteen orthonormal states:
\begin{equation}
\big|\, v_{(i,j)}^{(r,s)}\,\big\rangle= \frac{1}{4}\, \hat v_{(i,j)}^{(r,s)} \, |\Omega\rangle\ ,\qquad r,s=1,2\ ,\quad i,j=1,2\ ,
\label{9.4}
\end{equation}
with
\begin{equation}
\hat v_{(i,j)}^{(r,s)}\equiv\, \hat e_i^{(r)}(c_1,c_2)\ \hat e_j^{(s)}(c_3,c_4)\ ,
\label{9.5}
\end{equation}
where $\hat e_i^{(r)}(c_1,c_2)$ are the four combination in (\ref{8.4}),
while $\hat e_j^{(r)}(c_3,c_4)$ are exactly of the same form but with $c_1$
replaced by $c_3$, and $c_2$ by $c_4$. For instance, one explicitly has:
\begin{equation}
\begin{aligned}
&\hat v_{(1,1)}^{(1,1)}\equiv\, (1+c_1)(1+c_2)(1+c_3)(1+c_4)\\
&\hat v_{(2,1)}^{(1,1)}\equiv\, (1-c_1)(1+c_2)(1+c_3)(1+c_4)\\
&\hat v_{(1,2)}^{(1,1)}\equiv\, (1+c_1)(1+c_2)(1-c_3)(1+c_4)\\
&\hat v_{(2,2)}^{(1,1)}\equiv\, (1-c_1)(1+c_2)(1-c_3)(1+c_4)\ .
\end{aligned}
\label{9.6}
\end{equation}

The 16 states (\ref{9.4}) look separable for any choice of bipartition of ${\cal C}_4$, but unfortunately
can not be simply grouped into sets of four in order to form basis for subspaces of ${\cal H}_\Omega$ 
carrying irreducible representations of ${\cal C}_4$,
as done before for ${\cal C}_2$ and ${\cal C}_3$. Nevertheless, this can be obtained through
unitary transformations similar to the ones introduced in (\ref{8.8}) and (\ref{8.9});
this will allow to conclude that the vectors $\big|\, v_{(i,j)}^{(r,s)}\,\big\rangle$ are 
also pure states for ${\cal C}_4$. Recall that in general a state in ${\cal H}_\Omega$ results
mixed when restricted to the Clifford algebra, while in order to characterize
entangled Clifford states, a basis of pure, separable states is needed.
 
For sake of definiteness, let us fix the bipartition $({\cal A}_1, {\cal A}_2)$ of ${\cal C}_4$
for which ${\cal A}_1={\rm span}\big\{ {\bf 1},\, c_1,\, c_2\big\}$ and
${\cal A}_2={\rm span}\big\{ {\bf 1},\, c_3,\, c_4\big\}$.%
\footnote{ The other independent bipartition, for which
${\cal A}_1={\rm span}\big\{ {\bf 1},\, c_1,\, c_2,\, c_3 \big\}$
and ${\cal A}_2={\rm span}\big\{ {\bf 1},\, c_4 \big\}$, can be similarly treated
using the results obtained above for ${\cal C}_3$.}
Using the unitary matrices $U$ and $V$ in (\ref{8.9}), one can then write:
\begin{equation}
\hat v_{(i,j)}^{(r,s)}=\sum_{p,k} U_{rp}\, V_{ik}\, \hat f_k^{(p)}(c_1,c_2)
\sum_{q,\ell} U_{sq}\, V_{j\ell}\, \hat f_\ell^{(q)}(c_3,c_4)\ ,
\label{9.7}
\end{equation}
where $\hat f_k^{(p)}(c_1,c_2)$, $p,k=1,2$, coincide with the monomials in (\ref{8.11}),
while $\hat f_\ell^{(q)}(c_3,c_4)$, $q,\ell=1,2$, are exactly of the same form with
the substitution $c_1\rightarrow c_3$ and $c_2\rightarrow c_4$. This implies that the states
$\big|\, v_{(i,j)}^{(r,s)}\,\big\rangle$ in (\ref{9.4}) can be expressed as linear
combinations of the vectors:
\begin{equation}
\big|\, f_{(i,j)}^{(r,s)}\,\big\rangle= \frac{1}{4}\, \hat f_i^{(r)}(c_1,c_2)\
\hat f_j^{(s)}(c_3,c_4) \, |\Omega\rangle\ ,\qquad r,s=1,2\ ,\quad i,j=1,2\ .
\label{9.8}
\end{equation}
Using the results presented in the previous Section, one easily checks
that the four vectors $\big\{ \big|\, f_{(i,j)}^{(r,s)}\,\big\rangle\ |\, i,j=1,2\big\}$
with the indices $r$ and $s$ fixed, span a four-dimensional subspace 
${\cal H}_\Omega^{(r,s)}\subset{\cal H}_\Omega$ carrying an irreducible representation
$\pi_\Omega^{(r,s)}$ of ${\cal C}_4$. Then, the original GNS-representation $\pi_\Omega$
decomposes as 
\begin{equation}
\pi_\Omega=\oplus_{r,s}\ \pi_\Omega^{(r,s)}\ ,
\label{9.10}
\end{equation}
into four, 4-dimensional, irreducible representations $\pi_\Omega^{(r,s)}$, that turn out to be
all equal. As a result, since the basis vectors $\big|\, f_{(i,j)}^{(r,s)}\,\big\rangle$ are
pure, the original vectors $\big|\, v_{(i,j)}^{(r,s)}\,\big\rangle$, being linear combinations
of these, are also pure. In addition, they are also separable, being essentially product states.

\medskip
\noindent
{\bf Remark:} Notice that the states $\big|\, f_{(i,j)}^{(r,s)}\,\big\rangle$ 
are manifestly separable for the chosen bipartition $({\cal A}_1, {\cal A}_2)$; however,
as discussed in the previous Section, these states result non-separable when restricted
to the two-dimensional Clifford subalgebras ${\cal A}_1$ or ${\cal A}_2$.
\hfill$\Box$
\medskip

The whole construction can now be easily generalized to the case of a generic
Clifford algebra ${\cal C}_N$. 
Given the state $\Omega$ in (\ref{5.6}), when $N$ is even one can
build a basis of pure, separable states in the Hilbert space ${\cal H}_\Omega$
by acting with products of the four elements $\hat e_i^{(r)}(c_a,c_b)=(1\pm c_a)(1\pm c_b)$
on the cyclic GNS-vector $|\Omega\rangle$,
explicitly obtaining:
\begin{equation}
\big|\, e_{\bf i}^{\bf r}\,\big\rangle
=\frac{1}{2^{N/2}}\ \hat e_{i_1}^{(r_1)}(c_1,c_2) \, \hat e_{i_3}^{(r_3)}(c_3,c_4)\ldots
\hat e_{i_{N-1}}^{(r_{N-1})}(c_{N-1},c_N)\ |\Omega\rangle\ ,
\label{9.11}
\end{equation}
where ${\bf r}=(r_1, r_3,\ldots, r_{N-1})$ and ${\bf i}=(i_1, i_3,\ldots, i_{N-1})$;
on the other hand, when $N$ is odd, also the two elements $\hat e_i^{(r)}(c_a)=(1\pm c_a)$ are needed,
so that:
\begin{equation}
\big|\, e_{\bf i}^{\bf r}\,\big\rangle
=\frac{1}{2^{N/2}}\ \hat e_{i_1}^{(r_1)}(c_1,c_2) \, \hat e_{i_3}^{(r_3)}(c_3,c_4)\ldots
\hat e_{i_{N-1}}^{(r_{N-1})}(c_{N-2},c_{N-1})\, \hat e_{i_N}^{(r_N)}(c_N)\ |\Omega\rangle\ .
\label{9.12}
\end{equation}
In the above expressions, all indices $r$'s and $i$'s take the two values 1 and 2.
These states are all manifestly separable for any bipartition of the algebra ${\cal C}_N$; 
further they are pure, since, as in the case of ${\cal C}_4$ discussed above, they can be
unitarily related to states carrying irreducible representations of the Clifford algebra.

For sake of definiteness, let us assume $N$ to be even and 
fix a bipartition $({\cal A}_1, {\cal A}_2)$
for which ${\cal A}_1={\rm span}\big\{ {\bf 1},\, c_1,\, c_2\, \ldots, c_{2k}\big\}$ and
${\cal A}_2={\rm span}\big\{ {\bf 1},\, c_{2k+1},\, c_{2k+2},\ldots, c_N\big\}$, with
\hbox{$1 \leq k \leq N/2-2$}; in this case, the multi-indices $\bf r$ and $\bf i$
take $2^{N/2}$ possible values, that we shall henceforth take to be: $1,2,\ldots, 2^{N/2}$.
Then, by generalizing the transformation in (\ref{9.7}),
the states $\big|\, e_{\bf i}^{\bf r}\,\big\rangle$
in (\ref{9.11}) can be unitarily related to the following ones:
\begin{equation}
\big|\, f_{\bf i}^{\bf r}\,\big\rangle
=\frac{1}{2^{N/2}}\ \hat f_{i_1}^{(r_1)}(c_1,c_2) \, \hat f_{i_3}^{(r_3)}(c_3,c_4)\ldots
\hat f_{i_{N-1}}^{(r_{N-1})}(c_{N-1},c_N)\ |\Omega\rangle\ ,
\label{9.13}
\end{equation}
with $\hat f_i^{(r)}(c_a,c_b)=\{(c_a\pm c_b),\ (1\pm i c_a c_b)\}$, as in (\ref{8.11}).%
\footnote{
Similarly, for the independent bipartition with
${\cal A}_1={\rm span}\big\{ {\bf 1},\, c_1,\, c_2\, \ldots, c_{2k+1}\big\}$ and
${\cal A}_2={\rm span}\big\{ {\bf 1},\, c_{2k+2},\, c_{2k+3},\ldots, c_N\big\}$,
the states in (\ref{9.11}) can be unitarily related to the following (unnormalized) ones:
$\hat f_{i_1}^{(r_1)}(c_1,c_2)\ldots \, \hat f_{i_{2k-1}}^{(r_{2k-1})}(c_{2k-1},c_{2k})\,
\hat e_{i_{2k+1}}^{(r_{2k+1})}(c_{2k+1})\, \hat f_{i_{2k+2}}^{(r_{2k+2})}(c_{2k+2},c_{2k+3})
\ldots \hat e_{i_N}^{(r_N)}(c_N) |\Omega\rangle$.}
For fixed indices ${\bf r}$, these states 
span a $2^{N/2}$-dimensional subspace ${\cal H}_\Omega^{({\bf r})}$ of ${\cal H}_\Omega$ 
carrying an irreducible representation of ${\cal C}_N$. 
Indeed, the $2^{N/2}$ sets of basis vectors 
$\big\{ | f_{\bf i}^{({\bf r})}\rangle\ |\, {\bf i}=1,2,\ldots, 2^{N/2}\big\}$, 
with ${\bf r}=1,2,\ldots,2^{N/2}$, 
induce a decomposition ${\cal H}_\Omega=\oplus_{\bf r}\ {\cal H}_\Omega^{({\bf r})}$ of the GNS-Hilbert space
into subspaces, each carrying an irreducible representation $\pi_\Omega^{({\bf r})}$ of ${\cal C}_{N}$,
so that the GNS-representation $\pi_\Omega$ has the following decomposition into irreducible components:
$\pi_\Omega=\oplus_{\bf r}\ \pi_\Omega^{({\bf r})}$.

Since all the representation $\pi_\Omega^{({\bf r})}$ turn out to be equal, 
any state of ${\cal C}_{N}$ in the {\sl folium} of $\Omega$,
represented by a density matrix $\rho$, can be decomposed as
\begin{equation}
\rho=\sum_{{\bf j},{\bf k}}\, \lambda_{{\bf j}{\bf k}}\, | f_{\bf j}\rangle\langle f_{\bf k}|\ ,\qquad
\sum_{\bf k} \lambda_{{\bf k}{\bf k}}=1\ ,
\label{9.14}
\end{equation}
where for the vector basis $\big\{| f_{\bf k}\rangle \big\}$ one can choose any of the above introduced sets
$\big\{| f_{\bf k}^{({\bf r})}\rangle \big\}$, with $\bf r$ fixed. 
Clearly, also in this general case, a diagonal state of the form
\begin{equation}
\rho_D=\sum_{{\bf k}}\, \lambda_{{\bf k}{\bf k}}\, | f_{\bf k}\rangle\langle f_{\bf k}|\ ,
\label{9.15}
\end{equation}
is manifestly separable. On the other hand, provided not all off-diagonal coefficients $\lambda_{{\bf j}{\bf k}}$,
${\bf j}\neq {\bf k}$, are real, one can always find a monomial $c_{i_1} c_{i_2}$ with
$c_{i_1}\in {\cal A}_1$ and $c_{i_2}\in {\cal A}_2$
for which ${\rm Tr}[\eta\, c_{i_1} c_{i_2}]\neq0$, with $\eta =\rho-\rho_D$. 
Therefore, in this generic case, by the criterion of {\sl Corollary~3} any state $\rho$
results entangled if and only if it is not in the diagonal form (\ref{9.15}). However,
a full characterization of entangled states in the case in which
the coefficients $\lambda_{{\bf j}{\bf k}}$ are all real
can not be given in general, since one has to resort to the general
separability condition (\ref{3.1}).

As an interesting example of an entangled state in ${\cal C}_N$, let us fix $N=2n$
and consider the balanced bipartition $({\cal A}_1, {\cal A}_2)$
for which ${\cal A}_1={\rm span}\big\{ {\bf 1},\, c_1,\, c_2\, \ldots, c_n\big\}$ and
${\cal A}_2={\rm span}\big\{ {\bf 1},\, c_{n+1},\, c_{n+2},\ldots, c_{2n}\big\}$.
The monomial 
\begin{equation}
\gamma=\gamma_p^{(1)}\, \gamma_p^{(2)}\ , \qquad \gamma^{(1)}=c_{i_1} c_{i_2}\ldots c_{i_p}\ ,
\quad \gamma^{(2)}=c_{j_1} c_{j_2}\ldots c_{j_p}\ ,
\label{9.16}
\end{equation}
with $1\leq i_k\leq n$ and $(n+1)\leq j_k\leq 2n$,
is an element of ${\cal C}_{2n}$ which is manifestly local with respect to the chosen bipartition,
since $\gamma_p^{(1)}\in {\cal A}_1$, while $\gamma_p^{(2)}\in {\cal A}_2$.
Furthermore, when the integer $p$ is odd, $\gamma^{(1)}$ and $\gamma^{(2)}$ are odd elements,
$\big\{\gamma^{(1)},\, \gamma^{(2)}\big\}=0$, such that $\gamma^2=-1$. Consider then the following
vector in ${\cal H}_\Omega$:
\begin{equation}
|\phi\rangle= \frac{1}{\sqrt2}\big(1+i\gamma\big)\,|\Omega\rangle\ .
\label{9.17}
\end{equation}
When restricted to the algebra ${\cal C}_{2n}$ it becomes a mixed state, since
both $|\Omega\rangle$ and $\gamma |\Omega\rangle$ are no longer pure; further,
the expectation $\langle \phi|\gamma|\phi\rangle$ is non vanishing, so that, again by 
{\sl Corollary 3} the state is entangled. This result will be useful in the following
Section, while discussing metrological applications of Majorana systems.

\section{Application to quantum metrology}

Using quantum physics in metrological applications
is surely one of the most promising developments in quantum technology:
it allows determining a physically interesting parameter 
$\theta$, typically a phase, with unprecedented accuracy.%
\footnote{The literature on the subject is fast growing; for a partial list, 
see \cite{Caves1}-\cite{Giovannetti2} and references therein.}
This result is achieved through a $\theta$-dependent
state transformation that occurs inside a suitable measurement apparatus, generally
an interferometric device.
In the most common case of linear setups, this transformation
can be modelled by a unitary mapping, $\rho \to \rho_\theta$, sending the
initial state $\rho$ into the final parameter-dependent outcome state:
\begin{equation}
\rho_\theta= e^{i\theta J}\, \rho\, e^{-i\theta J}\ ,
\label{10.1}
\end{equation}
where $J$ is the devices-dependent, $\theta$-independent operator generating
the state transformation.
The task of quantum metrology is to determine the ultimate bounds on the accuracy with which
the parameter $\theta$ can be obtained through a measurement of $\rho_\theta$
and to study how these bounds scale with the available resources.

General quantum estimation theory allows a precise determination of
the accuracy $\delta\theta$ with which the phase $\theta$ can be obtained in a measurement involving
the operator $J$ and the initial state $\rho$; one finds that $\delta\theta$ is limited by the
following inequality \cite{Helstrom}-\cite{Paris}:
\begin{equation}
\delta\theta\geq {1\over \sqrt{F[\rho, J]}}\ ,
\label{10.2}
\end{equation}
where the quantity $F[\rho, J]$ is the so-called ``quantum Fisher information'';
it is a continuous, convex function
of the state $\rho$, satisfying the inequality \cite{Luo,Braunstein}
\begin{equation}
F[\rho, J]\leq 4\, \Delta^2_{\rho} J\ ,
\label{10.3}
\end{equation}
where $\Delta^2_{\rho} J\equiv\big[ \langle J^2\,\rangle-\langle J\,\rangle^2\big]$ 
is the variance of the operator $J$ in the state $\rho$, the equality holding
only for pure states.
In order to reach a better resolution in $\theta$-estimation 
one should obtain larger quantum Fisher information; thus,
for a given measuring apparatus, {\it i.e.} a given operator $J$, one can still optimize
the precision with which $\theta$ is determined by choosing an initial state $\rho$
that maximizes $F[\rho, J]$.

For measuring devices made of $N$ {\sl distinguishable} particles,
the following bound on the quantum Fisher information holds
for any separable state $\rho_{\rm sep}$ \cite{Smerzi}:
\begin{equation}
F\big[\rho_{\rm sep}, J\big]\leq N\ .
\label{10.4}
\end{equation}
In other terms, feeding the measuring apparatus with separable initial states, the best 
achievable precision in the determination of the phase shift $\theta$ is bounded
by the so-called shot-noise limit:
\begin{equation}
\delta\theta\geq{1\over\sqrt{N}}\ .
\label{10.5}
\end{equation}
This is also the best result
attainable using classical ({\it i.e.} non quantum) devices:
the accuracy in the estimation of $\theta$ scales at most with the inverse
square root of the number of available resources.
Instead, quantum equipped metrology allows reaching sub-shot-noise sensitivities
by using suitable detection protocols and entangled input states.

This conclusion holds when the metrological devices used to estimate the physical
parameter $\theta$ are based on systems of distinguishable particles. When dealing with
{\sl identical} particles, the above statement needs to be rephrased.
Indeed, both in the case of boson and fermion
systems it has been explicitly shown that sub shot-noise
sensitivities may be obtained also via a non-local
operation acting on separable input states \cite{Benatti1,Benatti6}. 
In other terms, although some sort of non-locality is needed in order to go below the
shot-noise limit, this can be provided by the measuring apparatus itself and
not necessarily by the input state $\rho$, that indeed can be separable. This result
has clearly direct experimental relevance, since the preparation of
suitably entangled input states may require in practice a large amount of resources.

When dealing with fermions, the situation appears more involved due to the
anticommuting character of the associated operator algebra $\cal A$.
Indeed, while in the case of bosons a two-mode apparatus ({\it e.g.} a standard
two-way interferometer) filled with $N$ particles is sufficient to reach sub-shot-noise
efficiencies, with fermions a multimode apparatus is needed in order to reach
comparable sensitivities \cite{Dariano1}-\cite{Cooper2}.

With Majorana fermions, things become even more complicated since the notion of 
mode occupation loses its meaning, as a number operator is no longer available.
It is the number $N$ of available Majorana modes that now quantifies the amount of resources
available for the process of parameter estimation and it is in reference to this number
that the shot-noise limit in (\ref{10.5}) should be considered. In other terms,
in dealing with systems of Majorana fermions, it is the mode structure of the 
measuring apparatus that becomes relevant.

As an example, consider again a system with an even number $N=2n$ of Majorana modes and choose for
it the balanced bipartition $({\cal A}_1, {\cal A}_2)$, with 
${\cal A}_1={\rm span}\big\{ {\bf 1},\, c_1,\, c_2\, \ldots, c_n\big\}$ and
${\cal A}_2={\rm span}\big\{ {\bf 1},\, c_{n+1},\, c_{n+2},\ldots, c_{2n}\big\}$.
As a generator of the transformations inside
the measuring apparatus, let us take the following operator:
\begin{equation}
J=i\sum_{k=1}^n \omega_k\, c_k c_{n+k}\ ,
\label{10.6}
\end{equation}
where $\omega_k$ is a given spectral function, {\it e.g.} $\omega_k\simeq k^p$,
with $p$ integer. The unitary transformation $U_\theta=e^{i\theta J}$ implementing
the state transformation inside the apparatus is clearly non-local,
since it can not be written as the product $\alpha^{(1)}\,\alpha^{(2)}$,
with $\alpha^{(1)}\in {\cal A}_1$ and $\alpha^{(2)}\in {\cal A}_2$.
It represents a sort of generalized multimode beam-splitter, so that the whole measuring device
behaves as a multimode interferometer.

Let us feed the interferometer with a pure initial state $\rho=|\psi\rangle\langle\psi|$,
coinciding with one of the basis elements $\big|\, f_{\bf i}^{\bf r}\,\big\rangle$
in (\ref{9.13})
carrying the irreducible representation $\pi_\Omega^{({\bf r})}$ of ${\cal C}_{2n}$
presented in the previous Section. For instance, choose:
\begin{equation}
|\psi\rangle=\frac{1}{2^n} (1+ic_1 c_2)\ldots (1+ic_{n-1} c_n)\, (1-ic_{n+1} c_{n+2})\ldots
(1-ic_{2n-1} c_{2n})\ |\Omega\rangle\ ;
\label{10.7}
\end{equation}
as discussed before, this state is separable with respect to the chosen bipartition.

The quantum Fisher information can be easily computed, since in this case it is proportional
to the variance of $J$ with respect to $|\psi\rangle$; assuming for simplicity $n$ even, one finds
\begin{equation}
F[\rho, J]=\sum_{k=1} ^{N/4} \big( \omega_{2k-1} + \omega_{2k} \big)^2\ ,
\label{10.8}
\end{equation}
which turns out to be larger than $N$. In particular, for $\omega_k\sim k$,
one finds that in the limit of large $N$, $F[\rho, J]$ behaves as $N^3/3$.
Therefore, also with Majorana fermions, a suitably devised interferometric apparatus can beat the
shot-noise limit in $\theta$ estimation even starting with a separable state as in (\ref{10.7}).
One can check that this sub-shot-noise gain in precision can be obtained also
using any other state vector belonging to the separable basis (\ref{9.11}), although 
it is for the state (\ref{10.7}) that the
value actually attained by the quantum Fisher information is maximal.

\medskip
\noindent
{\bf Remark:} 
Notice that in general the obtained value for $F[\rho, J]$ scales with a power of $N$
greater than two. When dealing with systems made of $N$ distinguishable particles,
the following general bound holds:
\begin{equation}
F[\rho, J]\leq N^2\ ,
\label{10.9}
\end{equation}
for any state $\rho$ and generator $J$, providing an absolute lower bound for the accuracy
in $\theta$ estimation called the Heisenberg limit: $\delta\theta\geq 1/N$. 
Instead, in the scenario described above, one is able to reach sub-Heisenberg sensitivities, another
advantage of using fermion systems.%
\footnote{The possibility of getting sensitivities beyond the Heisenberg limit
has been discussed before, using, however, non-linear metrology \cite{Luis}-\cite{Hall}.}
\hfill$\Box$

\medskip
Some sort of quantum non-locality is nevertheless needed in order to attain sub-shot-noise accuracies.
In order to appreciate this point, let us consider the same Majorana system as before, 
but use a different generator $\tilde J$, {\it i.e.} a different measuring apparatus,
where:
\begin{equation}
\tilde J=i\sum_{k=1}^n \tilde\omega_k\, c_{2k-1} c_{2k}\ ,
\label{10.10}
\end{equation}
with $\tilde\omega_k$ a given spectral density. The unitary transformation 
$\tilde U_\theta=e^{i\theta\tilde J}$ implementing state transformation inside the interferometer 
is now local with respect to the chosen bipartition, since it is the product of $n$ transformations depending on
couples of contiguous modes:
\begin{equation}
\tilde U_\theta=\prod_{k=1}^n e^{-\theta\, \tilde\omega_k c_{2k-1} c_{2k}} \ .
\label{10.11}
\end{equation}
If one feeds the apparatus with any vector belonging to the separable basis in (\ref{9.11}),
one does not obtain any advantage in parameter estimation accuracy with respect
to the shot-noise limit; actually, the quantum Fisher information vanishes.

However, using an entangled state as initial state, the situation changes. 
Indeed, let us consider the entangled state $|\phi\rangle$ in (\ref{9.17}) introduced
at the end of the previous Section. Although not a pure state for ${\cal C}_N$,
the corresponding quantum Fisher information, 
being representation independent, can be computed in the full Hilbert space ${\cal H}_\Omega$,
obtaining
\begin{equation}
F\big[|\phi\rangle,\, \tilde J\,\big]=\sum_{k=1} ^{n} \big(\tilde\omega_k\big)^2\ .
\label{10.12}
\end{equation}
For a spectral density of generic form $\tilde\omega_k\simeq k^p$,
one finds, for large $N$,  $F\big[|\phi\rangle,\, \tilde J\,\big]\simeq N^{2p+1}$,
obtaining again a sub-shot-noise accuracy for $\theta$ estimation; actually,
for $p\geq1$, the sensitivity in the determination of $\theta$ goes even beyond the
Heisenberg limit.

\section{Outlook}

Non-classical correlations are at the basis of most of the recent advances
in modern quantum physics, and in particular in quantum technology,
leading to the possibility of the realization of devices 
outperforming those presently available.
The characterization and quantification of these resources is therefore of utmost
importance, especially in many-body systems, since,
thanks to the recent advances in ultracold and superconducting physics, 
they are becoming the preferred laboratories for studying new quantum effects.

For systems made of identical constituents, the usual notions of separability
and entanglement need a revision, since the particle Hilbert space tensor structure
on which these concepts are based is no longer available due to particle indistinguishability.
The attention should then shift from the Hilbert space paradigm
to a new one, focusing on the system observables and the algebra they obey; quantum
non-separability can then be signaled by the behaviour of observable correlation
functions.

This change of perspective can be most simply formulated using
the algebraic approach to quantum physics. There, a quantum system is identified
by its operator algebra $\cal A$ containing all its observables, while the Hilbert space ${\cal H}_\Omega$
of its states is an emergent concept, determined by the choice of a state $\Omega$ on $\cal A$
through the so-called {\sl GNS construction}. The state $\Omega$, a positive, normalized linear form on $\cal A$, 
determines the expectation values of the observables, thus making the connection with measurable quantities.
It also provides an explicit representation $\pi_\Omega$ of $\cal A$,
so that the observables act as operators on ${\cal H}_\Omega$.
In this framework, the notion of locality is no longer
given {\it a priori}, once for all; rather, it is based on the choice of a bipartition 
(or more in general multipartition) of the algebra $\cal A$
into subalgebras  ${\cal A}_1$ and ${\cal A}_2$, such that
${\cal A}_1 \cup {\cal A}_2={\cal A}$ and ${\cal A}_1 \cap {\cal A}_2={\bf 1}_{\cal A}$.
An element of $\cal A$ is local if it is the product of an element of ${\cal A}_1$
times an element of ${\cal A}_2$.
A state $\Omega$ on $\cal A$ is then separable if its expectation on all local
operators can be written as a convex combinations of products of expectations.

This general definition of separability, previously studied in boson
or fermion settings, has been here applied to the study of systems made of Majorana fermions.
In view of the attention they are receiving in superconducting physics
and as possible building blocks in topological quantum computations,
Majorana excitations are becoming the focus of a rapidly increasing number of investigations:
studying their entanglement properties is therefore of
great relevance.

For Majorana systems, the operator algebra $\cal A$ containing all observables 
results a Clifford algebra $\cal C$. These algebras do not admit a Fock representation;
this implies that for such systems the notions of number operator 
and of mode occupation are no longer available. Furthermore, given a state
$\Omega$, the corresponding representation of $\cal C$ on the Hilbert space ${\cal H}_\Omega$ turns out
to be in general reducible: this makes the characterization of entangled states
much more involved than for bosons or ordinary fermions, since now $\Omega$
is no longer a pure state for the algebra $\cal C$.

The relation between quantum non-separability
and reducibility of the GNS-representation $\pi_\Omega$ 
has not been much studied in the literature.
Here instead, a general, detailed treatment of entanglement theory
in presence of reducible operator algebra representations has been given, 
and then applied to the
case of Clifford algebras for a specific, physically relevant
choice of the state $\Omega$. This has allowed obtaining
a rather complete characterization
of general entangled Majorana states; as an illustration,
the cases of systems containing just few Majorana modes
have been analyzed in great detail.
The whole treatment is very general and can be easily applied to
discuss different choices for $\Omega$.

Among promising quantum technological applications, quantum metrology is the natural
context in which the above results can be fruitfully employed.
Indeed, as discussed in the last Section, multimode Majorana quantum interferometers
can be used to improve the accuracy in the measurement of relevant physical parameters
much beyond the so called shot-noise limit, the best limit reachable by classical devices.
Some sort of quantum non-locality is clearly needed in order to reach 
these sub-classical sensitivities;
however, this need not be encoded in the initial state, it can be provided by the
interferometric apparatus itself. As a result, no preliminary, resource consuming,
entanglement operation on the state entering the apparatus is needed in order
to get sub-shot-noise accuracies. In this respect, Majorana fermion systems may turn out
to play a central role in the development of new generations of
quantum sensors capable of outperforming any available apparatus dedicated
to the detection of faint physical signals.

\vfill\eject

\vskip 2cm

\end{document}